\renewcommand{\algocf@captiontext}[2]{#1\algocf@typo. \AlCapFnt{}#2} 
\def\@algocf@capt@plain{top}
\renewcommand{\algocf@makecaption}[2]{%
	\addtolength{\hsize}{\algomargin}%
	\sbox\@tempboxa{\algocf@captiontext{#1}{#2}}%
	\ifdim\wd\@tempboxa >\hsize
	\hskip .5\algomargin%
	\parbox[t]{\hsize}{\algocf@captiontext{#1}{#2}}
	\else%
	\global\@minipagefalse%
	\hbox to\hsize{\box\@tempboxa}
	\fi%
	\addtolength{\hsize}{-\algomargin}%
}
\def\tr{\text{\rm tr}}
\def\T{{\mathrm{\scriptscriptstyle T} }}
\newlength\figureheight
\newlength\figurewidth
\newcommand{\argmin}{\operatornamewithlimits{argmin\,}}
\newcommand{\E}{\mathbb{E}}
\newcommand{\1}{\mathbb{I}}
\newcommand{\cut}{\mathrm{cut}}
\newcommand{\full}{\mathrm{full}}
\mathchardef\mhyphen="2D
\def \z{z_{1:n}}
\def \E{\mathbb{E}}
\def \dt {\mathrm{d}}
\newcommand{\I}{\mathcal{I}}
\newcommand{\M}{\mathcal{M}}
\newtheorem{assumption}{Assumption}
\newtheorem{remark}{Remark}
\newtheorem{theorem}{Theorem}
\newtheorem{corollary}{Corollary}
\newtheorem{lemma}{Lemma}
\begin{document}
	\def\spacingset#1{\renewcommand{\baselinestretch}%
		{#1}\small\normalsize} \spacingset{1}
	
	\title{Posterior risk of modular and semi-modular Bayesian inference}
	\date{\empty}
	\author[1]{David T. Frazier\thanks{Corresponding author:  david.frazier@monash.edu}}
	\author[2,3]{David J. Nott}
	\affil[1]{Department of Econometrics and Business Statistics, Monash University, Clayton VIC 3800, Australia}
	\affil[2]{Department of Statistics and Applied Probability, National University of Singapore, Singapore 117546}
	\affil[3]{Operations Research and Analytics Cluster, National University of Singapore, Singapore 119077}
	
	\maketitle
	\vspace{-2cm}
	
	\begin{abstract}
		Modular Bayesian methods perform inference in models
		that are specified through a collection of coupled sub-models, known as modules.  
		These modules often arise from modeling
		different data sources or from combining domain knowledge from
		different disciplines.   {\color{black}``Cutting feedback'' is a Bayesian inference method that ensures 
			misspecification of one module does not affect inferences for parameters in other modules, and produces what is known as the cut posterior.}
		However, choosing between the cut posterior and the standard Bayesian
		posterior is challenging. When misspecification
		is not severe, cutting feedback 
		can greatly increase posterior uncertainty without a large reduction of estimation bias, 
		leading to a bias-variance trade-off. This trade-off
		motivates semi-modular posteriors, which interpolate between standard and cut posteriors based on a tuning parameter.  In this work,   we provide the first precise formulation of the bias-variance
		trade-off that is present in {\color{black}cutting feedback, and  we propose a new semi-modular posterior that takes advantage of it. Under general regularity conditions, we prove that this semi-modular posterior is more accurate than the cut posterior according to a notion of posterior risk. An important implication of this result is that point inferences made under the cut posterior are \textit{inadmissable}. The new method is demonstrated in a number of examples.
			\vspace{.25cm}
		}
		
		\noindent \textbf{Keywords.} Bayesian modular inference, Cutting feedback, Model misspecification, posterior shrinkage
	\end{abstract}
	\spacingset{1.9} 
	
	
	\section{Introduction}
	
	In many applications, statistical models arise which can be viewed as a combination
	of coupled submodels (referred to as modules in the literature).  
	Such models are often complex, frequently containing 
	both shared and module-specific parameters, and module-specific
	data sources.  Examples include pharmacokinetic/phamacodynamic
	(PK/PD) models \citep{bennett+w01,lunn+bsgn09} which couple a PK module describing movement
	of a drug through the body with a PD module describing
	its biological effect, or models for health effects of air pollution \citep{blangiardo+hr11} 
	with separate modules for predicting pollutant concentrations and predicting health outcomes based on exposure.  See \cite{bayarri2009modularization}
	and \cite{jacob2017better} for further examples.
	
	In principle, Bayesian inference is attractive in modular
	settings  due to its ability to combine the different sources of information
	and update uncertainties about unknowns coherently conditional on all the available data.    
	However, it is well-known that Bayesian inference can be 
	unreliable when the model is misspecified 
	\citep{kleijn2012}.  For conventional Bayesian inference
	in multi-modular models, misspecification in one module can 
	adversely impact inferences about parameters in correctly specified modules.  
	``Cutting feedback" approaches modify Bayesian inference to address
	this issue.  They consider a sequential or conditional decomposition
	of the posterior distribution following the modular structure, and then modify
	certain terms so that unreliable information is isolated and cannot
	influence inferences of interest which may be sensitive to the misspecification.

	Cutting feedback 
	is only one technique belonging to a wider
	class of modular Bayesian inference methods \citep{bayarri2009modularization}.  
	Good introductions to the basic idea and applications of cutting
	feedback are given by \cite{lunn+bsgn09}, \cite{plummer2015cuts} and \cite{jacob2017better}.  Computational aspects of the approach are discussed in \cite{plummer2015cuts}, \cite{jacob2020unbiased}, 
	\cite{liu+g20}, \cite{yu2021variational} and \cite{carmona+n22}.  
	Most of the above references deal only with cutting feedback
	in a certain ``two module" system considered in \cite{plummer2015cuts}, which
	although simple is general enough to encompass many practical
	applications of cutting feedback methods.  We also consider this
	two module system throughout the rest of the paper.   Some recent progress
	in defining modules and cut posteriors in greater generality is
	reported in \cite{liu+g22}.

	A useful extension of cutting feedback is the semi-modular posterior (SMP) approach of \cite{carmona2020semi}, which avoids
	the binary decision of using either the cut or full posterior distribution, and can be viewed as a continuous interpolation between two  distributions. Further developments and applications are discussed 
	in \cite{liu+g21}, \cite{carmona+n22}, \cite{nicholls+lwc22} and \cite{frazier2022cutting}.  
	The  motivation for semi-modular inference is explained clearly in \cite{carmona+n22}: ``In Cut-model inference, feedback from 
	the suspect module is completely cut. However, [...] if the parameters 
	of a well-specified module
	are poorly informed by “local” information then limited information from misspecified modules may allow us to bring the uncertainty down 
	without introducing significant
	bias''.  {\color{black}The above quote nicely describes the intuition behind SMI.   However,  there is no formal treatment of the bias-variance trade-off that exists in SMI, nor is there any rigorous discussion as to how SMI could ``leverage'' such a trade-off in practice.}  
	
	{\color{black}Herein, we make three fundamental  contributions to the literature on cutting feedback and misspecified Bayesian models.  First, we formally demonstrate that when model misspecification is not too severe, cut posteriors can deliver inferences with smaller bias but more variability than standard Bayesian posteriors, which provides formal evidence for the bias-variance trade-off that motivates SMI and SMPs. Second, we use this result to develop a novel SMP that leverages this trade-off.  Lastly, using a notion that captures the risk of a posterior, we demonstrate that the proposed SMP is preferable to the cut, as well as the full posterior under certain conditions. More specifically, under this notion of risk we show that the cut posteriors produce point estimators that are \textit{inadmissible}, while, under additional conditions, the standard posterior is also shown to deliver \textit{inadmissible} point estimators. 
	}
	
	The remainder of the paper is organized as follows. In Section 2 we give the general framework and make rigorous conditions that are necessary for the existence of a bias-variance trade-off in modular inference problems.  In Section 3 we discuss semi-modular inference, and describe our semi-modular posterior approach. In this section, a simple
	example is presented in which the new method produces superior results to those based on the cut and full posterior. In Section 4 we prove, under `classical' regularity conditions, that our semi-modular posterior outperforms the standard and cut posteriors according to a notion of asymptotic risk for a posterior. Section 5 gives two empirical examples, and Section 6 concludes.  Proofs of all stated results are given in the supplementary material.

	\section{Setup and Discussion of Cut posterior}\label{sec2}
	
	Our first contribution is to formalize the potential benefits of using semi-modular inference methods  (\citealp{carmona2020semi}) in misspecified models.  The semi-modular inference approach was originally introduced by \cite{carmona2020semi} using a two-module system discussed by \cite{plummer2015cuts}. In our current work, we focus on the same two-module system.  It is important to describe our motivation for this choice, in the context of previous work on Bayesian modular inference.
	
	One method for defining cutting feedback methods in misspecified models with more than two modules uses an implicit approach by modifying
	sampling steps in Markov chain Monte Carlo (MCMC) algorithms.  
	The {\texttt cut} function in the WinBUGS and OpenBUGS software packages 
	implements this in a modified Gibbs sampling approach.  
	See \cite{lunn+bsgn09} for a detailed description.  However, 
	defining a cut posterior in terms of an algorithm, while quite general, does not allow us to easily understand the implications of cutting feedback, or the general structure of the posterior, due to the implicit
	nature of such a definition.  To give a better understanding, \cite{plummer2015cuts} 
	considered a two-module system where the cut posterior can be defined
	explicitly.  In many models where cut methods are used, there might be one
	model component of particular concern, and a definition of the modules in 
	a two-module system can often be made based on this.  
	Many applications of cutting feedback use such a two-module system.  
	Two module systems also play an important
	role in the recent attempt by \cite{liu+g22} to explicitly 
	define multi-modular systems
	and cutting feedback generally, where existing modules
	can be split into two recursively based on partitioning of the data and using the graphical structure of the model.  
	We now describe the two-module system precisely and describe cutting
	feedback methods for it, before giving a motivating example.  
	
	\subsection{Setup}
	We observe a sequence of data $z_{1:n}=(z_1,\ldots,z_n)$, $z_i \in\mathcal
	{Z}$ for each $i$.  The observations $z_{1:n}$ are considered an observation of
	a random vector $Z$, 
	and wish to conduct Bayesian inference on the unknown parameters $\theta$ 
	in the assumed joint likelihood $f(z_{1:n}\mid\theta)$, where $\theta= (\theta_1^\top,\theta_2^\top)^\top$, $\theta\in\Theta_1\times\Theta_2
	\subseteq\mathbb{R}^{d}$, and $\Theta_1\subseteq \mathbb{R}^{d_1}$, 
	$\Theta_2\subseteq \mathbb{R}^{d_2}$. Our prior beliefs over $\Theta$ are expressed via a prior density $\pi(\theta)=\pi(\theta_1)\pi(\theta_2\mid\theta_1)$. 
	
	In modular Bayesian inference, the joint likelihood $f(z_{1:n}\mid\theta)$ can
	often be expressed as a product, with terms 
	for data sources from different modules. The simplest case is the two module system described in \cite{plummer2015cuts}, where the random variables are $Z=(X,Y)$.  The first module consists of a likelihood term depending on $X$ and $\theta_1$, given by $f_1(\cdot\mid\theta_1)$, 
	and the prior $\pi(\theta_1)$.  The second module consists of a likelihood term depending on $Y$ and $\theta_1,\theta_2$, given by $f_2(\cdot\mid\theta_1,\theta_2)$, and the conditional prior $\pi(\theta_2\mid\theta_1)$.  
	An example of such a model is given below.
	A model with this structure leads to the following \textit{full posterior}
	\begin{flalign}\label{eq:exactpost}
		\pi_{\full}(\theta_1, \theta_2 \mid z_{1:n})&=\pi(\theta_1 \mid x_{1:n}, y_{1:n}) \pi(\theta_2 \mid y_{1:n}, \theta_1),
	\end{flalign}
	in which the conditional posterior for $\theta_2$ given $\theta_1$ does not
	depend on the observations for the random variable $X$, denoted as $x_{1:n}=(x_1,\dots,x_n)$.  
	The parameter $\theta_1$ is shared between the two modules, 
	while $\theta_2$ is specific to the second module.  
	
	While the full posterior in \eqref{eq:exactpost} delivers reliable inference when the model is correctly specified, in the presence of  model misspecification  Bayesian inference can sometimes be unreliable and not ``fit for purpose''; see, e.g., \citet{grunwald2017inconsistency} for examples in the case of linear models, and \cite{kleijn2012} for a general discussion. Following the literature on cutting feedback methods, we restrict our attention  to settings where misspecification is confined to the second module, while specification of the first module is not impacted.  Because the parameter $\theta_1$ is shared
	between modules, inference about this parameter can be corrupted by
	misspecification of the second module.  This can also impact the interpretation
	of inference about $\theta_2$, which can be of interest even if the second
	module is misspecified provided this is done conditionally on values of $\theta_1$
	consistent with the interpretation of this parameter in the first module.  
	
	Rather than use the posterior \eqref{eq:exactpost}, in possibly misspecified models it has been argued that we can cut the link between the modules to produce more reliable inferences for $\theta_1$. This is the idea of cutting feedback; see Figure \ref{two-module} for a graphical depiction of this ``cutting'' mechanism where,
	for simplicity, we assume $\pi(\theta_2\mid\theta_1)$ does not depend on $\theta_1$. In the case of the two module system, the cut (indicated by the vertical dotted line in Figure \ref{two-module}) severs the feedback between the modules and allows us to carry out inferences for $\theta_1$ based on module one, using the likelihood $f_1(\cdot\mid\theta_1)$, and then inference for $\theta_2$ can be carried out conditional on $\theta_1$. In the case of Bayesian inference, this philosophy has led researchers to conduct inference using the cut posterior distribution (see, \citealp{plummer2015cuts}, \citealp{jacob2017better}). 
	
	\begin{figure}[h]
		\centering{\includegraphics[width=35mm, height=30mm]{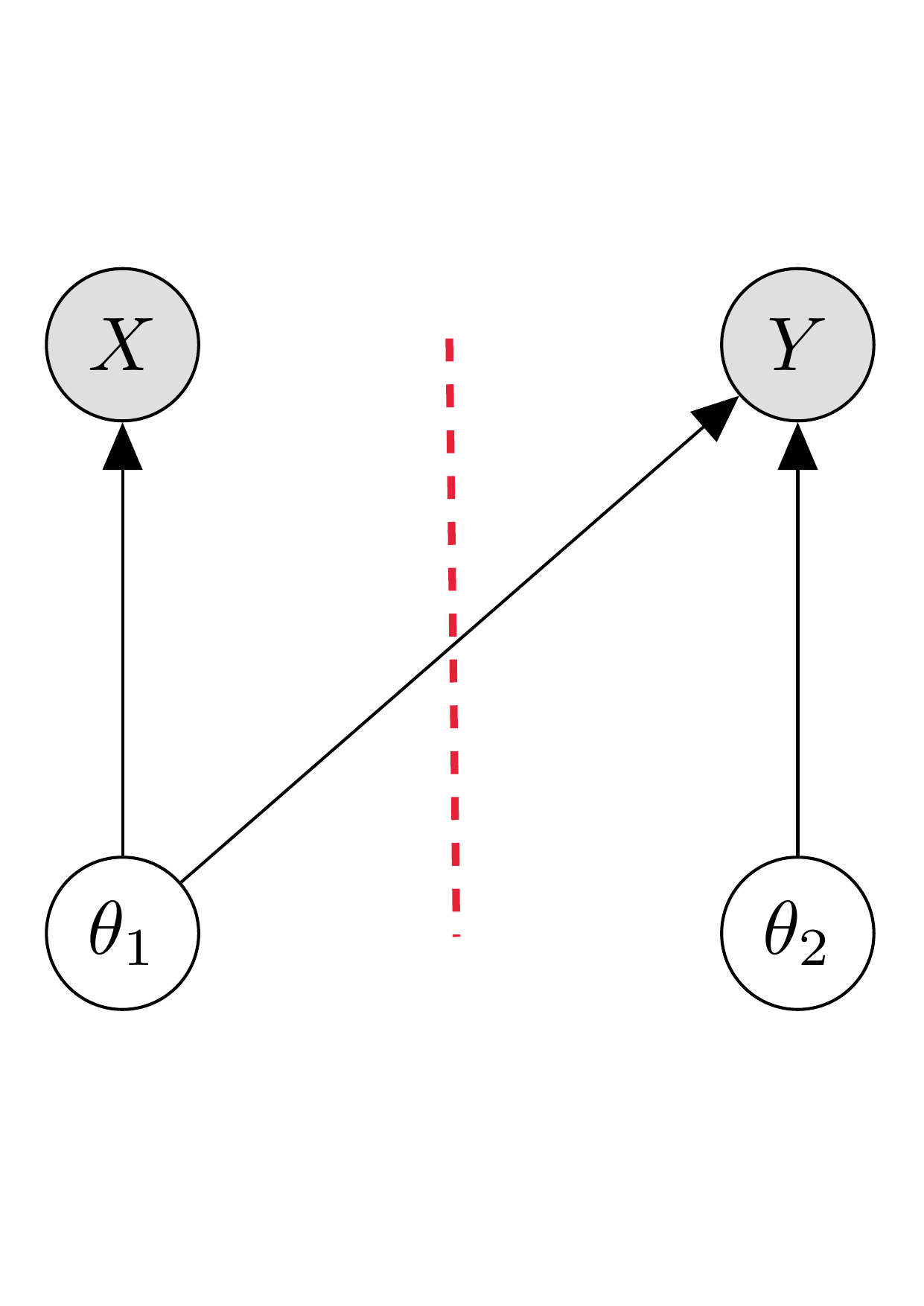}}
		\captionsetup{width=.99\linewidth}
		\caption{Graphical structure of the two-module system.  The dashed line indicates the cut.  Shaded nodes are observed quantities.}\label{two-module}
	\end{figure}
	
	As shown in \cite{carmona2020semi} and \cite{nicholls+lwc22}, the cut posterior is a ``generalized'' posterior distribution (see, e.g., \citealp{bissiri2016general}) that restricts the information flow to guard against model misspecification (\citealp{frazier2022cutting}). In the canonical two module system, the \textit{cut posterior} takes the form 
	\begin{equation*}
		\pi_{\cut}(\theta \mid z_{1:n}):=\pi_{\cut}(\theta_1 \mid x_{1:n}) \pi(\theta_2 \mid y_{1:n}, \theta_1), \quad \pi_\cut(\theta_1\mid x_{1:n})\propto f_1(x_{1:n}\mid\theta_1)\pi(\theta_1).
	\end{equation*}The common argument given for the use of $\pi_{\cut}(\theta\mid z_{1:n})$ instead of $\pi(\theta\mid z_{1:n})$ is the assumption that misspecification adversely impacts inferences for $\theta_1$; e.g., \cite{bayarri2009modularization}, and \cite{jacob2017better}. The cut posterior uses only information from the data $x_{1:n}$ in making inference about $\theta_1$, ensuring inference is insensitive to misspecification of the model for $y_{1:n}$.  However, uncertainty about $\theta_1$ can still be propagated through for inference on $\theta_2$ via the conditional posterior $\pi(\theta_2\mid\theta_1,y_{1:n})$. 
	
	\subsubsection*{Motivating example:  HPV prevalence and cervical cancer incidence}
	
	We now discuss a simple example described in \cite{plummer2015cuts} that
	illustrates some of the benefits of cut model inference.  
	The example,  which is discussed further in the supplementary material, 
	is based on data from a real epidemiological study 
	\citep{maucort2008international}.  Of interest is the international 
	correlation between
	high-risk human papillomavirus (HPV) prevalence and cervical cancer incidence, for women in a certain age group.  There are two data sources.
	The first is survey data on HPV prevalence for 13 countries.  There are 
	$X_i$ women with high-risk HPV in a sample of size $n_i$ for country $i$, 
	$i=1,\dots, 13$.  There is also data on cervical cancer incidence, with $Y_i$
	cases in country $i$ in $T_i$ woman years of follow-up.  The data are modelled as
	$$X_i\sim \text{Binomial}(n_i,\theta_{1,i}),\;\;\;i=1,\dots, 13,$$
	$$Y_i\sim \text{Poisson}(\lambda_i),\;\;\;\log\lambda_i=\log T_i+\theta_{2,1}+\theta_{2,2}\theta_{1,i}.$$
	The prior for $\theta_1=(\theta_{1,1},\dots, \theta_{1,13})^\top$
	assumes
	independent components with uniform marginals on $[0,1]$.  
	The prior for $\theta_2=(\theta_{2,1},\theta_{2,2})^\top$,  assumes independent
	normal components, $N(0,1000)$.    
	
	Module 1 consists of $\pi(\theta_1)$ and $f(X\mid\theta_1)$ (survey data module) and module 2 consists of $\pi(\theta_2)$ and $f(Y\mid\theta_2,\theta_1)$ (cancer incidence module).  The Poisson regression model in the second module is misspecified.  Because of the coupling of the survey and cancer incidence modules, with the HPV prevalence values $\theta_{1,i}$ appearing as covariates
	in the Poisson regression for cancer incidence, the cancer incidence module
	contributes misleading information about the HPV prevalence parameters.  
	The cut posterior estimates these parameters based on the survey data only,
	preventing contamination of the estimates by the misspecified module.  
	This in turn results in more interpretable estimates of the parameter $\theta_2$
	in the misspecified module, since $\theta_2$ summarizes the relationship
	between HPV prevalence and cancer incidence, but the 
	summary produced can only be useful
	when the inputs to the regression (i.e. the HPV prevalence covariate values) are properly estimated.  
	
	Figure \ref{hpv-figure1} shows the marginal posterior distribution for $\theta_2$
	for both the full and cut posterior distribution, which are very different, illustrating how the misspecification of the cancer incidence module distorts inference about HPV prevalence in the full posterior, resulting in uninterpretable
	estimation for $\theta_2$.   Further discussion of this example in the supplement shows 
	the benefits of a semi-modular inference approach in which 
	the tuning parameter $\omega$ interpolating between the cut and full posterior
	can be chosen based on a user-defined loss function reflecting the purpose
	of the analysis.   A more complex real example is considered in Section 5.
	\begin{figure}[h]
		\centerline{\includegraphics[width=80mm]{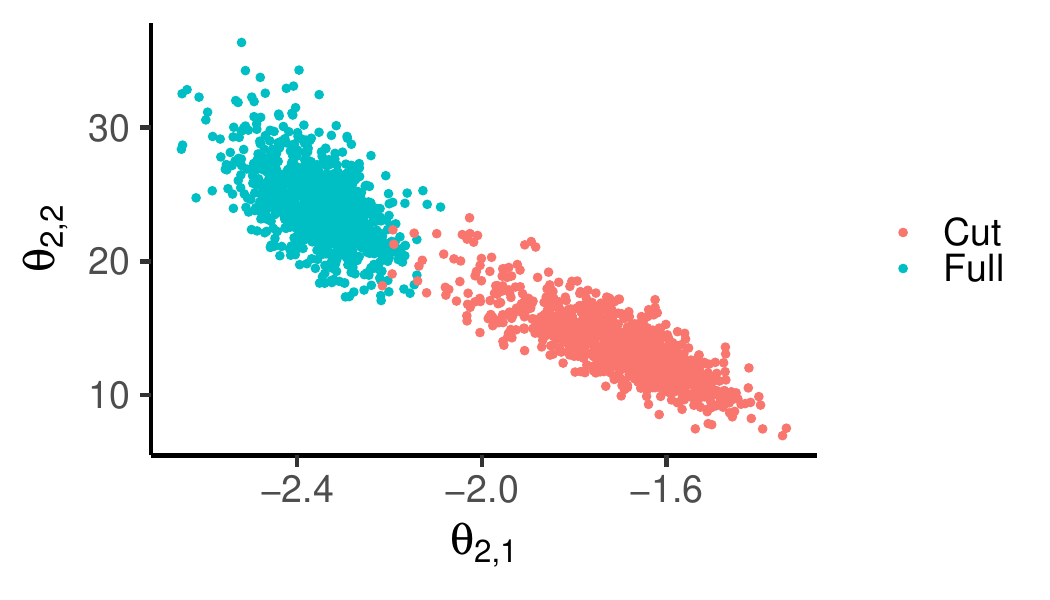}}
		\captionsetup{width=.99\linewidth}
		\caption{\label{hpv-figure1}Marginal full and cut posterior distributions
			for $\theta_2$ in the HPV example.}
	\end{figure}
	
	\subsection{The Impact of Misspecification in Modular Inference}\label{sec:formal}
	
	In the remainder we consider a generalization of the canonical two-module system by assuming that the joint likelihood takes the form 
	$$f(z_{1:n}\mid\theta)= f_1(z_{1:n}\mid\theta_1)f_2(z_{1:n}\mid\theta_1,\theta_2),$$where the individual models may or may not depend on the entire dataset $z_{1:n}$.  In 
	the expressions above, $f_1(z_{1:n}\mid\theta_1)$ and $f_2(z_{1:n}\mid \theta_1,\theta_2)$ need not be densities as functions in $z_{1:n}$ so long as $f(\z\mid\theta)$ itself is a density; the terms $f_1$ and $f_2$ 
	describe a decomposition of the likelihood having the dependence
	indicated by the arguments. {\color{black}The goal of modular Bayesian inference is to conduct inference on $\theta=(\theta_1^\top,\theta_2^\top)^\top$ in the specific setting where inference for $\theta_1$ based only on $f_1(z_{1:n}\mid\theta_1)$ is reasonable, but where the module $f_2(z_{1:n}\mid\theta_1,\theta_2)$ contains additional useful information on $\theta_1$; see the regularity conditions in Appendix \ref{app:A} for specific details. Modular inference on $\theta$ can be carried out using the cut posterior}
	\begin{equation}\label{eq:cutpost}
		\pi_{\cut}(\theta \mid z_{1:n}):=\pi_\cut(\theta_1 \mid z_{1:n}) \pi(\theta_2 \mid z_{1:n}, \theta_1),\; \pi_\cut(\theta_1\mid z_{1:n})\propto { f_1(z_{1:n}\mid\theta_1)\pi(\theta_1)}.
	\end{equation}
	The cut posterior is beneficial when misspecification of
	$f_2(\cdot\mid\theta_1,\theta_2)$ makes full Bayesian inference on $\theta_1$ less accurate compared to Bayesian inference using only $f_1(\cdot\mid\theta_1)$.  
	The benefits of cutting feedback can be formalized by assuming that misspecification
	is limited to $f_2(\cdot\mid\theta_1,\theta_2)$ {\color{black} and that the true data generating 
		process (DGP) has a density of the form}\footnote{Strictly speaking, this analysis extends beyond density functions but we maintain the use of densities to simplify our discussion.}
	$$h(z\mid\theta_{1,0},\delta_0)=f_1(z\mid\theta_{1,0})\delta_0(z),$$ for some unknown component $\delta_0(z)$ that controls the amount of model misspecification. The form of $h(z\mid\theta_{1,0},\delta_0)$ allows us to capture both gross model misspecification, and situations where misspecification is ambiguous. We first restrict our attention to gross model misspecification by imposing the following assumption.

	\begin{assumption}[Gross Misspecification]\label{ass:missgross}
		The observed data $\{z_{i}:i\ge1,n\ge1\}$  is independent and identically distributed according to $h(z\mid\theta_{1,0},\delta_0)=f_1(z\mid\theta_{1,0})\delta_0(z)$. For some $\widetilde{\mathcal{Z}}\subseteq\mathcal{Z}$, with $\int_{\widetilde{\mathcal{Z}}} h(z\mid\theta_{1,0},\delta_0)\dt z>0$, and all $z\in\mathcal{Z}'$,  $\delta_0(z)\ne f_2(z\mid\theta_{1,0},\theta_2)$ for any $\theta_2\in\Theta_2$. There exist $\theta_\star\in\mathrm{Int}(\Theta)$ that minimizes $\theta\mapsto \mathrm{KL}\{h(z\mid\theta_{1,0},\delta_0)\|f(z\mid\theta)\}$, the Kullback–Liebler divergence of $f(z\mid\theta)$ over $\Theta$ with respect to $h(z\mid\theta_{1,0},\delta_0)$. 
	\end{assumption}
	
	Under Assumption \ref{ass:missgross} and regularity conditions, the cut posterior $\pi_{\cut}(\theta_1 \mid z_{1:n})$ can be shown to deliver accurate inferences for $\theta_{1,0}$, while the full posterior $\pi_{\full}(\theta_1 \mid z_{1:n})$ does not. To state this result, denote the cut posterior distribution for $\theta_1$ as  $\Pi_{\cut}(\theta_1\in\cdot\mid z_{1:n})$, and let $\Pi_{\mathrm{full}}(\theta_1\in\cdot\mid z_{1:n})$ denote the full posterior distribution for $\theta_1$. For any $\varepsilon>0$, define $\Theta_{1}(\varepsilon):=\{\theta_1\in\Theta_1: \|\theta_1-\theta_{1,0}\|\le\varepsilon\}$. For $P_0^{(n)}$ the true data generating process, we say that $X_n\xrightarrow{P} a$ (or $X_n=a+o_p(1)$) if for all $\varepsilon>0$, $ P_0^{(n)}(\|X_n-a\|\ge \varepsilon)=o(1)$ as $n\rightarrow\infty$. 
	
	\begin{lemma}\label{lemma:gross_miss} If Assumption \ref{ass:missgross} and the regularity conditions in Appendix \ref{app:G} are satisfied, then for any $\varepsilon>0$, 
		$
		\Pi_{\cut}\left\{\theta_1\in \Theta_{1}(\varepsilon)\mid z_{1:n}\right\}\xrightarrow{P}1$. For $\varepsilon>0$ such that $\theta_{1,\star}\notin \Theta_{1}(\varepsilon)$, $ \Pi_{\mathrm{full}}\left\{\theta_1\in \Theta_{1}(\varepsilon)\mid z_{1:n}\right\}\xrightarrow{P}0$.
	\end{lemma}
	{{\color{black}Assumption \ref{ass:missgross} embodies cases where a practitioner can confidently determine that the second module is misspecified.  
			Lemma \ref{lemma:gross_miss} then shows that, in these cases, the cut posterior concentrates on the true value $\theta_{1,0}$, while the full posterior does not. By restricting the flow of information across modules, cutting feedback produces inferences for $\theta_{1,0}$  that are more accurate than full Bayesian inference.  Although Lemma \ref{ass:missgross} implies that cut posterior inference is more accurate than full posterior inference when the second module is grossly misspecified, it is of limited use when $\pi_{\cut}(\theta_1 \mid z_{1:n})$ and $\pi_{\full}(\theta_1 \mid z_{1:n})$ are similarly located, which can occur when misspecification of the second module is not severe. In such cases, the full posterior will have less uncertainty than the cut posterior, while also being similarly located, making it unclear whether to prefer the cut or full posterior.}}
	
	To capture the empirically relevant setting where, for any fixed $n$, it is unclear which posterior to use, we investigate the behavior of these posteriors under a certain class of locally misspecified DGPs. Following the literature of robust asymptotic statistical analysis (see, e.g., Ch 3. \citealp{rieder2012robust}), we approximate the empirical situation where neither method is clearly preferable using a local perturbation about the assumed model:
	\begin{flalign}\label{eq:localpert}
		h(z\mid\theta_0,\delta_n):=f_1(z\mid\theta_{1,0})\{1+\psi^\top\zeta(z)/\sqrt{n}\}f_2(z\mid\theta_0),\quad \delta_n=\psi^\top\zeta/\sqrt{n},
	\end{flalign}which depends on  a (random) direction of misspecification $\zeta\in\mathbb{R}^{d}$, and magnitude $\psi$ that takes values in $\Delta\subset\mathbb{R}^d$. Under the local perturbation framework in \eqref{eq:localpert} the cut and full posteriors have similar locations in $\Theta$ for small-to-moderate sample sizes, and conventional specification testing methods cannot consistently detect which method delivers more accurate inferences for $\theta_{0}$. As such, this class of perturbations allows us to compare cut and full posterior inference when correct specification of the second module is ambiguous.

	\begin{assumption}[Local Misspecification]\label{ass:miss}
		The triangular array $\{z_{i,n}:1\le i\le n, n\ge1\}$ is independent and identically distributed according to $h(z\mid \theta_{0},\delta_n)$ in \eqref{eq:localpert} for fixed $n$. For $\psi\in\Delta\subset\mathbb{R}^{d}$, $\Delta$ compact, $\zeta(z)$ satisfies:
		(i) $
		\int_{\mathcal{Z}} \frac{\partial \log f_1(z\mid\theta_{1,0})}{\partial\theta_1}\zeta(z)^\top\psi f_1(z\mid\theta_{1,0})f_2(z\mid\theta_{0})\dt\mu(z)=0$; (ii) For 
		$\eta=(\eta_1^\top,\eta_2^\top)^\top$ partitioned conformably to $\theta=(\theta_1^\top,\theta_2^\top)^\top$,   $\eta=\int_{\mathcal{Z}} \frac{\partial \log f(z\mid\theta_{0})}{\partial\theta}\zeta(z)^\top\psi f_1(z\mid\theta_{1,0})f_2(z\mid\theta_{0}) \dt\mu(z)$, and $0\le\|\eta\|<\infty$.
	\end{assumption}
	\begin{remark}\normalfont
		Assumption \ref{ass:miss} is a device that will allow us to rigorously compare the cut and full posteriors when neither is clearly preferable.   The local misspecification framework ensures that, for any $n$, there remains some ambiguity about model misspecification—and thus which posterior to prefer. It provides an appropriate theoretical framework for analyzing cut and full posteriors when the choice between them is unclear. Assumption \ref{ass:miss} further maintains that the direction of misspecification, $\zeta(z)$, does not adversely affect the location of the cut posterior for $\theta_1$ but will impact the full posterior.  
	\end{remark}
	
	\begin{remark}\normalfont
		Assumption \ref{ass:miss} resembles, but is distinct from, the misspecification device employed in \cite{hjort2003frequentist} and \cite{claeskens2003focused} to construct methods that combine and choose between different frequentist point estimators. The approach outlined in Section 8 of \cite{claeskens2003focused} is not appropriate here since their framework can produce cut posterior inferences for $\theta_{1,0}$ that are less accurate than the full posterior, which contradicts the underling reasoning for using cut posteriors. The misspecification framework in Assumption \ref{ass:miss} also differs from the designs in \cite{pompe2021asymptotics}, and \cite{frazier2022cutting}, which are similar to Assumption \ref{ass:missgross}, and ensure that - with probability converging to one  - the researcher knows the model is misspecified. 
		
	\end{remark}

	{\color{black}
		Under Assumption \ref{ass:miss}, cut posterior inference for $\theta_1$ is not impacted by misspecification of the second module. To show this, we require further notation. First, note that 
		$$
		\ell(\theta)=\log f(\z\mid\theta)=\log f_1(\z\mid\theta_1)+\log f_2(\z\mid\theta)\equiv \ell_p(\theta_1)+\ell_c(\theta_1,\theta_2), 
		$$where $\ell_p(\theta_1)$ denotes the partial log-likelihood for $\theta_1$ used in the cut posterior, and $\ell_c(\theta)$ denotes the log-likelihood used in the conditional posterior.  Denote the derivative of the full log-likelihood as $\dot\ell(\theta):=\partial\ell(\theta)/\partial\theta$, and the second derivative by $\ddot\ell(\theta):=\partial^2\ell(\theta)/\partial\theta\partial\theta^\top$. For $j,k\in\{1,2\}$, define the first and second partial derivatives as $\dot\ell_{(j)}(\theta):=\partial \ell(\theta)/\partial\theta_j$, and $\ddot\ell_{(ij)}(\theta):=\partial^2 \ell(\theta)/\partial\theta_j\partial\theta_i^\top$. Similar notations will be use to denote derivatives of $\ell_{u}(\theta)$, for $u\in\{c,p\}$. Let $\E_{n}[g(z)]$ denote the expectation of $g:\mathcal{Z}\rightarrow \mathbb{R}^{d}$, under $h(z\mid \theta_0,\delta_n)$ in Assumption \ref{ass:miss}, and define the following information matrices: $\mathcal{I}_{}:=-\lim_nn^{-1}\E_{n}[\ddot\ell_{}(\theta_0)]$, $\mathcal{I}_{jk}:=-\lim_nn^{-1}\E_{n}[\ddot\ell_{(jk)}(\theta_0)]$, and $\mathcal{I}_{p(11)}:=-\lim_nn^{-1}\E_{n}[\ddot\ell_{p}(\theta_{1,0})]$. Let $\overline\theta_\cut:=\int_{\Theta}\theta\pi_\cut(\theta\mid\z)\dt\theta$,  $\overline\theta_\full=\int_{\Theta}\theta\pi_\full(\theta\mid\z)\dt\theta$, and let $\Rightarrow$ denote weak convergence under $P^{(n)}_0$. 
	}

	\begin{lemma}\label{lem:two}Under Assumption \ref{ass:miss}, and the regularity conditions in Appendix \ref{app:A}: 
		
		\noindent	(i)  $\sqrt{n}(\overline{\theta}_{\full}-\theta_0)\Rightarrow N(\mathcal{I}_{}^{-1}\eta,\mathcal{I}^{-1})$.
		
		\noindent	(ii)  $\sqrt{n}(\overline{\theta}_{1,\cut}-\theta_{1,0})\Rightarrow N(0,\mathcal{I}^{-1}_{p(11)})$.
		
		\noindent	(iii)  $\sqrt{n}(\overline{\theta}_{2,\cut}-\theta_{1,0})\Rightarrow N(\I_{22}^{-1}\eta_2,\mathcal{I}^{-1}_{22}+\I_{22}^{-1}\I_{21}\I_{p(11)}^{-1}\I_{12}\I_{22}^{-1})$.     
		
		\noindent (iv) Only credible sets calculated under $\pi_{\cut}(\theta_1\mid\z)$ \textbf{have valid frequentist coverage.}    
		
		\noindent	(v) The squared bias for $\theta_{1,0}$ and $\theta_{2,0}$ under $\pi_{\cut}(\theta\mid\z)$ is smaller than $\pi_{\mathrm{full}}(\theta \mid\z)$. 
	\end{lemma}	
	\begin{remark}\label{rem:cover}
		{\color{black}\normalfont
			Lemma \ref{lem:two} shows that inferences for $\theta_1$ using the cut posterior have no asymptotic bias, whereas the full posterior for $\theta_1$ has asymptotic bias and both posteriors have different biases for $\theta_2$.  The presence of asymptotic bias implies that, if $\eta\ne0$, \textit{only credible sets for $\pi_\cut(\theta_1\mid\z)$ correctly quantify uncertainty, i.e., only $\pi_\cut(\theta_1\mid\z)$ has calibrated credible sets.}  Since the bias due to misspecification, $\eta$, is unknown, it does seem feasible to determine whether $\pi_\cut(\theta\mid\z)$ or $\pi_\full(\theta\mid\z)$ more reliably quantifies uncertainty in general, except when $\eta=0$ where both methods accurately quantify uncertainty. }		
	\end{remark}
	
	{\color{black}Lemma \ref{lem:two} implies that the user is faced with a trade-off between conducting inference using the cut or full posterior:  inferences under $\pi_\full(\theta\mid\z)$ have the smallest variability possible (via Cramer-Rao) but exhibit a bias of unknown magnitude, whereas inferences under $\pi_\cut(\theta\mid\z)$ are guaranteed to have smaller bias than those under $\pi_\full(\theta\mid\z)$ but have (weakly) larger variability. Lemma \ref{lem:two} is the first result to formally show that a bias-variance trade-off exists between the cut and full posteriors. Since the bias due to misspecification ($\eta$) is unobservable, Lemma \ref{lem:two} exemplifies the situation where it is ambiguous as to whether we should base inferences on the posterior that exhibits low bias - the cut posterior - or the posterior that exhibits larger bias but which has much smaller variability - the full posterior.}

\section{Semi-Modular Inference}
{\color{black} Lemma \ref{lem:two} suggest that is we consider the accuracy of posteriors using a criteria that measures both the bias and variance of posterior inference, it should be possible to combine the cut and full posteriors to deliver inferences that are more accurate than using either by themselves. The goal of semi-modular inference (SMI) is to interpolate between the cut and full posteriors to reduce the uncertainty about $\theta$ while maintaining a tolerable level of bias. However, there are many ways to interpolate between two probability distributions, and there is no {\it a priori} reason to suspect one method of interpolation will deliver better results than others; see \cite{nicholls+lwc22} for a discussion on some of the possibilities. 
	
	Following \cite{chakraborty2022modularized}  and \cite{yu2021variational}, we focus on conducting SMI using linear opinion pools  (\citealp{stone1961opinion}), which produces a  semi-modular posterior (SMP) that is a convex combination between the cut and full posteriors: for $\omega\in[0,1]$, 
	\begin{flalign}
		\pi_\omega(\theta\mid z_{1:n})&:=\pi_\omega(\theta_1\mid z_{1:n})\pi(\theta_2\mid z_{1:n},\theta_1),\label{eq:pool}\\ \pi_\omega(\theta_1\mid z_{1:n})&:=(1-\omega)\pi_\cut(\theta_1\mid z_{1:n})+\omega \pi(\theta_1\mid z_{1:n}).\nonumber
	\end{flalign} 
	The pooling weight $\omega\in[0,1]$ in the SMP determines the level of interpolation between the posteriors, and  \cite{chakraborty2022modularized} suggest choosing $\omega$ using  prior-predictive conflict checks, while \cite{carmona+n22} propose to use out-of-sample predictive methods to select $\omega$. In contrast, we propose a novel choice of pooling weight that leverages the bias-variance-trade-off between cut and full posterior inferences. 
}

\subsection{Shrinkage-based semi-modular posteriors}\label{sec:gsmp}
{\color{black}To build intuition as to how  $\pi_\omega(\theta\mid z_{1:n})$ can utilize the bias-variance trade-off between $\pi_\cut$ and $\pi_\full$, we first focus on  the behavior of the SMP for $\theta_1$, i.e., $\pi_\omega(\theta_1\mid z_{1:n})$ in \eqref{eq:pool}, and analyze the behavior of $\pi_\omega(\theta\mid\z)$ in subsequent sections.\footnote{Differences between the cut, full, and SMP posteriors for $\theta_2$ are attributable to differences in the posterior for $\theta_1\mid\z$ since each posterior shares the same conditional posterior for $\theta_2$ given $\theta_1$.} Focusing on $\pi_\omega(\theta_1\mid z_{1:n})$ in \eqref{eq:pool}, note that the SMP point estimator for $\theta_1$ is
	\begin{flalign}\label{eq:pool_mean}
		\overline\theta_1(\omega)&:=(1-\omega)\overline\theta_{1,\cut}+\omega\overline\theta_{1,\mathrm{full}}\equiv\overline\theta_{1,\cut}-\omega(\overline\theta_{1,\cut}-\overline\theta_{1,\mathrm{full}}),
	\end{flalign}where $\overline\theta_{1,\cut}:=\int_{\Theta_1}\theta_1\pi_\cut(\theta_1\mid z_{1:n})\dt\theta_1$, and $\overline\theta_{1,\mathrm{full}}:=\int_{\Theta}\theta_1\pi_{\mathrm{full}}(\theta_1,\theta_2\mid z_{1:n})\dt\theta$. {\color{black}From Lemma \ref{lem:two}, the asymptotic mean of $\sqrt{n}(\overline\theta_{1,\cut}-\theta_{1,0})$ is zero, while that of $\sqrt{n}(\overline\theta_{1,\mathrm{full}}-\theta_{1,0})$ depends on the misspecification bias $\eta$. Hence, the SMP  posterior mean $\overline{\theta}_1(\omega)$ combines an asymptotically unbiased estimator with high variance,   $\overline\theta_{1,\cut}$,  and an asymptotically biased estimator with small variance, $\overline\theta_{1,\mathrm{full}}$.}	Therefore, if we are willing to tolerate some bias it will be possible to choose $\omega$ to deliver inferences on $\theta_{1,0}$ that are more accurate than either the cut or full posterior alone, at least so long as our measure of ``accuracy'' accounts for both bias and variance. 
	
	The idea of combining biased and unbiased estimators has a long history in statistics, with the most commonly encountered estimators of this kind being shrinkage and James-Stein estimators, see Chapter 5 of \cite{lehmann2006theory} for a textbook discussion. Indeed, the form of  $\overline{\theta}_1(\omega)$ in \eqref{eq:pool_mean} is reminiscent of those encountered in the shrinkage literature.  For two normally distributed estimators, one biased and the other unbiased, \cite{green1991james} demonstrated how to optimally combine such estimators to deliver a shrinkage estimator that is optimal in terms of expected squared error loss. The approach of \cite{green1991james} was extended to more general settings and loss functions by \cite{kim2001james}, and \cite{judge2004semiparametric}. 
	
	While  $\overline{\theta}_{1,\cut}$ and $\overline{\theta}_{1,\mathrm{full}} $ are not normally distributed, they are asymptotically normal, and so one could choose $\omega$ in the SMP using existing shrinkage estimation proposals. Following ideas similar to \cite{green1991james},  and \cite{kim2001james}, we could choose $\omega$ in \eqref{eq:pool} as
	\begin{equation}
		\widetilde\omega_+=\min\{1,\widetilde\omega\},\quad\frac{\gamma}{n(\overline\theta_{1,\cut}-\overline\theta_{1,\mathrm{full}})^\top \Upsilon (\overline\theta_{1,\cut}-\overline\theta_{1,\mathrm{full}})},\label{eq:general_shrink}
	\end{equation}
	for some  $\gamma>0$, and $\Upsilon$ a positive definite $(d_1\times d_1)$-matrix. Since $\overline{\theta}_{1,\mathrm{full}}$ is asymptotically biased, while $\overline{\theta}_{1,\cut}$ is not, using $\widetilde\omega_{+}$ within the SMP  would allow us to interpret the SMP as shrinking cut posterior inferences towards those of the biased full posterior, and so using $\widetilde{\omega}_{+}$ would deliver a type of ``shrinkage'' SMP (S-SMP).	
	
	Semi-modular Bayesian inference is clearly much more general than the linear Gaussian models analyzed in \cite{green1991james}, \cite{kim2001james}, and \cite{judge2004semiparametric}.  Nevertheless, applying shrinkage estimation ideas within semi-modular Bayesian inference should allow us to effectively combine the cut and full posteriors.
	In the following sections, we show that this is indeed the case: empirically and theoretically, SMPs based on weights similar to \eqref{eq:general_shrink} deliver inferences that can be shown to be optimal according to a general notion of asymptotic risk. Before presenting any formal analysis, we first illustrate the behavior of the S-SMP based on $	\widetilde\omega_+$ empirically in a simple example from the cutting feedback literature.
}	
\subsection{Example: Biased Mean}\label{sec:bias_mean} To demonstrate the benefits of the SMP, we consider a minor modification of the biased mean example in Section 2.1 of \cite{bayarri2009modularization}. We observe two datasets generated from independent random variables with the same unknown mean $\theta_1$, but where the assumed model
for the second dataset is incorrect resulting in biased estimation of the 
parameter of interest. 
%
The first dataset corresponds to $n_1$ observations on  a $(d_1\times1)$-dimensional, $d_1\ge1$, random vector that we assume is generated from the model: $$y_{1,i}=\theta_{1}+\epsilon_{1,i},$$ 
where $\theta_1=(\theta_{1,1},\dots,\theta_{1,d_1})^\top$ and $\epsilon_{1,i}$, $i=1,\dots, n_1$, are iid $N(0,I_{d_1})$. However, we also observe an additional dataset, comprised of $n_2> n_1$ observations which is assumed to be from the model: 
$$
y_{2,i} = \theta_{1}+\sigma\epsilon_{2,i},
$$
where $\epsilon_{2i}$, $i=1,\dots, n_2$, are iid $N(0,I_{d_1})$ for unknown $\sigma>0$.  
The prior density for $\theta_1$ is $N(0,I_{d_1})$, and for $\theta_2:=\sigma^2$ is
$\pi(\sigma^2)\propto 1/\sigma^2$.  The parameter of interest is $\theta_1$, and 
we wish to determine how much the second dataset should influence inference
about $\theta_1$, when its assumed model is incorrect, leading
to biased estimation of $\theta_1$.\footnote{The original example of \citet{bayarri2009modularization} is such that for even moderate values of the bias we always prefer the cut posterior. Given this, we have slightly modified the original example to ensure a meaningful trade-off exists between the cut and full posteriors. Without this modification, the S-SMP simply returns the cut posterior in the vast majority of cases.} 

Suppose that in the actual data-generating process
$\epsilon_{2,i}$ is not $N(0,I_{d_1})$, but 
$$
\epsilon_{2,i}\stackrel{iid}{\sim}\begin{cases}
	N(-0.25\cdot \iota_{d_1},0.10\cdot I_{d_1})&\text{ with prob. }\delta\\N(0,0.5 I_{d_1})&\text{ with prob. }1-\delta
\end{cases}
$$ for $i=1,\dots, n_2$, where $\iota_{d_1}$ denotes a $d_1\times 1$ dimensional vector of ones.  When $\delta=0$, 
this reduces to the assumed model, but when
$\delta>0$ we will obtain biased estimation of $\theta_1$ under the assumed model.

For our experiments, we assume $\sigma^2=1/2$ and use an equally spaced grid of values for the contamination  $\delta\in\{0:0.05:0.9\}$.  For each value in this grid, we generate 1000 replications from the above process in the case where $n_1=100$ and $n_2=1000$, and consider two different values of $d_1\in\{1,5\}$. For each dataset, the S-SMP {\color{black}is based on the following version of the weights in \eqref{eq:general_shrink}}: for $\widehat{\mathcal{I}}_{p(11)}^{-1}=\text{Cov}_{\pi_{\cut}(\theta_1\mid z_{1:n})}(\theta_1)$ and $\widehat{\mathcal{I}}_{11.2}^{-1}=\text{Cov}_{\pi_{}(\theta_1\mid z_{1:n})}(\theta_1)$  $$\widetilde\omega_+=\min\left\{1,\widetilde{\omega}\right\}\,\;\widetilde\omega=\frac{\tr[\widehat{\mathcal{I}}_{p(11)}^{-1}-\widehat{\mathcal{I}}_{11.2}^{-1}]}{\|\overline\theta_{1,\cut}-\overline\theta_{1,\mathrm{full}}\|^2}\mathbb{I}\left\{\tr[\widehat{\mathcal{I}}_{p(11)}^{-1}-\widehat{\mathcal{I}}_{11.2}^{-1}]>0\right\}.$$

{\color{black}To compare the impact of misspecification on the different modular posteriors we plot a Monte Carlo estimate of the expected squared error for the point estimators, based on 1000 replicated samples, across the grid of values for $\delta$. The results are presented in  Figure \ref{fig_risk1} and show that for relatively small levels of contamination, the full posterior has lower expected squared error than the cut posterior, due to having a much smaller variance, while at higher levels of contamination the cut posterior has lower expected squared error. However, the expected squared error for the S-SMP is always lower than the cut posterior, which demonstrates that the SMP is able to `trade-off' between the two posteriors to minimize squared error risk  across all  levels of contamination. However, we note that when $d_1=1$ and $\delta=0.90$, the S-SMP and cut posterior give very similar results.\footnote{Appendix D in the supplementary material contains additional experiments for this example. These results show that the S-SMP delivers reasonable results for all choices of $d_1$ and becomes more accurate as the dimension $d_1$ increases.}
}

\begin{figure}[h]
	\centerline{\includegraphics[width=160mm,height=65mm]{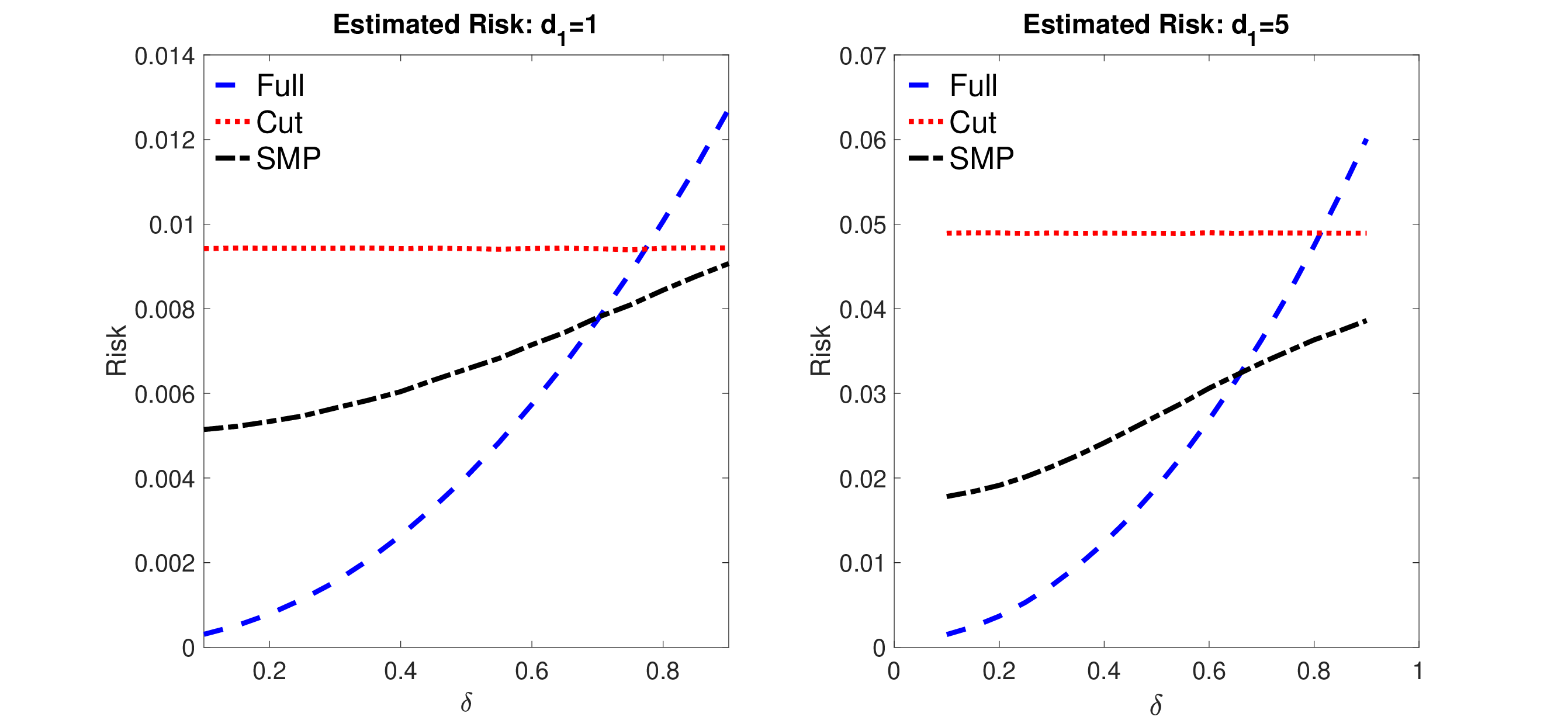}}
	\captionsetup{width=.99\linewidth}
	\caption{{Monte Carlo estimate of expected risk under squared error loss across different levels of contamination ($\delta$). Full refers to the expected risk for $\theta_{1,0}$ associated with the full posterior based on both sets of data; Cut is the P-risk associated to the cut posterior; SMP is the P-risk for the proposed semi-modular posterior. }}
	\label{fig_risk1}
\end{figure}

\section{Measuring posterior accuracy through risk}\label{sec:allrisk}
{\color{black}

	{The example in Section 	\ref{sec:bias_mean} suggests that the S-SMP can deliver more accurate inferences, in terms of expected squared error, than the cut or full posterior. This generally suggests that the S-SMP can deliver inferences that are accurate according to a criteria that measures both the bias and variance of posterior inferences. However, we remark that such a notion of accuracy is only one way to measure the accuracy of posterior inferences. In modular Bayesian inference, \cite{jacob2017better} and \cite{pompe2021asymptotics} have suggested choosing between full and cut posteriors using out-of-sample predictive accuracy, while \cite{carmona2020semi} have suggested a similar approach for calibrating an SMI tuning parameter. While such an approach is undoubtedly useful,  the empirical analysis in \cite{jacob2017better}, as well as the empirical and theoretical analysis in \cite{pompe2021asymptotics}, suggest that the preferred method according to such criteria is example specific, with neither method likely to be generally preferable. Further, Lemma \ref{lem:two} demonstrates that cut and full posterior inferences for $\theta_0$ exhibit asymptotic bias, and it is not clear how notions of predictive accuracy account for the impact of this bias on the resulting inferences.}
	
	In contrast to predictive approaches, we propose to measure the accuracy of modular and semi-modular posteriors through an inferential criterion that can accounts for both the bias and variance in the resulting inferences. 
	Given that the SMP includes the cut ($\omega=0$) and full posterior ($\omega=1$) as special cases, we evaluate the accuracy of different posteriors using the ``posterior risk'' associated to different $\omega$ values. This notion of risk has previously been used by \cite{castillo2014bayesian} and \cite{lee2018optimal} to choose between different prior classes in Bayesian inference, and is capable of capturing  both the bias and variance associated with posterior inferences.
	
	Given a user-chosen loss function $q:\Theta\times\Theta\rightarrow\mathbb{R}_+$, at the point $\theta_0\in\Theta$ we can measure the loss associated with $\omega\in[0,1]$ via the posterior risk $\mathbb{E}_{\pi_\omega}\{q(\theta,\theta_0)\mid\z\}:=\int_\Theta q(\theta,\theta_0)\pi_\omega(\theta\mid \z)\dt\theta$. 	For $g:\mathcal{Z}\rightarrow\mathbb{R}^d$ recall that $\E_{n}[g(z)]=\int_{\mathcal{Z}}g(z)h(z\mid\theta_0,\delta_n)\dt \mu(z)$, with $h(z\mid \theta_0,\delta_n)$ as in Assumption \ref{ass:miss}. The trimmed \textit{Posterior risk} of $\pi_\omega$ at $\theta_0$ (hereafter, referred to simply as P-risk) is defined as:\footnote{Trimming is necessary to ensure that the expectation exists, and can be disregarded in practical terms.}
	$$
	\mathrm{R}_q(\pi_\omega,\theta_{0}):=\lim_{\nu\rightarrow\infty}\liminf_{n\rightarrow\infty}\E_n\min\{\mathbb{E}_{\pi_\omega}\{n\cdot q(\theta,\theta_0)\mid\z\},\nu\}.
	$$ 
	The loss function $q:\Theta\times\Theta\rightarrow\mathbb{R}_+$ satisfies the following assumptions.
	\begin{assumption}\label{ass:loss}
		For all $\theta,\delta\in\Theta$, the loss function $q(\delta,\theta)$ satisfies: i) $q(\delta,\theta)\ge0$ and $q(\delta,\theta)=0$ if and only if $\theta=\delta$; ii) The loss is absolutely continuous in $\delta $, and three times continuously differentiable in $\delta$; iii) For $\delta$ in a neighbourhood of $\theta_{0}$, the matrix $\Upsilon(\delta)=\partial^2 q(\delta,\theta_{0})/\partial \delta\partial \delta^\top$ is continuous, and positive semi-definite; iv) For each $i=1,\dots,d$, and for all $x\in\mathbb{R}^d$, $\delta\in\Theta$, $|x^\top \{\partial \Upsilon(\delta)/\partial\delta_i\}x|\le M\|x\|^2$, where $\delta_i$ denotes the $i$-th direction of $\delta$, and $M>0$. 	
	\end{assumption}
	\begin{remark}\normalfont
		Assumption \ref{ass:loss} includes losses such as squared error loss $q(\delta,\theta)=\|\delta-\theta\|^2$, but also permits  intrinsic measures of accuracy for densities. The Kullback-Liebler divergence,
		$$
		q(\theta,\theta_0)=\int_{\mathcal{Z}}f(z\mid\theta_{0})\log \frac{f(z\mid\theta_{0})}{f_1(z\mid\theta_{1})f_2(z\mid\theta_2,\theta_{1})}\dt z,
		$$
		and various scoring rules, such as kernel scores or mean-variance scores (see \citealp{gneiting2007strictly}, Sections 4 and 5), will satisfy Assumption \ref{ass:loss}. Assumption \ref{ass:loss} does exclude discontinuous functions, such as those needed to measure the accuracy of quantiles. Extending our results to the case of discontinuous losses is the focus of subsequent work by the authors.
	\end{remark}
	
	P-risk is related to expected asymptotic risk, which is often used to gauge the accuracy of frequentist point estimators. Since $\mathrm{R}_q(\pi_\omega,\theta_{0})$ is calculated from a chosen posterior based on a decision made for $\omega$, we refer to this notion as P-risk to distinguish it from asymptotic risk. Risk has a long history in statistical analysis and we refer to Chapter 6 of \citet{lehmann2006theory}, and Chapter 8 of \citet{van2000asymptotic} for textbook treatments. The key benefit of using P-risk to measure different choices for $\omega$ is that,  for the chosen loss $q(\cdot,\cdot)$, P-risk can deliver a concrete ranking across inference procedures relative to this choice. Furthermore, our use of P-risk is not at odds with Bayesian inference, and has already been used by others, albeit in slightly different contexts. More generally, as argued by \cite{lehmann2006theory} (pg 310): ``The Bayesian paradigm is well suited for the construction of possible estimators, but is less well suited for their evaluation.'' Consequently, we follow this suggestion of \cite{lehmann2006theory} and carry out inference via Bayesian methods but evaluate the accuracy of these methods using our notion of asymptotic risk (i.e., P-risk).

	\subsection{The P-risk of SMPs}\label{sec:p_risk}
	
	As we have already seen in Section \ref{sec:gsmp},  weights of the form presented in \eqref{eq:general_shrink} deliver a SMP whose expected squared error was smaller than the cut posterior. Thus, in the remainder we focus our theoretical analysis on SMPs with pooling weights that are general versions of those in \eqref{eq:general_shrink}: for $\gamma_n$ a user-chosen sequence such that $\gamma_n\xrightarrow{P} \gamma$, $\gamma>0$, define the pooling weight
	\begin{flalign}\label{eq:pool_weight_all}
		\widehat\omega_+:=\min\{1,\widehat\omega\},\quad\widehat\omega=\frac{\gamma_n}{n(\overline\theta_{\cut}-\overline\theta_{\mathrm{full}})^\top\Upsilon(\overline\theta_{\cut}) (\overline\theta_{\cut}-\overline\theta_{\mathrm{full}})}.
	\end{flalign}
	In contrast to the weight $\widetilde\omega_{+}$ suggested in \eqref{eq:general_shrink}, the weight $\widehat\omega_+$ in \eqref{eq:pool_weight_all} depends on the entire vector $\theta$, and the curvature of the loss function, as captured by the matrix $\Upsilon(\theta)=\partial^2g(\theta,\delta)/\partial\delta\partial\delta^\top|_{\delta=\theta}$. Incorporating the curvature of the loss within the pooling weight is necessary for the SMPs to deliver inferences that are accurate according to the chosen loss; see Theorem \ref{thm:ests} of Appendix \ref{app:proofs} for further discussion. 
	
	The pooling weight in \eqref{eq:pool_weight_all} yields the shrinkage SMP (S-SMP): 
	\begin{flalign}\label{eq:smp}
		\pi_{\widehat\omega_+}(\theta\mid z_{1:n}):=\{(1-\widehat\omega_+)\pi_{\cut}(\theta_1\mid z_{1:n})+\widehat\omega_+\pi(\theta_1\mid z_{1:n})\}\pi(\theta_2\mid\theta_1,z_{1:n}).
	\end{flalign} Different choices of $\gamma_n$ in \eqref{eq:pool_weight_all} deliver different weights and ultimately different posteriors. However, the following result shows that across a range of choices for $\gamma_n$, the P-risk of $\pi_{\widehat\omega_{+}}$ in \eqref{eq:smp} is dominated by that of the cut posterior, and, under certain conditions, that of the full posterior. To state this result simply, define the $d_{}\times d_{}$-dimensional matrix $\Upsilon:=\Upsilon (\theta_{0})$, and denote the $(d_{2}\times d_2)$-block of $\Upsilon$ by $\Upsilon_{22}$ (see Assumption \ref{ass:loss}); let $\mathcal{M}$ be a $d\times d$ matrix with $(d_1\times d_1)$-block $W:=(\mathcal{I}_{p(11)}^{-1}-\I_{11.2}^{-1})$, $(d_2\times d_2)$-block $V:=\I_{22}^{-1}\I_{21}W\I_{12}I_{22}^{-1}$, and $(d_1\times d_2)$-block $-W\I_{12}\I_{22}^{-1}$, where $\I_{11.2}:=\I_{11}-\I_{12}\I_{22}^{-1}\I_{21}$. We remind the reader that $\I$, $\I_{jk}$ ($j,k\in\{1,2\}$) and $\I_{p(11)}$ are defined in  Section \ref{sec:formal}.


	\begin{theorem}\label{corr:bound}
		Assumptions \ref{ass:miss}-\ref{ass:loss}, and the regularity conditions in Appendix \ref{app:A} are satisfied. If 
		$\tr \Upsilon\M\ge2\|\Upsilon\M\|$, and $0<\gamma\le 2(\tr \Upsilon\M-2\|\Upsilon\M\|)$, then:
		
		\noindent(i) $
		\sup_{\eta:\|\eta\|<\infty}\{\mathrm{R}_q(\pi_{\widehat\omega_+},\theta_{0})-\mathrm{R}_q(\pi_{\cut},\theta_0)\}\le0;
		$
		
		\noindent(ii) $
		\sup_{\eta\in\mathcal{E}}\{\mathrm{R}_q(\pi_{\widehat\omega_+},\theta_{0}) -  \mathrm{R}_q(\pi_{\mathrm{full}},\theta_0)\}\le0
		$ for $\mathcal{E}:=\{\eta:\|\I^{-1}\eta\|_{\Upsilon}^2-\|\mathcal{I}_{22}^{-1}\eta_2\|^2_{\Upsilon_{22}}\ge \tr \Upsilon\M\}$.
	\end{theorem}
	\begin{remark}\normalfont
		Under the class of misspecified DGPs in Assumption \ref{ass:miss} and in terms of P-risk, Theorem \ref{corr:bound} demonstrates that the cut posterior delivers a point estimator that is \textit{inadmissable}, with a similar results also being true for the full posterior under additional conditions. Hence, from the standpoint of P-risk, the S-SMP is preferable to the cut posterior and under certain conditions the full posterior as well. However, it is unclear if the S-SMP has the lowest possible P-risk, i.e., if it is minimax. Answering this question requires deriving a local minimax efficiency bound for the class of DGPs in Assumption \ref{ass:miss} (see, e.g., Chapter 8 in \citealp{van2000asymptotic} for a discussion on asymptotic minimax estimators), which is outside the scope of the current paper, and is left for future research.
	\end{remark}
	\begin{remark}\normalfont
		As suggested by an anonymous referee, a Bayesian may also be interested in the uncertainty quantification of the S-SMP. As discussed in Remark \ref{rem:cover} after Lemma \ref{lem:two}, only credible sets based on $\pi_\cut(\theta_1\mid\z)$ deliver calibrated uncertainty quantification. Therefore, if $\widehat\omega_+>0$ the credible sets of the S-SMP will not be calibrated. 
		More generally, the credible sets calculated from the cut posterior, full posterior and S-SMP all depend, in different ways, on the magnitude of the bias induced via misspecification. Since $\eta$ is unknown in practice it is not obvious how one should theoretically compare the behavior of such credible sets. In light of this issue, we believe that measuring the accuracy of the posteriors using P-risk is the most direct approach. 
	\end{remark}
	
	\begin{remark}{\normalfont
			The condition $\tr \Upsilon\M\ge2\|\Upsilon\M\|$ in Theorem \ref{corr:bound} is related to the inefficiency that results from using the cut posterior relative to the full posterior. This condition is more likely to be satisfied when the efficiency gap between the cut and full posterior is large, or when $d$ - the dimension of $\theta$ - is large. That being said,  the condition $\tr \Upsilon\M\ge2\|\Upsilon\M\|$ is a sufficient condition and it is possible that the S-SMP may deliver smaller P-risk even when this condition is not satisfied.
		}	
	\end{remark}

	\begin{remark}\normalfont
		The example in Section \ref{sec:bias_mean} demonstrated that a S-SMP focused on inference for $\theta_1$ delivered smaller expected P-risk for $\theta_{1,0}$ than the cut posterior. However, Theorem \ref{corr:bound} makes clear that the S-SMP can also deliver smaller P-risk for $\theta_0$. Returning to the example in Section \ref{sec:bias_mean}, we now analyze the P-risk at $\theta_0$ for the S-SMP under the shrinkage weight
		$$
		\widehat\omega_+=\min\{1,\widehat\omega\},\quad\widehat\omega=\frac{\tr[\text{Cov}_{\pi_\cut}(\theta)-\text{Cov}_{\pi_\full}(\theta)]}{\|\overline\theta_{\cut}-\overline\theta_{\mathrm{full}}\|^2}\mathbb{I}\left\{\tr[\text{Cov}_{\pi_\cut}(\theta)-\text{Cov}_{\pi_\full}(\theta)]>0\right\}.
		$$ To this end, we repeat the Monte Carlo experiment in Section \ref{sec:bias_mean} under two different dimensions for $d_1\in\{1,5\}$, so that $d\in\{2,6\}$, and present the results in Figure \ref{fig_risk2}. These results show that the P-risk of the S-SMP is dominated by that of the cut posterior, and in certain cases that of the full posterior; as with the example in Section \ref{sec:bias_mean}, for $\delta=0.90$, and $d=2$, the S-SMP and cut posterior give very similar results. Theorem \ref{corr:bound} is asymptotic and given the sample sizes considered in this experiment it is not surprising that at large levels of contamination the S-SMP can perform slightly worse than the cut posterior (when $d=2$), since non-negligible weight is placed on the full posterior. However, at $\delta=0.90$ the median pooling weight across the replications is about $0.30$, so that most of the pooling weight corresponds to the cut posterior. As we shall see shortly, under higher levels of misspecification and as $n$ increases, the S-SMP resembles the cut posterior. 
		\begin{figure}[h]
			\centerline{\includegraphics[width=160mm,height=60mm]{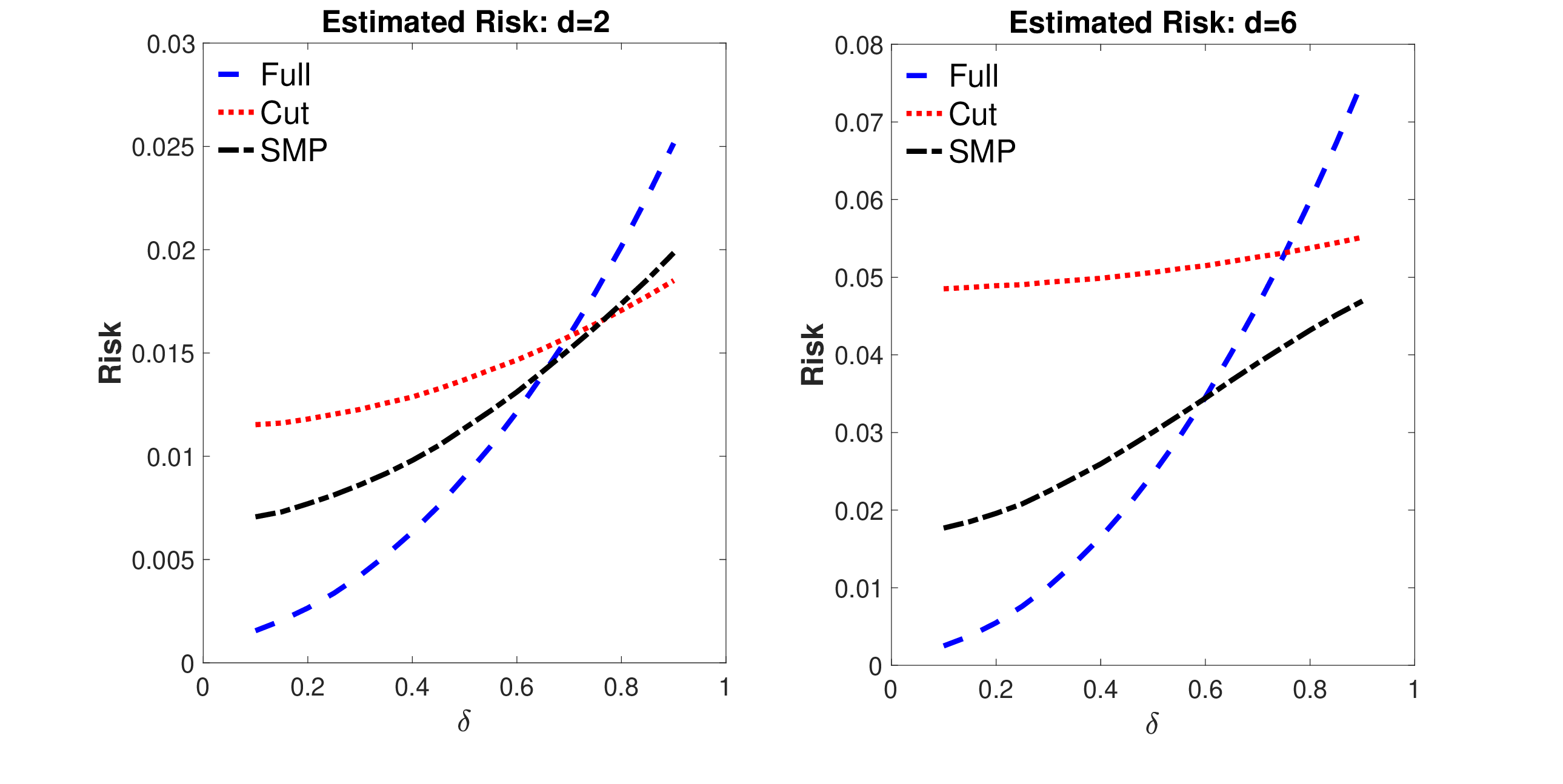}}
			\captionsetup{width=.99\linewidth}
			\caption{{Monte Carlo estimate of expected risk for $\theta_0$ under squared error loss across different levels of contamination ($\delta$). Please see Section \ref{sec:bias_mean} and Figure \ref{fig_risk1} for further details.}}
			\label{fig_risk2}
		\end{figure}
	\end{remark}
	
	Part one of Theorem \ref{corr:bound} implies that if the cut posterior is inefficient relative to the full posterior, as measured by $\tr \Upsilon\M\ge2\|\Upsilon\M\|$, then the S-SMP will be at least as accurate (in P-risk) as the cut posterior, and potentially more accurate than the full posterior.\footnote{Theorem \ref{corr:bound} applies even if $\eta=0$: when there is no misspecification bias the cut and full posterior means will be similar and the weight $\widehat{\omega}_+$ will be close to unity so that the S-SMP will resemble the full posterior.} The second part of Theorem \ref{corr:bound} gives a sufficient, but not necessary, condition which guarantees that the P-risk of the full posterior dominates that of the S-SMP. This condition is likely to be satisfied when the difference in posterior locations is larger than the difference in posterior variances.

	When $d>2$ it is also possible to obtain an analytic expression for the P-risk of the S-SMP if we set $q(\theta,\theta_0)=\frac{1}{2}(\theta-\theta_0)^\top\M^{-1}(\theta-\theta_0)$, so that $\Upsilon=\M^{-1}$. The requirement that $d>2$ is commonly encountered in the risk analysis of James-Stein estimators and is a consequence of the fact that $\pi_{\widehat\omega_+}(\theta\mid\z)$ can be viewed as performing a type of posterior shrinkage.
	\begin{theorem}\label{corr:bound2}
		Assumptions \ref{ass:miss}-\ref{ass:loss}, and the regularity conditions in Appendix \ref{app:A} are satisfied. If $\Upsilon=\M^{-1}$, $d_{}>2$, and $0<\gamma\le2(d_{}-2)$, then for any finite $\eta$,
		\begin{flalign*}
			\mathrm{R}_q(\pi_{\widehat\omega_+},\theta_{0})&=	\mathrm{R}_q(\pi_{\cut},\theta_0)-\frac{\gamma\{2(d_{}-2)-\gamma\}(d_{}-3)!}{2(d_{})!}{_{1}\mathrm{F}_1(d_{}-1;d_{1};\lambda)}\le \mathrm{R}_q(\pi_{\cut},\theta_0),
		\end{flalign*}where ${_{1}\mathrm{F}_1(k-1;k;\lambda)}$ is the confluent hyper-geometric function.
	\end{theorem}
	Theorem \ref{corr:bound2} is useful as it gives an exact bound on the P-risk, and an easily interpretable set of conditions for the value of  $\gamma$ in the weight $\widehat{\omega}_+$. Under $\Upsilon=\M^{-1}$, the condition $d_{}>2$ is necessary to guarantee that  $\pi_{\widehat\omega_{+}}(\theta\mid z_{1:n})$ has smaller P-risk than $\pi_\cut(\theta\mid z_{1:n})$. This condition is  related to Stein's phenomenon (see Ch. 6 of \citealp{lehmann2006theory} for a discussion) and implies that using the cut posterior by itself is sub-optimal (in terms of P-risk) when $d_{}>2$. We stress that this interpretation is only valid when $\Upsilon=\M^{-1}$ and that a similar phenomena does not necessarily extend to other loss functions. 
	
	Theorems \ref{corr:bound}-\ref{corr:bound2} demonstrate that, under certain conditions and in terms of P-risk, the S-SMP is more accurate than the cut posterior and possibly the full posterior.  However, Theorems \ref{corr:bound}-\ref{corr:bound2} implicitly require that the difference between the cut and full posterior locations do not diverge, which is a consequence of the asymptotic regime in Assumption \ref{ass:miss}. This begs the question of what happens to the S-SMP when we move from the case of local model misspecification (Assumption \ref{ass:miss}) to gross model misspecification (Assumption \ref{ass:missgross}). The following result demonstrates that under gross model misspecification the S-SMP converges to the cut posterior, and so is robust to either form of misspecification.
	\begin{corollary}\label{corr:gross}
		Assumption \ref{ass:missgross}, Assumption \ref{ass:loss}, and the regularity conditions in Appendix \ref{app:RegG} are satisfied. If $\|\overline\theta_{\cut}-\theta_{0}\|=O_p(n^{-1/2})$, and  $\|\overline\theta_{\mathrm{full}}-\theta_{\star}\|=O_p(n^{-1/2})$, then
		$
		\int_\Theta |\pi_{\widehat\omega_+}(\theta\mid z_{1:n})-\pi_\cut(\theta\mid z_{1:n})|\dt\theta=o_p(1).
		$
	\end{corollary}
}

\section{Additional examples}
\subsection{Normal-normal random effects model}
We first apply the S-SMP to the misspecified normal-normal random effects model presented in \cite{bayarri2009modularization}. The observed data is $z_{ij}$ comprising observations on  $i=1,\dots,N$ groups, with $j=1,\dots,J$ observations in each group, which we assume are generated from the model $z_{ij}\mid \beta_i,\varphi_i^2\stackrel{iid}{\sim} N(\beta_i,\varphi_i^2)$, with random effects $\beta_i\mid\nu \stackrel{iid}{\sim} N(0,\nu^2)$. The goal of the analysis is to conduct inference on the standard deviation of the random effects, $\nu$, and the residual standard deviation parameters $\varphi=(\varphi_1,\dots,\varphi_N)^\top$.  Below we write $\beta=(\beta_1,\dots, \beta_N)^\top$, and
$\zeta=(\nu,\beta^\top)^\top$.  

For $\bar{z}_i=J^{-1}\sum_{j=1}^{J}z_{ij}$ and $s_i^2=\sum_{j=1}^J (z_{ij}-\bar{z}_i)^2$, $i=1,\dots, N$, the likelihood for $\zeta,\varphi$ can be written to depend only on the sufficient statistics $\bar{z}=(\bar{z}_1,\dots,\bar{z}_N)^\top$ and $s^2=(s_1^2,\dots,s_N^2)^\top$, where independently for $i=1,\dots, N$, 
\begin{flalign*}
	& \bar{z}_i\mid\zeta,\varphi \sim N(\beta_i,\varphi_i^2/J),\quad s^2_i\mid\varphi\sim \text{Gamma}\left(\frac{J-1}{2},\frac{1}{2}\frac{1}{\varphi_i^2}\right).
\end{flalign*}Letting $\theta_1=\varphi$ and $\theta_2=\zeta$, the random effects model can then be written as a two-module system of the form shown in Figure \ref{two-module}:  module one depends on $(s^2,\theta_1)$, $X=s^2$, and module two depends on $(\bar{z},\theta_2,\theta_1)$, $Y=\bar{z}$.

Let $\text{Gamma}(x;A,B)$ denote the value
of the $\text{Gamma}(A,B)$ density evaluated at $x$, and 
$N(x;\mu,\sigma^2)$ denote the value of the $N(\mu,\sigma^2)$ density
evaluated at $x$. The first module has likelihood  $f_1(X\mid \theta_1)= \prod_{i=1}^{N}\text{Gamma}\left(s_i^2;\frac{J-1}{2},\frac{1}{2}\frac{1}{\theta_{1,i}^2}\right)$, while the second module has likelihood  $f_2(Y\mid \theta)=\prod_{i=1}^{N}N(\bar{Z}_i;\beta_i,\theta_{1,i}^2/J)$. 

When the Gaussian prior for the random effects term $\beta_i$ conflicts with the likelihood information, inferences for $\theta_{1,i}=\varphi_i$ can be adversely impacted. Such an outcome will happen when, for instance, {\color{black}a value of $\beta_i$ differs markedly from its assumed model. \cite{bayarri2009modularization} argue that the thin-tailed Gaussian assumption for the random effects can produce poor inferences for $\theta_{1,i}$ due to the feedback induced by the likelihood term $N(\bar{Z}_i;\beta_i,\varphi_i^2/J)$ in the second module. 
	To guard against this \cite{bayarri2009modularization}} propose cut posterior inference for $\theta_1\mid s^2$, which can be accommodated by simply updating the posterior for $\theta_1$ using only the information in the corresponding summary statistics $s^2$: given $\pi(\theta_{1,i}^2)\propto(\theta_{1,i}^2)^{-1}$, and independent across $i=1,\dots,N$, the cut posterior for $\theta_1^2$ (where this denotes the elementwise square of
$\theta_1$) is
$$
\pi_\cut(\theta_1^2\mid X)\propto \prod_{i=1}^{N}(\theta_{1,i}^2)^{-\frac{J+1}{2}}\exp\left\{-\frac{J\cdot s_i^2}{2\theta_{1,i}^2}\right\}.
$$
Summaries of the cut posterior for $\theta_1$ can be obtained by sampling from the cut posterior for $\theta_1^2$ and transforming
the samples.  Joint inferences for $(\zeta,\varphi)$ can be carried out using the cut posterior distribution 
\begin{flalign*}
	\pi_\cut(\zeta,\varphi\mid X,Y)	& = \pi_\cut(\varphi\mid X)\pi(\zeta\mid Y,\varphi)=\pi_\cut(\varphi\mid s^2)\pi(\zeta\mid \bar{Z},\varphi),\quad \varphi=\theta_1,
\end{flalign*}
where the conditional posterior $\pi(\zeta\mid \bar{Z},\varphi)$ is obtained from the joint posterior for $\theta_1,\theta_2$.

{\color{black}We now demonstrate that the S-SMP delivers inferences for $\theta_{1,i}$ that are more accurate than the cut posterior for different levels of misspecification. We generate 500 repeated samples from the normal-normal random effects model with $N=100$ groups each with random effect component $\beta_i$, $i=1,\dots,N$. For each group, we set $\theta_{1,i}:=\varphi_i=0.50$ and $\nu=1$. We induce model misspecification through the random effect term $\beta_1$. Following the design of \cite{liu+g20} we induce misspecification by forcing $\beta_1$ to be an outlier, however, unlike \cite{liu+g20} we consider that the magnitude of the outlier decreases as the number of individual observations in each group, $J$, increases. We set the random effect for the first group as $\beta_1=50/J$, and consider $J\in\{5,10,20,50\}$. When $J$ is small the cut posterior delivers more accurate inferences for $\theta_{1,1}$ than the full posterior as the feedback between this outlier and $\theta_{1,1}$ has been removed. As the magnitude of the outlier shrinks, the full posterior for $\theta_{1,1}$ become more accurate and a meaningful trade-off between the cut and full posteriors exists.}

{\color{black}Our goal is to measure the impact of misspecification on the inferences for $\theta_{1,1}$, and so we choose the weight in the S-SMP using squared error loss for this component only. This produces a pooling weight that is similar to that discussed in Section \ref{sec:bias_mean} but based on $\theta_{1,1}$ and not the entire vector of $\theta_1$.\footnote{Since only the first random effect component, $\beta_1$, is misspecified, inferences for $\theta_{1,2},\cdots,\theta_{1,N}$ are not impacted by misspecification. In this way, even if the pooling weight was estimated based on the entire vector for $\theta_1$ it would be the $\theta_{1,1}$ component that would drive the pooling weight since the posterior means for the cut and full posterior are very similar for the remaining components. }} 	{\color{black}We present Monte Carlo estimates (calculated over the 500 replicated datasets) of the corresponding P-risk under squared error loss for the posteriors in Table \ref{tab:cuts}. Across all values of $J$, the S-SMP has smaller P-risk than the cut posterior, and in many cases the full posterior as well. Further, the cut posterior is more accurate than the full posterior when $J$ is small but the full posterior becomes more accurate as $J$ increases. In this way, the weight in the S-SMP is close to unity with small $J$, but increases as $J$ increases. However, for $J$ large the cut and full posteriors behave similarly, and the S-SMP maintains most of the weight on the cut posterior. 
}
\begin{table}[H]
	\centering
	{\footnotesize		\begin{tabular}{lccccccccccc}
			\hline
			
			$J$  & {$J=5$} & {$J=10$} & {$J=20$}& {$J=50$}& {$J=100$}\\
			\midrule
			$\mathrm{S-SMP}$   &\textbf{10.5365}& \textbf{2.9705}& \textbf{1.1753}& \textbf{0.3723}& \textbf{0.1897}\\
			$\mathrm{Full}$  &8678.9763  &4.6894 &1.3124  &0.5491  &0.2634\\
			$\mathrm{Cut}$  &10.5365& 3.5806& 1.4721& 0.5102& 0.2581\\
			\hline		$\widehat{\omega}_+$  & 0.00& 0.18& 0.23& 0.15&0.15\\
			
			\hline
	\end{tabular}} \vspace{-0.1in}
	\captionsetup{width=.99\linewidth}
	\caption{P-risk values under squared error loss, multiplied by 100, for readability. Bold values indicate the lowest value of risk across the methods. $\widehat{\omega}_+$ is the average posterior weight across the replications. Misspecification decreases as the number of individual observations per-group (J) increases.}
	\label{tab:cuts}
\end{table}

\subsection{Archaeological example}
{\color{black}
	Our final example, discussed in \cite{styring17}, \cite{carmona2020semi}
	and \cite{yu2021variational}, involves data collected to evaluate an ``extensification hypothesis'' for
	early Mesopotamian agricultural practices. The hypothesis states that as cities grew, 
	agriculture extended over larger areas with less intensive cultivation, rather than 
	more intensively farming existing areas to meet food demands.  
	
	The analysis uses two data sources: an archaeological dataset and a modern 
	experimental dataset. Figure \ref{archaeology}, which
	is similar to Figure 6 from \cite{carmona2020semi}, shows a graphical representation
	of the model which comprises two modules. The first, the ``HM module'', is a Gaussian 
	linear regression model incorporating random effects. The second, the ``PO module'', 
	is a proportional odds model used to impute a missing categorical covariate for the HM module.
	
	\begin{figure}[ht]
		\begin{center}
			\includegraphics[width=90mm]{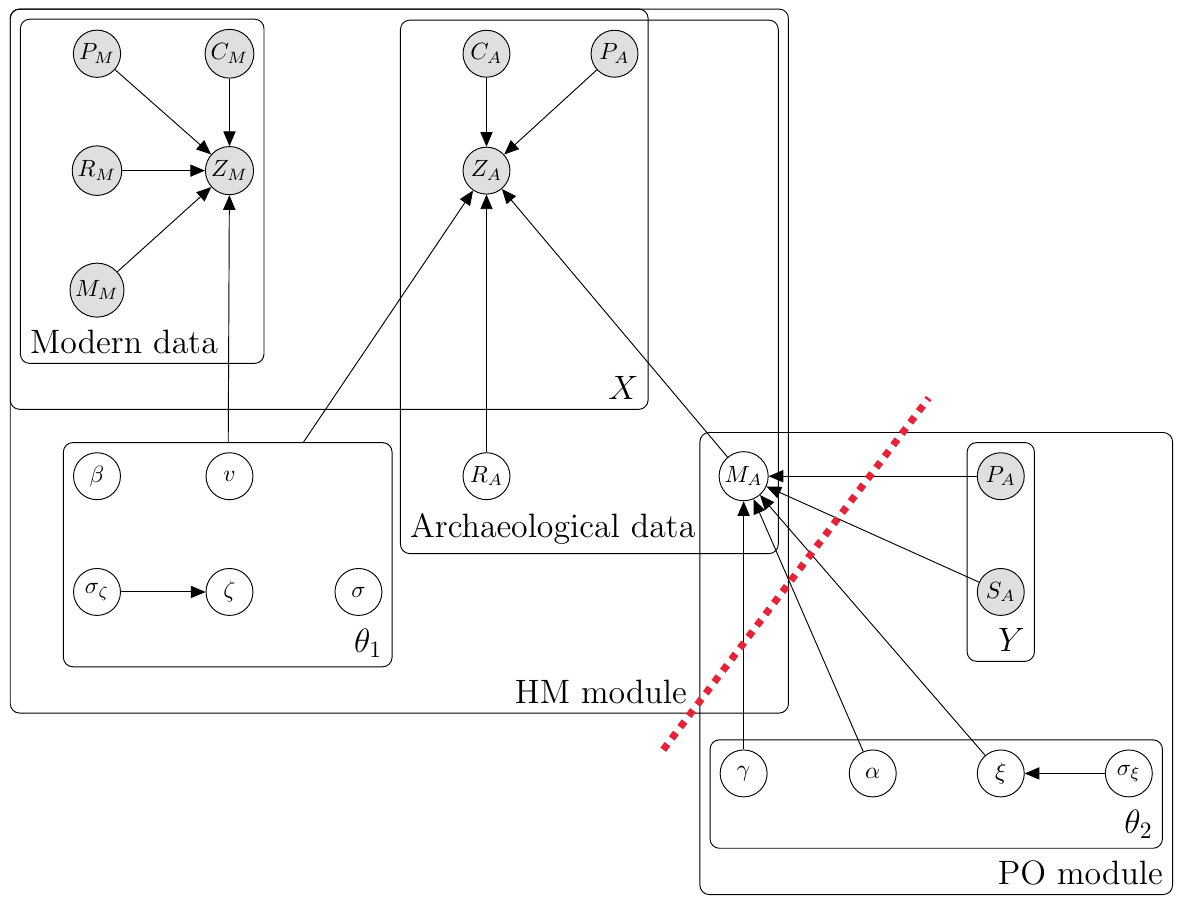}
		\end{center}
		\captionsetup{width=.99\linewidth}
		\caption{\label{archaeology} Graphical representation of model for the agricultural extensification example.}
	\end{figure}
	
	In the HM module's regression, the response is nitrogen level of cereal grains, 
	denoted $Z$.  We follow
	the notation of \cite{yu2021variational} and use subscripts of $A$ and $M$ 
	to denote archaeological and modern values of any variable. 
	So, for example, $Z_A$ and $Z_M$ are nitrogen
	levels of cereal grains for archaeological and modern data respectively.  
	For covariates in the HM module we have crop category $C\in \{\text{Wheat},\text{Barley}\}$, 
	site location $P$ (a categorical variable), site size $S$, rainfall $R$ and manure level
	$M\in \{m_{\text{low}}, m_{\text{medium}},m_{\text{high}}\}$.  
	Archaeological data for rainfall and manure level, $R_A$ and $M_A$, are missing. 
	The HM module is a linear regression model with fixed effects for rainfall and 
	manure level, a random effect for site location, and error variance based on crop category. 
	
	The imputation model in the PO module imputes the missing
	manuring level covariates for the archaeological data with parameters 
	$\theta_2=(\gamma,\alpha,\xi,\sigma_\xi)^\top$.  
	The prior on $M_A$ is a proportional odds model with covariates site size and site location.  
	The parameter $\gamma$ is the site size coefficient; a negative $\gamma$
	supports the extensification hypothesis.  The parameter
	$\xi$ is a vector of random effects for five archaeological site locations in the proportional odds 
	model, $\sigma_\xi$ is the standard deviation of random effects, and $\alpha$ is 
	a vector of two threshold parameters.  
	Further details on the model and priors are available in Appendix B.1 of \cite{yu2021variational}.
	Bayesian modular inference is relevant 
	in this example because 
	the PO module may be poorly specified. Therefore, we can cut feedback so that $M_A$
	is imputed based solely on the hierarchical model for cereal grain nitrogen levels 
	(HM module in Figure \ref{archaeology}), ensuring that 
	imputation of $M_A$ and the interpretation of $\gamma$
	are unaffected by any misspecification in the PO module.  
	
	In Figure \ref{archaeology}, the red line indicates a ``cut'' between the modules. 
	Figure \ref{archaeology-simple} provides a simplified model structure. 
	\begin{figure}[ht]
		\begin{center}
			\includegraphics[width=40mm]{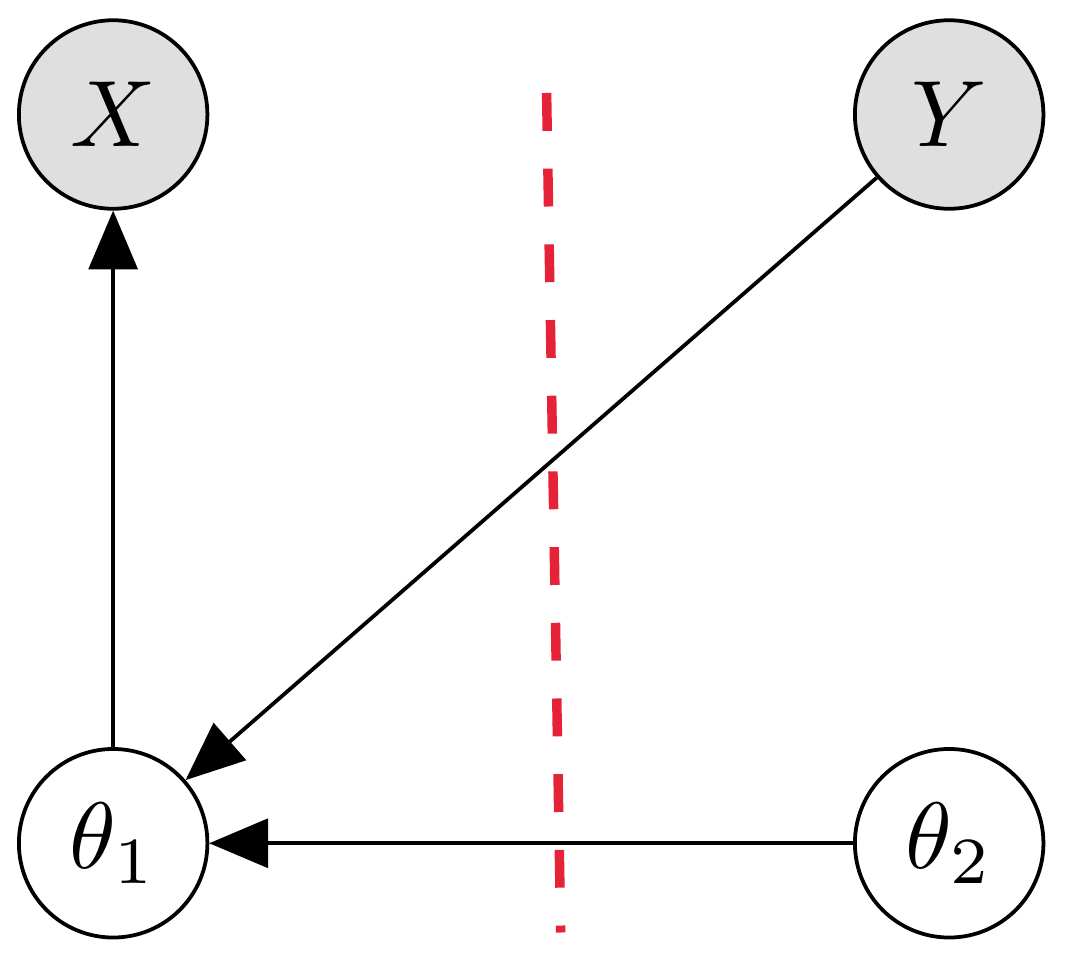}
		\end{center}	  
		\captionsetup{width=.99\linewidth}
		\caption{\label{archaeology-simple} Simplified graphical representation of the model
			for the agricultural extensification example.}
	\end{figure}
	Although it looks 
	different from the two-module system in Figure \ref{two-module}, the cut posterior has the same form, 
	allowing cut and semi-modular inference to proceed similarly.
	Here we consider how semi-modular inference changes according to the choice
	of the loss function.  For a given scalar parameter $\tau$, we will consider a S-SMP posterior
	using mixing weight 
	$$\widetilde{\omega}_+=\min\{1,\widetilde{\omega}(\tau)\},\;\;\;
	\widetilde{\omega}(\tau)=\frac{\sigma_{\tau,\text{cut}}^2-\sigma_{\tau,\full}^2}{(\overline{\tau}_{\text{cut}}-\overline{\tau}_{\full})^2}\mathbb{I}(\sigma_{\tau,\text{cut}}^2-\sigma_{\tau,\full}^2>0),$$
	where $\sigma_{\tau,\text{cut}}^2$ and $\sigma_{\tau,\full}^2$ are the cut and full marginal
	posterior variances for $\tau$, and $\overline{\tau}_{\text{cut}}$ and $\overline{\tau}_{\full}$ are 
	the cut and full marginal posterior means for $\tau$.  This is similar to the S-SMP in Section 3.2, 
	but based on the marginal full and cut posteriors for $\tau$.   
	
	Figure \ref{smi-archaeology} shows the cut, full and S-SMP posteriors for $\gamma$ (the parameter
	of main interest) and the proportional odds regression random effects $\xi_1,\dots, \xi_5$.  
	When basing the mixing weight on $\gamma$ (top left), we do a full cut, while basing the
	mixing weights on $\xi_1,\dots, \xi_5$ results in mixing weights varying between $0.2$ and $1$.     
	In this example, 
	the shrinkage weight can vary a great deal depending on what scalar parameter is being
	targeted in the loss function, and so the use of an appropriately defined loss function for
	the application is crucial.  The S-SMP for $\gamma$ shows weaker evidence for the extensification hypothesis than the standard posterior, in the sense that the posterior probabililty of $\gamma<0$ is smaller. 
	Details of the MCMC approach for generating samples from the cut posterior, as well as an
	SMC method to generate samples from the full posterior, are given in \cite{yu2021variational}.
	\begin{figure}[ht]
		\begin{center}
			\includegraphics[width=120mm]{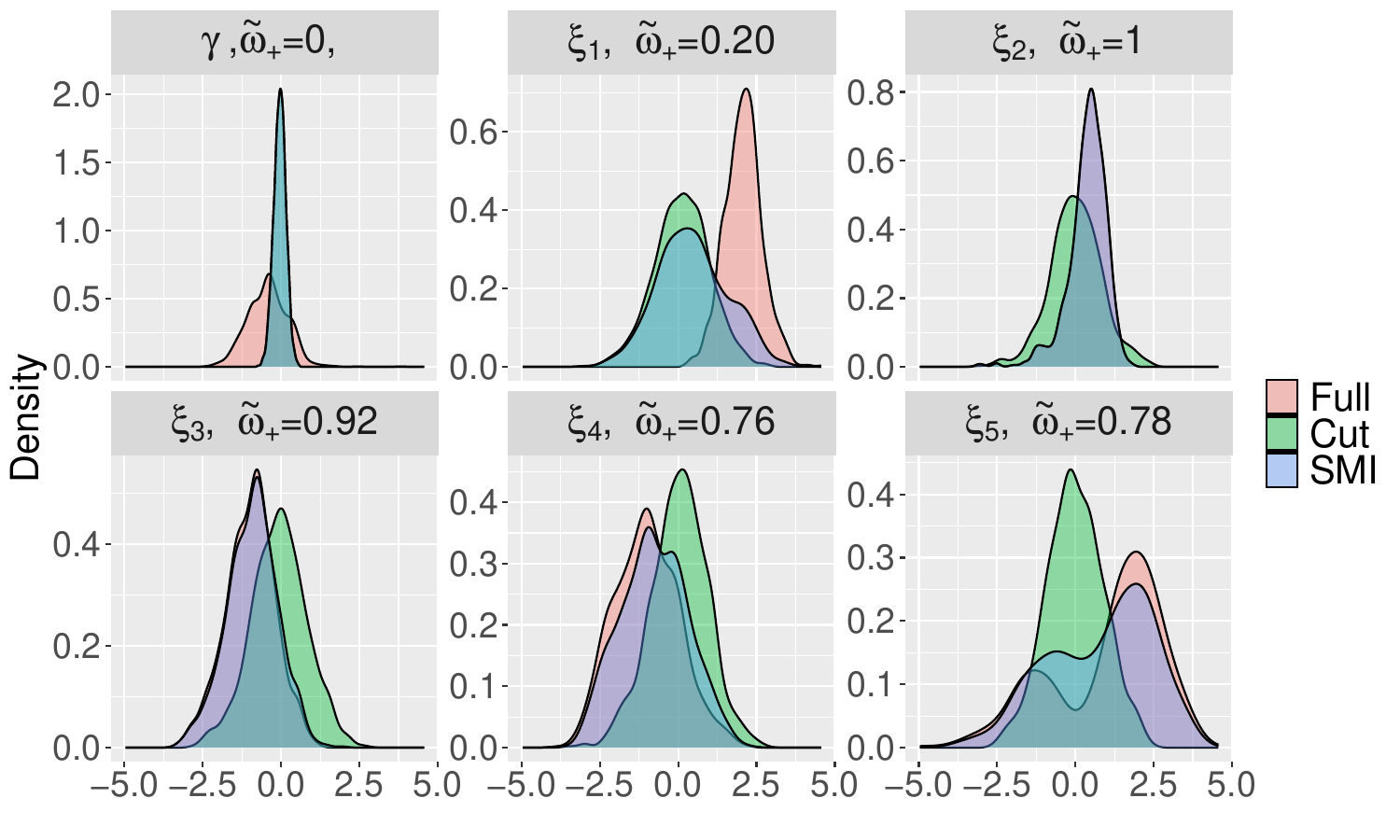}
		\end{center}
		\captionsetup{width=.99\linewidth}
		\caption{\label{smi-archaeology} Marginal cut, conventional and S-SMP posteriors for 
			different parameters in the proportional odds regression model for the archaeological example.
			The title for each graph shows the parameter used, and the S-SMP mixing weight used for
			that parameter.}
	\end{figure}
}

\section{Discussion}

{\color{black}Choosing between the cut and full posteriors is difficult when model misspecification is not severe and in such cases
	semi-modular posteriors (SMPs) proposed in \cite{carmona2020semi} are an 
	attractive alternative.  While SMPs are motivated by the presence of a bias-variance trade-off between cut and full posterior inferences, this paper is the first to formalize
	the existence of such a trade-off. Using SMPs based on linear opinion pooling, we devise a novel pooling weight that allows the SMP to leverage this bias-variance trade-off. Our proposed shrinkage SMP is simple to implement and possess useful theoretical guarantees that other SMP approaches do not posses: the posterior risk of our shrinkage SMP is dominated by that of the cut posterior, and, under certain conditions, that of the full posterior. An interesting future direction would be to determine if our theoretical results can be extended to other types of SMPs, such as those of \cite{carmona2020semi} and \cite{nicholls+lwc22}.}

{\color{black}As suggested by a referee,  the notion of asymptotic risk we consider is only one criterion with which to judge the accuracy of posterior inferences, with posterior predictive accuracy and the validity of posterior credible sets being alternative measures. While assessing the accuracy of different methods based on posterior predictive accuracy is empirically feasible, it is not obvious that it is feasible to deduce a ranking across different modular Bayesian methods under our maintained assumptions. Further, the random weighting of the SMP ensures that determining the asymptotic shape of this posterior, and thus the behavior of its credible sets, is not straightforward. We leave these interesting topics for future study.}

\subsection*{Acknowledgments}
David T. Frazier gratefully acknowledges support by the Australian Research Council through grant DE200101070. 
David Nott's research was supported by the Ministry of Education, Singapore, under the Academic Research Fund Tier 2 (MOE-T2EP20123-0009). We thank seminary participants at the Weierstrass Institute
for Applied Analysis and Stochastics, and the Computational methods for unifying multiple statistical analyses (Fusion) workshop for helpful comments. In addition, we thank Pierre Jacob for helpful comments on some of the stated results. The authors also thank the associate editor and referees for very helpful comments that significantly improved the paper.

\spacingset{1.0}
{\footnotesize
	\bibliographystyle{chicago}
	\bibliography{mod_bib_semi}
}
\appendix
\section{Supplementary Material}
This supplementary material contains the regularity conditions used to obtain the results in the main text, proofs of all stated results and several lemmas used to prove the main results. The regularity conditions and proofs are broken up into two sections that depend on whether the analysis is conducted under gross model misspecification (Assumption \ref{ass:missgross}), or local model misspecification (Assumption \ref{ass:miss}). In addition, this material contains further
details of the HPV and cervical cancer incidence example introduced in Section 2.1 of the main text, and additional experiments for the biased means example in Section 3.2.

\section{Gross Misspecification: Assumption \ref{ass:missgross}}\label{app:G}
\setcounter{assumption}{0}
\renewcommand{\theassumption}{\Alph{section}\arabic{assumption}}

\subsection{Regularity Conditions}\label{app:RegG}
The regularity conditions used to prove Lemma \ref{lemma:gross_miss} are similar to those used to deduce posterior concentration rates in generalized Bayesian methods; see, e.g., \cite{shen2001rates}, as well as \cite{miller2021asymptotic}. We state the assumptions separately for the cut posterior and the full posterior.     Recall, $\ell_{p}(\theta_1)=\log f_1(\z\mid\theta_1)$, and rewrite the cut posterior for $\theta_1$ as 
$$
\pi_{\cut}(\theta_1\mid \z)=\frac{\exp\{\ell_p(\theta_1)\}\pi(\theta_1)}{\int_{\Theta_1}\exp\{\ell_p(\theta_1)\}\pi(\theta_1)\dt\theta_1}
=\frac{\exp\{\ell_p(\theta_1)-\ell_p(\theta_{1,0})\}\pi(\theta_1)}{\int_{\Theta_1}\exp\{\ell_p(\theta_1)-\ell_p(\theta_{1,0})\}\pi(\theta_1)\dt\theta_1}.
$$For $A\subseteq\Theta_1$, write $M_{1,n}(A)=\int_A \exp\{\ell_p(\theta_1)-\ell_p(\theta_{1,0})\}\pi(\theta_1)\dt\theta_1$ so that 
$
\Pi_{\cut}(\theta_1\in A\mid \z)=\frac{M_{1,n}(A)}{M_{1,n}(\Theta)}
$. For $\varepsilon>0$, define $\Theta_{1}(\sqrt{n\varepsilon^2}):=\{\theta_1\in\Theta_1:\|\theta_1-\theta_{1,0}\|\le \sqrt{n\varepsilon^2}\}$. 
\begin{assumption}\label{ass:cut_gross}The following are satisfied. 
\begin{enumerate}
	\item[(i)]  For any $\delta>0$ there exists $\varepsilon>0$ and a sufficiently large $K>0$ such that 
	$$
	P^{(n)}_0\left[\sup_{\|\theta_1-\theta_{1,0}\|\ge\delta}\frac{1}{n}\left\{\ell_p(\theta_1)-\ell_p(\theta_{1,0})\right\}\ge -K\varepsilon\right]=o(1).
	$$
	\item[(ii)] For all $n\ge1$, $\varepsilon>0$, and $b_n=(1/2)\Pi\{\Theta_{1}(\sqrt{n\varepsilon^2})\}e^{-2n\epsilon^2}$, $$P^{(n)}_0\left\{M_{1,n}(\Theta_{1})\le b_n\right\}\le 2/n\epsilon^2.$$
	\item[(iii)] For $\varepsilon>0$, $c>0$, $\Pi\{\Theta_{1}(\sqrt{n\varepsilon^2})\}\gtrsim \exp\{-c n\varepsilon^2\}$.
	\item[(iv)] For any $\theta_1,\theta_1'\in\Theta_1$, if $\theta_1\ne\theta_1'$, then $ f_1(z\mid\theta_1)\ne  f_1(z\mid\theta_1')$ with positive probability.
\end{enumerate}
\end{assumption}
\begin{remark}\normalfont
Assumptions \ref{ass:cut_gross} parts (i), (iii) are identical to those maintained in \cite{shen2001rates} but for the partial log-likelihood $\ell_p(\theta_1)$, while the bound on the posterior denominator in Assumption \ref{ass:cut_gross}(ii) is maintained to simply the proofs and can be removed at the cost of additional technicalities; for instance, using arguments similar to those of Lemma 1 in \cite{shen2001rates}. 
\end{remark}

From the definition of  $\mathrm{KL}\{h(z\mid\theta_{1,0},\delta_0)\|f(z\mid\theta)\}$, 
\begin{flalign*}
\mathrm{KL}\{h(z\mid\theta_{1,0},\delta_0)\|f(z\mid\theta)\}=&\int_{\mathcal{Z}} \log \frac{f_1(z\mid \theta_{1,0})\delta_0(z)}{f_1(z\mid \theta_1)f_2(z\mid\theta)}f_1(z\mid\theta_{1,0})\delta_0(z)\dt z\\=&\int_{\mathcal{Z}} \log \frac{f_1(z\mid \theta_{1,0})}{f_1(z\mid \theta_1)}f_1(z\mid\theta_{1,0})\delta_0(z)\dt z\\+&\int_{\mathcal{Z}} \log \frac{\delta_0(z)}{f_2(z\mid\theta)}f_1(z\mid\theta_{1,0})\delta_0(z)\dt z.
\end{flalign*}Setting $\theta_1=\theta_{1,0}$ minimizes the first part of $\mathrm{KL}\{h(z\mid\theta_{1,0},\delta_0)\|f(z\mid\theta)\}$, but does not minimize both components. Hence, under Assumption \ref{ass:missgross},  $\theta_\star:=\argmin_{\theta\in\Theta}\mathrm{KL}\{h(z\mid\theta_{1,0},\delta_0)\|f(z\mid\theta)\}$ is the value we would expect the full posterior to concentrate onto as $n\rightarrow\infty$. Thus, conducting joint Bayesian inference  on $\theta$ under Assumption \ref{ass:missgross} results in posteriors for which $\theta_1$ will not concentrate onto $\theta_{1,0}$. To formally prove this result, recall the definition $\ell(\theta)=\log f(\z\mid\theta)$, and consider the following regularity conditions, which are equivalent version of Assumption \ref{ass:cut_gross} but for   $\ell(\theta)$, and $\pi(\theta)$. 

\begin{assumption}\label{ass:exact_gross}The following are satisfied. 
\begin{enumerate}
	\item[(i)]  For any $\delta>0$ there exists $\varepsilon>0$ and a sufficiently large $K>0$ such that, for some $\theta_\star\in\Theta$,  
	$$
	P^{(n)}_0\left[\sup_{\|\theta-\theta_{\star}\|\ge\delta}\frac{1}{n}\left\{\ell(\theta)-\ell(\theta_\star)\right\}\ge -K\varepsilon\right]=o(1).
	$$
	\item[(ii)] For all $n\ge1$, $\varepsilon>0$, $\Theta({\sqrt{n\varepsilon^2}}):=\{\theta\in\Theta:\|\theta-\theta_\star\|\le \sqrt{n\varepsilon^2}\}$, and $b_n=(1/2)\Pi\{\Theta_{}(\sqrt{n\varepsilon^2})\}e^{-2n\varepsilon^2}$, $$P^{(n)}_0\left\{M_{n}(\Theta_{})\le b_n\right\}\le 2/n\varepsilon^2.$$
	\item[(iii)] For $\varepsilon>0$, $c>0$, and  $\Pi\{\Theta({\sqrt{n\varepsilon^2}})\}\gtrsim \exp\{-2c n\varepsilon^2\}$.
	\item[(iv)] For any $\theta,\theta'\in\Theta$, if $\theta\ne\theta'$, then $ f(z\mid\theta)\ne f(z\mid\theta')$ with positive probability.
\end{enumerate}
\end{assumption}

\subsection{Proofs of Main Results: Gross Misspecification}\label{app:proofsG}
\begin{proof}[Proof of Lemma \ref{lemma:gross_miss}]
We prove the two cases separately, starting with the cut posterior. 

\medskip 

\noindent\textbf{Part 1: Cut posterior.} 

\medskip 

\noindent 	For $\varepsilon>0$, recall $\Theta_1(\varepsilon):=\{\theta\in\Theta:\|\theta_1-\theta_{1,0}\|\le\varepsilon \}$, and consider 
\begin{flalign*}
	\Pi_\cut\{\Theta_{1}(\varepsilon)^{c}\mid \z\}&=\int_{\Theta_1(\varepsilon)^c}\frac{\exp\{\ell_p(\theta_1)-\ell_p(\theta_{1,0})\}\pi(\theta_1)}{\int_{\Theta_1}\exp\{\ell_p(\theta_1)-\ell_p(\theta_{1,0})\}\pi(\theta_1)\dt\theta_1}=\frac{M_{1,n}\{\Theta_1(\varepsilon)^c\}}{M_{1,n}\{\Theta_1\}}.
\end{flalign*}	
Apply Assumption \ref{ass:cut_gross}(ii) to see that, with probability at least $2/(\sqrt{n\varepsilon^2})^2$, 
\begin{flalign}
	\Pi_\cut\{\Theta_{1}(\varepsilon)^{c}\mid \z\}&\le\frac{M_{1,n}\{\Theta_{1}(\varepsilon)^{c}\}}{b_n}
	\nonumber  \\&\le 2 \frac{e^{2n\epsilon^2}}{\Pi\{\Theta_{1}(\sqrt{n\varepsilon^2})\}}M_{1,n}\{\Theta_1({\varepsilon})^{c}\}\nonumber\\&\lesssim  {e^{4n\epsilon^2}}M_{1,n}\{\Theta_1({\varepsilon})^{c}\}\label{eq:brackets},
\end{flalign}where the last term follows by Assumption \ref{ass:cut_gross}(iii).

Focus on the term in brackets in \eqref{eq:brackets}. Since $\Theta_{1}(\varepsilon)$ is bounded for any finite  $n$, the log-likelihood ratio $\ell_p(\theta_1)-\ell_p(\theta_{1,0})$ is also bounded for any $n$, and so by Assumption \ref{ass:cut_gross}(i),
\begin{flalign*}
	M_{1,n}\{\Theta_{1}(\varepsilon)^{c}\}&=\int \mathrm{1}\left\{\Theta_{1}(\varepsilon)^{c}\right\}\exp\{\ell_p(\theta_1)-\ell_p(\theta_{1,0})\}\pi(\theta_1)\dt\theta_1\\&\le \exp\{-nK\varepsilon^2\}\Pi\{\Theta_{1}(\varepsilon)^{c}\}\\&\le \exp\{-nK\varepsilon^2\},
\end{flalign*}with probability converging to one (since $\Pi(A)\le 1$ for all $A\subseteq\Theta$). 

Placing this bound into equation \eqref{eq:brackets}, and taking $K=8c$, we obtain 
\begin{flalign*}
	\Pi_\cut\{\theta\in \Theta_{1}(\varepsilon)^{c}\mid \z\}&\lesssim  {e^{4n\epsilon^2}}\left[M_{1,n}\{\Theta_1({\varepsilon})^{c}\}\right]\\&\lesssim \exp\{c4n\epsilon^2-cKn\varepsilon^2\}\\&=\exp\{-4cn\epsilon^2\}.
\end{flalign*}For any $\varepsilon\le\log(n)/\sqrt{n}$, the stated result follows. 

\medskip

\noindent\textbf{Part 2: Exact posterior.} Repeating similar arguments to those above, but for the set $\Theta(\varepsilon):=\{\theta\in\Theta:\|\theta-\theta_\star\|\le\varepsilon\}$, proves that, with probability converging to one 
$$
\Pi_{\full}\{\theta\in \Theta_\star(\varepsilon)\mid \z\}\gtrsim 1-\exp\{-c4n\varepsilon^2\}.
$$However, defining $\Theta_{1,\star}(\varepsilon):=\{\theta_1\in\Theta:\|\theta_1-\theta_{1,\star}\|\le\varepsilon\}$, since $\Theta(\varepsilon)\subset \Theta_{1,\star}(\varepsilon)\times\Theta_2$, it follows that
\begin{flalign*}
	1-C\exp\{-c4n\varepsilon^2\}\le \Pi_{\full}\{\theta\in 
	\Theta(\varepsilon)\mid \z\}&=\int_{ \Theta(\varepsilon)}\pi_{\full}(\theta_1,\theta_2\mid \z)\dt\theta_2\dt\theta_1\\&\le  \int_{ \Theta_{1,\star}(\varepsilon)}\int_{\Theta_2}\pi_{\full}(\theta_1,\theta_2\mid \z)\dt\theta_2\dt\theta_1\\&\le\int_{ \Theta_{1,\varepsilon}^\star} \pi_{\full}(\theta_1\mid \z)\dt\theta_1.
\end{flalign*}Hence, with probability converging to 1, $\int_{ \Theta_{1,\star}(\varepsilon)} \pi_{\full}(\theta_1\mid \z)\dt\theta_1\rightarrow 1$. Since $\theta_{1,0}\ne\theta_{1,\star}$ under Assumption \ref{ass:missgross}, there exists some $\varepsilon>0$, such that $\theta_{1,0}\not\in \Theta_{1,\star}(\varepsilon)$, so that for any $\widetilde\varepsilon\le\varepsilon$, $\int_{ \Theta_{1}(\widetilde\varepsilon)} \pi(\theta_1\mid \z)\dt\theta_1\rightarrow 0$ in probability. 
\end{proof}

\begin{proof}[Proof of Corollary \ref{corr:gross}]
Write the SMP as 
\begin{flalign}
	\pi_\omega(\theta\mid\z)&=\pi_{\cut}(\theta\mid \z)+\omega\{\pi_{\full}(\theta\mid\z)-\pi_{\cut}(\theta\mid\z)\}\nonumber\\&=\pi_{\cut}(\theta\mid \z)+\omega\{\pi_{\full}(\theta_1\mid \z)-\pi_{\cut}(\theta_1\mid \z)\}\pi(\theta_2\mid \theta_1, \z)\label{eq:eq_cut}
\end{flalign}where $\pi_{\full}(\theta_1\mid\z)=\int_{\Theta_2}\pi_{\full}(\theta\mid\z)\dt\theta_2$. Apply \eqref{eq:eq_cut}, and Fubini,  to obtain 
\begin{flalign*}
	\int_\Theta|\pi_{\widehat\omega_+}(\theta\mid \z)-\pi_{\cut}(\theta\mid \z)|\dt\theta
	&=\int_\Theta|\widehat\omega_+\{\pi_{\full}(\theta_1\mid \z)-\pi_{\cut}(\theta_1\mid \z)\}\pi(\theta_2\mid\theta_1, \z)|\dt\theta\\&=\int_{\Theta_1}\int_{\Theta_2}|\widehat\omega_+\{\pi_{\full}(\theta_1\mid \z)-\pi_{\cut}(\theta_1\mid \z)\}|\pi(\theta_2\mid\theta_1, \z)\dt\theta\\&=\widehat\omega_+\int_{\Theta_1}|\pi_{\full}(\theta_1\mid \z)-\pi_{\cut}(\theta_1\mid \z)|\dt\theta_1.
\end{flalign*}We can then write
\begin{flalign*}
	\int_\Theta|\pi_{\widehat\omega_+}(\theta\mid \z)-\pi_{\cut}(\theta\mid \z)|\dt\theta \le \widehat\omega_+\left[\int_{\Theta_1}\pi_{\full}(\theta_1\mid \z)\dt\theta_1+\int_{\Theta_1}\pi_\cut(\theta_1\mid \z)\dt\theta_1\right]=2\widehat\omega_+.
\end{flalign*}The stated result now follows if $\widehat\omega_+=o_p(1)$ as $n\rightarrow+\infty$.

To show this, let $X_{n,\cut}:=\sqrt{n}(\overline\theta_{\cut}-\theta_{0})$, $X_{n,\full}:=\sqrt{n}(\overline\theta_{\full}-\theta_{\star})$, $X_n:=X_{\full,n}-X_{\cut,n}$ and $Y_n=\sqrt{n}(\theta_{0}-\theta_{\star})$. Then, for ${\Upsilon_n}=\Upsilon(\overline\theta_\cut)$,
$$
\widehat{\omega}=\frac{\gamma_n}{n\|\overline\theta_{\cut}-\overline\theta_{\full}\|^2_{\Upsilon_n}}
=\frac{\gamma_n}{\|X_{n.\cut}-X_{n,\full}+Y_n\|^2_{\Upsilon_n}}=\frac{\gamma_n}{\|Y_n-X_n\|^2_{\Upsilon_n}}.
$$ By the reverse triangle inequality
\begin{flalign*}
	\widehat\omega \le \frac{\gamma_n}{|\|X_n\|_{\Upsilon_n}^2-\|Y_n\|_{\Upsilon_n}^2|}.
\end{flalign*}By the hypothesis of the result, $\|X_{n}\|=O_p(1)$, while under Assumption \ref{ass:missgross}, $\theta_{1,0}\ne\theta_{1,\star}$, so that $\|Y_n\|\rightarrow+\infty$ as $n\rightarrow+\infty$, and the stated result follows. 
\end{proof}

\section{Local Misspecification: Assumption \ref{ass:miss}}

\subsection{Regularity Conditions: Local Misspecification}\label{app:A}

\setcounter{assumption}{0}
\renewcommand{\theassumption}{\Alph{section}\arabic{assumption}}
Before stating the regularity conditions we maintain in this section, we recall several notations previously defined in the main text. Let $\ell(\theta)=\log f(\z\mid\theta)$, and denote the joint log-likelihood for the $i$-th observation as $\ell(z_i\mid\theta)=\log f(z_i\mid\theta)$.  Denote the full derivative of the log-likelihood as $\dot\ell(\theta):=\partial\ell(\theta)/\partial\theta$, and denote the second derivative as $\ddot\ell(\theta):=\partial^2\ell(\theta)/\partial\theta\partial\theta^\top$. For $j,k\in\{1,2\}$, define the partial derivatives $\dot\ell_{(j)}(\theta)=\partial \ell(\theta)/\partial\theta_j$, and the second partial derivatives $\ddot\ell_{(jk)}(\theta)=\partial^2 \ell(\theta)/\partial\theta_j\partial\theta_k^\top$. For a function $g:\mathcal{Z}\rightarrow \mathbb{R}^{d}$, let $\E_{n}[g(z)]$ denote the expectation of $g(z)$ under $h(z\mid \theta_0,\delta_n)$ in Assumption \ref{ass:miss}; i.e., $\E_n[g(z)]=\int_{\mathcal{Z}}g(z)h(z\mid\theta_0,\delta_n)\dt \mu(z)$. Define the matrices $\mathcal{I}_{}:=-\lim_nn^{-1}\E_{n}[\ddot\ell_{}(\theta_0)]$, and $\mathcal{I}_{jk}:=-\lim_nn^{-1}\E_{n}[\ddot\ell_{(jk)}(\theta_0)]$. Recall that $\eta=(\eta_1^\top,\eta_2^\top)^\top$ in Assumption \ref{ass:miss} is partitioned conformably
with $\theta=(\theta_1^\top,\theta_2^\top)^\top$.

In addition,  note that
$$\ell(\theta)=\log f_1(z_{1:n}\mid\theta_1)+\log f_2(z_{1:n}\mid\theta)=\ell_p(\theta_1)+\ell_c(\theta),$$ 
where $\ell_p(\theta_1):=\log f_1(z_{1:n}\mid\theta_1)$ signifies the `partial log-likelihood' term, and $\ell_c(\theta):=\log f_2(z_{1:n}\mid\theta_1,\theta_2)$ signifies the log-likelihood term that is used in cut inference to construct the conditional posterior for $\theta_2$ given $\theta_1$. Define the partial derivatives of $\ell_p(\theta)$ as $\dot\ell_{p(1)}(\theta_1)=\partial \ell_{p}(\theta_1)/\partial\theta_1$, $\ddot\ell_{p(11)}(\theta_1)=\partial^2 \ell_{p(11)}(\theta_1)/\partial\theta_j\partial\theta_k^\top$, and recall  $\mathcal{I}_{p(11)}:=-\lim_nn^{-1}\E_{n}[\ddot\ell_{p(11)}(\theta_{1,0})]$. For $\ell_c(\theta)$ and $j,k\in\{1,2\}$, define $\dot\ell_{c(jk)}(\theta):=\partial \ell_{c(j)}(\theta)/\partial\theta_j$ and  $\ddot\ell_{c(jk)}(\theta):=\partial^2 \ell_{c(jk)}(\theta)/\partial\theta_j\partial\theta_k^\top$. From the structure of the log-likelihood $\ell(\theta)$, note that
\begin{flalign*}
\mathcal{I}_{12}&:=-\lim_{n\rightarrow\infty}n^{-1}\E_{n}[\ddot\ell_{(12)}(\theta_0)]=-\lim_{n\rightarrow\infty}n^{-1}\E_{n}[\ddot\ell_{c(12)}(\theta_0)],\\ \mathcal{I}_{21}&:=-\lim_{n\rightarrow\infty}n^{-1}\E_{n}[\ddot\ell_{(21)}(\theta_0)]=-\lim_{n\rightarrow\infty}n^{-1}\E_{n}[\ddot\ell_{c(21)}(\theta_0)],\\ \mathcal{I}_{22}&:=-\lim_{n\rightarrow\infty}n^{-1}\E_{n}[\ddot\ell_{(22)}(\theta_0)]=-\lim_{n\rightarrow\infty}n^{-1}\E_{n}[\ddot\ell_{c(22)}(\theta_0)].
\end{flalign*}

To formalize the impact of the misspecification in Assumption \ref{ass:miss}, we impose the following regularity conditions on the  density $h(z\mid \theta,\delta_n)$.\footnote{We eschew measurability conditions and assume that all objects written are measurable.}
\begin{assumption}\label{ass:regular}
For $\upsilon:=(\theta^\top,\psi^\top)^\top$, let $\upsilon_0:=(\theta_0^\top,\psi_0^\top)^\top$, be elements of the interior of $\Theta\times\Delta$, where $\Delta\subset\mathbb{R}^{d}$ and compact. The function $h_n(z\mid \upsilon)=f_1(z\mid\theta)f_2(z\mid\theta)\{1+\psi^\top\zeta(z)/\sqrt{n}\}$ is twice continuously differentiable in $v$ for almost all $z\in\mathcal{Z}$. There exist positive functions $a(z)$, $b(z)$ such that for all $z\in\mathcal{Z}$, except on sets of measure zero, $\ell_n(z\mid\upsilon)=\log h_n(z\mid\upsilon)$, satisfies the following: $\exp\ell_n(z\mid\upsilon)\le a(z)$, and  for all $\|\upsilon-\upsilon_0\|\le \nu_0/\sqrt{n}$,  and some $\nu_0>0$, each of the following $|\ell_n(z\mid\upsilon)|$, $\|\dot \ell_n(z\mid\upsilon)\|^2$, $\|\ddot\ell_n(z\mid\upsilon)\|^2$ are less than $b(z)$.  Further, $\E_n[a(z)],\E_n[b(z)], \E_n[a(z)b(z)]<+\infty$, and the set $\{z\in\mathcal{Z}:h_n(z\mid \upsilon)>0\}$ does not depend on $\upsilon$.
\end{assumption}
In addition, we maintain the following assumptions about the assumed model $f(z\mid \theta)$.

\begin{assumption}\label{ass:regular2}
For any $\theta,\theta'\in\Theta$, if $\theta\ne\theta'$, then $\ell(\theta)\ne\ell(\theta')$ with positive probability. 
\end{assumption}

\begin{remark}\normalfont
Assumption \ref{ass:regular} is similar to the regularity conditions employed by \cite{claeskens2003focused} to deduce  large sample theory for frequentist model averaging estimators.  Assumption \ref{ass:regular2} is a classical identification condition.
\end{remark}
\numberwithin{lemma}{section}
\numberwithin{corollary}{section}
\numberwithin{theorem}{section}

\subsection{Preliminary Results}\label{sec:prelim_results}

The following intermediate results are used to state and prove our main results. 
\begin{lemma}\label{lem:matrix_equal1}
If Assumptions \ref{ass:miss}, \ref{ass:regular} and \ref{ass:regular2} are satisfied, then the following results are satisfied.
\begin{enumerate}
	\item $\lim_n n^{-1}\E_{n}[-\ddot\ell_{p(11)}(\theta_{1,0})]=\lim_n n^{-1}\E_{n}[\dot\ell_{p(1)}(\theta_{1,0})\dot\ell_{p(1)}(\theta_{1,0})^\top ]$.
	\item $\lim_n n^{-1}\E_{n}[-\ddot\ell_{c(11)}(\theta_{0})]=\lim_n n^{-1}\E_{n}[\dot\ell_{ c(1)}(\theta_{0})\dot\ell_{c(1)}(\theta_{0})^\top]$.
	\item $\lim_n n^{-1}\E_{n}[\dot\ell_{p(1)}(\theta_{1,0})\dot\ell_{c(1)}(\theta_0)^\top]=0$.
	\item $\lim_n n^{-1}\E_{n}[\dot\ell_{p(1)}(\theta_{1,0})\dot\ell_{c(2)}(\theta_{0})^\top]=0$.
\end{enumerate}
\end{lemma}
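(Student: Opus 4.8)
The plan is to read parts~1 and~2 as Bartlett (information-matrix) identities for the correctly specified sub-models, and parts~3 and~4 as score-orthogonality relations coming from the marginal--conditional factorisation of the true data density; the only genuinely delicate point is absorbing the $O(1/\sqrt n)$ perturbation created by $\delta_n=\delta_0+\psi/\sqrt n$. First I would reduce everything to single-increment expectations: under Assumption~\ref{ass:miss} the $Z_{i,n}$ are iid for fixed $n$, so $\ell_p,\ell_c$ and their derivatives are sums of $n$ iid terms, whence $n^{-1}\E_n[-\ddot\ell_{p(11)}(\theta_{1,0})]=\E_n[-\ddot\ell_{p(11),i}(\theta_{1,0})]$ and $n^{-1}\E_n[\dot\ell_{p(1)}(\theta_{1,0})\dot\ell_{p(1)}(\theta_{1,0})^\top]=\E_n[\dot\ell_{p(1),i}(\theta_{1,0})\dot\ell_{p(1),i}(\theta_{1,0})^\top]+(n-1)\,\E_n[\dot\ell_{p(1),i}(\theta_{1,0})]\E_n[\dot\ell_{p(1),i}(\theta_{1,0})]^\top$, with the obvious analogues for parts~2--4. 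It therefore suffices to establish (a)~$\E_n[\dot\ell_{p(1),i}(\theta_{1,0})]=0$; (b)~the single-increment versions of the two sides of part~1 agree exactly and those of part~2 agree in the limit; and (c)~$\E_n[\dot\ell_{p(1),i}(\theta_{1,0})\dot\ell_{c(j),i}(\theta_0)^\top]\to0$ for $j=1,2$; note that (a) already kills the $(n-1)$ cross-increment term above.

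For parts~1 and~2 I would invoke the structure described in Section~\ref{sec:formal} and Appendix~\ref{app:A}: $f_1(Z\mid\theta_1)$ acts as a density in the first-module coordinates (call them $t$) that is correctly specified at $\theta_{1,0}$ — so the $h(\cdot\mid\theta_0,\delta_n)$-marginal of $t$ is exactly $f_1(\cdot\mid\theta_{1,0})$ for every $\delta_n$ — while $f_2(Z\mid\theta)$ is the complementary conditional density. Assumption~\ref{ass:regular} supplies the envelopes $a(z),b(z)$ needed to differentiate $\int f_1(z\mid\theta_1)\,\dt\mu(z)\equiv1$ under the integral at $\theta_{1,0}$; the first derivative gives~(a), and the second gives $\E_n[-\ddot\ell_{p(11),i}(\theta_{1,0})]=\E_n[\dot\ell_{p(1),i}(\theta_{1,0})\dot\ell_{p(1),i}(\theta_{1,0})^\top]$ exactly, which is part~1. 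For part~2 the same computation applies to the conditional density $f_2(\cdot\mid\theta)$ at $\theta_0$, except that the governing law is $f_2(\cdot\mid\theta_0,\delta_n)$ rather than $f_2(\cdot\mid\theta_0,\delta_0)=f_2(\cdot\mid\theta_0)$; writing the expectation as an integral against $f_2(\cdot\mid\theta_0)$ weighted by the ratio $f_2(\cdot\mid\theta_0,\delta_n)/f_2(\cdot\mid\theta_0)$ and letting $\delta_n\to\delta_0$ under the domination of Assumption~\ref{ass:regular}, both sides converge to the corresponding expectations under $f_2(\cdot\mid\theta_0)$, which coincide by the Bartlett identity for that correctly specified conditional density; integrating over the $h$-marginal of $t$ then yields part~2.

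For parts~3 and~4 I would condition on $t$, with respect to which $\dot\ell_{p(1),i}(\theta_{1,0})$ is measurable. By the conditional-density structure, $\E_n[\dot\ell_{c(j),i}(\theta_0)\mid t]=\int \partial_{\theta_j}\log f_2(z\mid\theta)\big|_{\theta_0}\, f_2(z\mid\theta_0,\delta_n)\,\dt\mu(z\mid t)$, and since $\int f_2(z\mid\theta)\,\dt\mu(z\mid t)\equiv1$ its $\theta_j$-derivative at $\theta_0$ is zero, so $\delta_n\to\delta_0$ together with the domination of Assumption~\ref{ass:regular} gives $\E_n[\dot\ell_{c(j),i}(\theta_0)\mid t]\to0$. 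Iterated expectations then give $\E_n[\dot\ell_{p(1),i}(\theta_{1,0})\dot\ell_{c(j),i}(\theta_0)^\top]=\E_n\{\dot\ell_{p(1),i}(\theta_{1,0})\,\E_n[\dot\ell_{c(j),i}(\theta_0)\mid t]^\top\}\to0$, i.e.\ part~3 ($j=1$) and part~4 ($j=2$).

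The \emph{main obstacle} is precisely the interface between the abstract likelihood decomposition $f(Z\mid\theta)\propto f_1(Z\mid\theta_1)f_2(Z\mid\theta_1,\theta_2)$ and these density-type identities: I must verify that, under the conditions of Appendix~\ref{app:A}, $h(\cdot\mid\theta_0,\delta_n)$ genuinely factors as a marginal for $t$ (exactly $f_1(\cdot\mid\theta_{1,0})$) times a conditional $f_2(\cdot\mid\theta_0,\delta_n)$, since it is this factorisation that makes $\E_n[\dot\ell_{p(1),i}(\theta_{1,0})]$ identically zero and the conditional scores of $f_2$ asymptotically mean-zero. The second delicate point is controlling the ratio $f_2(\cdot\mid\theta_0,\delta_n)/f_2(\cdot\mid\theta_0)$ uniformly enough — using the envelope $b(z)$ that bounds $|\ell_i|$, $\|\dot\ell_i\|^2$, $\|\ddot\ell_i\|^2$ on the shrinking $\nu/\sqrt n$-neighbourhoods of $\zeta_0$ — so that the $O(1/\sqrt n)$ bias from $\delta_n\neq\delta_0$ washes out in the limit; everything else reduces to routine Bartlett manipulations and dominated convergence.
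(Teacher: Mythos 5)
Your proposal is essentially correct and runs on the same engine as the paper's proof --- exchange of differentiation and integration justified by the envelopes of Assumption \ref{ass:regular}, followed by continuity in $\delta$ to pass from $\delta_n$ to $\delta_0$ --- but it reaches parts 3 and 4 by a genuinely different route. The paper first proves the two sub-model Bartlett identities (parts 1 and 2) directly, then writes the full information-matrix equality for $\ell=\ell_p+\ell_c$ in block form and obtains $\E_{\theta_0}[\dot\ell_{p(1)}\dot\ell_{c(1)}^\top]=0$ and $\E_{\theta_0}[\dot\ell_{c(2)}\dot\ell_{p(1)}^\top]=0$ by subtracting the sub-model identities from the $(1,1)$ and $(2,1)$ blocks. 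You instead condition on the first-module coordinates and use iterated expectations. Your route is arguably cleaner (the paper's subtraction argument for part 3, read strictly, only forces the symmetrised quantity $\E[\dot\ell_{p(1)}\dot\ell_{c(1)}^\top]+\E[\dot\ell_{c(1)}\dot\ell_{p(1)}^\top]$ to vanish), but it requires the marginal--conditional reading of $f_1,f_2$ that you flag as the main obstacle. That is a fair trade: the paper's own manipulations rest on exactly the same normalisation identities (that $\int f_1(z\mid\theta_1)f_2(z\mid\theta)\,\dt\mu(z)$ is constant in $\theta$ near $\theta_0$, so its $\theta$-derivatives vanish), so you are not assuming more structure than the paper implicitly does.

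One loose end. Your reduction to single-increment expectations is more explicit than the paper's, and for parts 1, 3 and 4 you correctly observe that the $(n-1)\,\E_n[\dot\ell_{p(1),i}]\E_n[\,\cdot\,]^\top$ cross-increment term is annihilated because the first-module score has exactly zero mean under $h(\cdot\mid\theta_0,\delta_n)$ for every $n$. For part 2, however, the analogous term is $(n-1)\,\E_n[\dot\ell_{c(1),i}]\E_n[\dot\ell_{c(1),i}]^\top$, and $\E_n[\dot\ell_{c(1),i}(\theta_0)]$ is only $O(1/\sqrt n)$ under the drifting DGP, not zero; that term therefore converges to $\eta_1\eta_1^\top$ rather than vanishing, and your phrase ``those of part 2 agree in the limit'' does not dispose of it. This is not a defect of your argument relative to the paper --- the paper's proof silently identifies $n^{-1}\E_n[\dot\ell\,\dot\ell^\top]$ with the single-increment expectation, which carries the same issue --- but since you wrote down the exact decomposition you should either adopt the per-increment reading of the statement or note explicitly that part 2 is an identity for the centred second moment.
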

The following Lemma is used to prove the results under the drifting sequences of DGPs constructed in Assumption \ref{ass:miss}. 

\begin{lemma}\label{lem:newey}
Assumptions \ref{ass:miss}, \ref{ass:regular}, and \ref{ass:regular2}  are satisfied. 
Then $\phi(\theta, \delta)=\int_{\mathcal{Z}}\dot\ell(z\mid\theta) h(z\mid\theta,\delta)\dt\mu(z)$ exists and is continuous on $\Theta \times \Delta$, and for all $\varepsilon>0$, and any compact $\tilde\Theta\subset\Theta$,
$$\quad \lim_{n\rightarrow+\infty} \operatorname{Pr}\left(\sup _{\theta\in\tilde\Theta}\left|n^{-1}\dot\ell(\theta)-\phi\left(\theta, \delta_n\right)\right| \geqslant \varepsilon\right)=0.$$
\end{lemma}
The following result is a consequence of Lemmas \ref{lem:matrix_equal1} and  \ref{lem:newey}. 
\begin{lemma}\label{lem:CLTmiss}
If Assumptions \ref{ass:miss}, \ref{ass:regular} and \ref{ass:regular2} are satisfied, then the following results are satisfied.
\begin{enumerate}
	\item  $\lim_n \E_n\left[\dot\ell(\theta_0)/\sqrt{n}\right]=\eta$.
	\item $\dot\ell(\theta_0)/\sqrt{n}\Rightarrow N(\eta,\mathcal{I})$.
	\item $\dot\ell_{p(1)}(\theta_0)/\sqrt{n}\Rightarrow N(0,\mathcal{I}_{p(11)})$.
	\item $\dot\ell_{c(2)}(\theta_0)/\sqrt{n}\Rightarrow N(\eta_2,\mathcal{I}_{(22)}).$
\end{enumerate}
\end{lemma}
The following is a useful extension of Stein's Lemma. 
\begin{lemma}[Lemma 2 of \citealp{hansen2016efficient}]\label{lem:stein}
If $\xi\sim N(0,V)$ is an $m\times1$ vector, and $\Psi$ is $m\times m$ matrix, for $\varphi(X):\mathbb{R}^m\rightarrow\mathbb{R}^m$ continuously differentiable, then for $h\in\mathbb{R}^m$,
\begin{equation}
	\mathbb{E}\left\{\varphi(\xi+{h})^{\top} \Psi\xi\right\}=\mathbb{E} \tr\left\{\frac{\partial}{\partial  {x}} \varphi(\xi+ {h})^{\top} \Psi V\right\}.
\end{equation}
\end{lemma}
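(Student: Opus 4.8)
The plan is to reduce this Stein-type identity to the elementary scalar Stein identity for the multivariate Gaussian and then to reassemble the result as a trace. Write $g(x):=\varphi(x+h)$, which is continuously differentiable with Jacobian equal to that of $\varphi$ shifted by $h$. Expanding the bilinear form componentwise,
\[
\E\bigl\{g(\xi)^\top\Psi\xi\bigr\}=\sum_{i,j}\Psi_{ij}\,\E\bigl[g_i(\xi)\,\xi_j\bigr],
\]
so it suffices to establish, for each pair $(i,j)$, the identity $\E[g_i(\xi)\,\xi_j]=\sum_k V_{jk}\,\E[\partial g_i(\xi)/\partial x_k]$ and then to recognise the resulting triple sum as the stated trace.

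For the scalar identity I would use the score representation of the Gaussian density. Assume first that $V$ is nonsingular, so that $\xi$ has density $\phi_V(x)\propto\exp(-\tfrac12 x^\top V^{-1}x)$, which satisfies $\nabla\phi_V(x)=-V^{-1}x\,\phi_V(x)$, i.e.\ $x\,\phi_V(x)=-V\nabla\phi_V(x)$. Hence
\[
\E\bigl[g_i(\xi)\,\xi_j\bigr]=\int g_i(x)\,x_j\,\phi_V(x)\,\dt x
=-\sum_k V_{jk}\int g_i(x)\,\partial_k\phi_V(x)\,\dt x
=\sum_k V_{jk}\int \partial_k g_i(x)\,\phi_V(x)\,\dt x,
\]
the last equality by integration by parts. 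Substituting back,
\[
\E\bigl\{g(\xi)^\top\Psi\xi\bigr\}=\sum_{i,j,k}\Psi_{ij}V_{jk}\,\E\bigl[\partial_k g_i(\xi)\bigr]
=\sum_{i,k}(\Psi V)_{ik}\,\E\bigl[\partial_k g_i(\xi)\bigr]
=\E\,\tr\!\Bigl\{\tfrac{\partial}{\partial x}\varphi(\xi+h)^\top\,\Psi V\Bigr\},
\]
the final equality being just the definition of the trace under the Jacobian convention used in the statement, together with $g=\varphi(\cdot+h)$.

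The only substantive step is the justification of the integration by parts, namely that the boundary terms vanish and that $\partial_k g_i$ is $\phi_V$-integrable; this is where the continuous differentiability hypothesis, combined with the mild growth conditions under which the lemma is invoked in our applications (there $\varphi$ is a bounded shrinkage-type map, so the relevant integrands are dominated), enters, and it is checked by the standard argument that $g_i(x)\phi_V(x)\to0$ along each coordinate ray. When $V$ is singular the identity follows by applying the nonsingular case with $V+\epsilon I_m$ in place of $V$ and letting $\epsilon\downarrow0$, both sides being continuous in $V$ and the limit being justified by dominated convergence. I expect this regularity bookkeeping to be the only genuine obstacle; the algebra is routine.
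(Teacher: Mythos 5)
The paper offers no proof of this lemma to compare against: it is imported verbatim as Lemma 2 of \cite{hansen2016efficient}, and the appendix simply states it. Your argument supplies the standard derivation of the multivariate Stein identity --- the Gaussian score relation $\nabla\phi_V(x)=-V^{-1}x\,\phi_V(x)$, hence $x\,\phi_V(x)=-V\nabla\phi_V(x)$, coordinatewise integration by parts, and reassembly of the triple sum $\sum_{i,j,k}\Psi_{ij}V_{jk}\,\mathbb{E}[\partial_k g_i(\xi)]$ as the trace --- and it is correct: under the Jacobian convention $[\partial\varphi^\top/\partial x]_{ki}=\partial_k\varphi_i$ one has $\tr\{(\partial\varphi^\top/\partial x)\Psi V\}=\sum_{k,i,j}(\partial_k\varphi_i)\Psi_{ij}V_{jk}$, which matches your expression. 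This is essentially how the identity is proved in the cited source as well, so you have in effect reconstructed the missing external proof rather than found a genuinely different route.

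Two caveats, the first of which you already flag. Continuous differentiability alone does not license the integration by parts: one needs the boundary terms $g_i(x)\phi_V(x)$ to vanish and $\mathbb{E}\|\partial\varphi(\xi+h)/\partial x\|<\infty$, which can fail for fast-growing $C^1$ maps; the lemma implicitly assumes the right-hand expectation exists, and your appeal to the boundedness of the shrinkage maps arising in the applications is the correct way to discharge this. Second --- a defect in how the lemma is \emph{invoked} in the paper rather than in your proof --- the maps actually substituted into it, $\varphi(x)=x/\|x\|^2$ in Lemma \ref{lem:risksimp} and $\varphi(x)=x^\top/(x^\top\Upsilon x)$ in Theorem \ref{thm:ests}, are not continuously differentiable at the origin, so the identity is really being used in Stein's ``almost differentiable'' (weakly differentiable) form, which is legitimate for $d_1>2$ but falls outside the literal hypothesis you prove; your $\epsilon$-regularisation of a singular $V$ does not address that separate issue.
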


To simplify proofs of our results, we use the following block-matrix notations: for $j,k\in\{1,2\}$,  let $\mathbf{0}_{d_{j}\times d_{k}}$ denote a $d_{j}\times d_{k}$ matrix of zeros, and define 
\begin{equation*}
\begin{aligned}
	\Gamma_{1,\cut}&:=(\mathcal{I}_{p(11)}^{-1}:\mathbf{0}_{d_{1}\times d_{1}}:\mathbf{0}_{d_{1}\times d_{2}}),\quad& \Gamma_{1,\full}:=(\mathbf{0}_{d_{1}\times d_{1}}:\mathcal{I}_{11.2}^{-1}:-\mathcal{I}_{11.2}^{-1}\mathcal{I}_{12}\mathcal{I}_{22}^{-1}),\\
	\Gamma_{2,\cut}&:=(-\I_{22}^{-1}\I_{21}\mathcal{I}_{p(11)}^{-1}:\mathbf{O}_{d_2\times d_1}:\I_{22}^{-1}),\quad&\Gamma_{2,\full}:=(\mathbf{O}_{d_2\times d_1}:-\I_{22}^{-1}\I_{21}\mathcal{I}_{11.2}^{-1}:\I_{22}^{-1}).\\			
\end{aligned}
\end{equation*}
In addition, define 
\begin{flalign*}
Z_n:=\frac{1}{\sqrt{n}}\begin{pmatrix}
	\dot\ell_{p(1)}(\theta_{1,0})\\\dot\ell_{p(1)}(\theta_{1,0})+\dot\ell_{c(1)}(\theta_{0})\\\dot\ell_{c(2)}(\theta_{1,0})\end{pmatrix},\quad	
\Gamma_{\cut}:=\begin{pmatrix}
	\Gamma_{1,\cut}\\\Gamma_{2,\cut}
\end{pmatrix},\quad\Gamma_{\full}:=\begin{pmatrix}
	\Gamma_{1,\full}\\\Gamma_{2,\full}
\end{pmatrix},	
\end{flalign*}
where we note that $\Gamma_{\cut}$ and $\Gamma_{\full}$ have dimension $(d_1+d_2)\times(2d_1+d_2)$. Recall
$$
W=(\I_{p(11)}^{-1}-\I_{11.2}^{-1}),\quad \mathcal{M}=\begin{pmatrix}W&-W\I_{12}\I_{22}^{-1}
\\-\I_{22}^{-1}\I_{21}	W&\I_{22}^{-1}\I_{21}W\I_{12}\I_{22}^{-1}
\end{pmatrix}.
$$ 
When no confusion is likely to result, when terms are evaluated at $\theta_0$, we drop their dependence on this value; e.g., for $j,k\in\{1,2\}$ we will often write $\ell_{(jk)}=\ell_{(jk)}(\theta_0)$.

\begin{lemma}\label{lem:matrix_equal2}Under Assumptions \ref{ass:miss}, \ref{ass:regular} and \ref{ass:regular2}, the following are satisfied. 
\begin{enumerate}
	\item $Z_n\Rightarrow\xi+\tau$, where $\tau=(\mathbf{0}_{d_{1}\times1}^\top,\eta^\top)^\top$, and where  $\xi$ is a $2d_{1}+d_{2}$ dimensional normal random variable with mean zero and variance
	\begin{flalign*}
		\Omega&:=\lim_{n\rightarrow+\infty}n^{-1}\E_n[Z_nZ_n^\top]\\&=\lim_{n\rightarrow+\infty}n^{-1}\begin{pmatrix}\E_n[\dot\ell_{p(1)}\dot\ell_{p(1)}^\top]&\E_n[\dot\ell_{p(1)}\dot\ell_{p(1)}^\top]&\mathbf{0}_{d_{1}\times d_{2}}\\\E_n[\dot\ell_{p(1)}\dot\ell_{p(1)}^\top]&\E_n[\{\dot\ell_{p(1)}+\dot\ell_{c(1)}\}\{\dot\ell_{p(1)}+\dot\ell_{c(1)}\}^\top]&\E_n[\dot\ell_{c(1)}\dot\ell_{c(2)}^\top]\\\mathbf{0}_{d_{2}\times d_{1}} &\E_n[\dot\ell_{c(2)}\dot\ell_{c(1)}^\top] &\E_n[\dot\ell_{c(2)}\dot\ell_{c(2)}^\top]
		\end{pmatrix}\\&=\begin{pmatrix}
			\I_{p(11)}&\I_{p(11)}&\mathbf{0}_{d_{1}\times d_{2}}\\\I_{p(11)}&\I_{11}&\I_{12}\\\mathbf{0}_{d_{2}\times d_{1}}&\I_{21}&\I_{22}
		\end{pmatrix}.
	\end{flalign*}
	\item 	$(\Gamma_{\cut}-\Gamma_{\full})\Omega(\Gamma_{\cut}-\Gamma_{\full})^\top=\mathcal{M}$.
	\item $\Gamma_{\cut}\Omega(\Gamma_{\cut}-\Gamma_{\full})^\top=\mathcal{M}$.	
\end{enumerate}	
\end{lemma}

To analyze the behavior of the S-SMP,
we must first deduce the behavior of $\overline{\theta}_{\cut}$ and $\overline{\theta}_{\full}$. To this end, define the statistics 
$$
T_{n,\mathrm{full}}=\Gamma_{\full}Z_n/\sqrt{n}=\begin{pmatrix}
T_{1,\full,n}\\T_{2,\full,n}
\end{pmatrix},\quad T_{n,\mathrm{cut}}=\Gamma_{\cut}Z_n/\sqrt{n}=\begin{pmatrix}
T_{1,\cut,n}\\T_{2,\cut,n}
\end{pmatrix},
$$ which are partitioned conformably with $\theta=(\theta_1^\top,\theta_2^\top)^\top$. Define
\begin{eqnarray*}
&t:=\sqrt{n}(\theta-\theta_0-T_{n,\mathrm{full}})=(t_1^\top,t_2^\top)^\top,\;\;	\mathcal{T}:=\{t=\sqrt{n}(\theta-\theta_0-T_{n,\mathrm{full}}):\theta\in\Theta\},\\&\vartheta:=\sqrt{n}(\theta-\theta_0-T_{n,\mathrm{cut}})=(\vartheta_1^\top,\vartheta_2^\top)^\top,\;\;\mathcal{V}:=\{\vartheta=\sqrt{n}(\theta-\theta_0-T_{n,\mathrm{cut}}):\theta\in\Theta\}.
\end{eqnarray*}
The cut and full posteriors for $\vartheta$ and $t$ are given by
\begin{eqnarray*}
&\pi_{\cut}(\vartheta\mid\z)=\frac{1}{\sqrt{n}^{d_\theta}}\pi_{\cut}\left(\theta_0+\frac{\vartheta}{\sqrt{n}}+{T_{n,\mathrm{cut}}}{}\mid\z\right),\\
&	\pi_{\mathrm{full}}(t\mid\z)=\frac{1}{\sqrt{n}^{d_\theta}}\pi_{\full}\left(\theta_0+\frac{t}{\sqrt{n}}+T_{n,\mathrm{full}}\mid\z\right).
\end{eqnarray*}

\begin{theorem}\label{thm:exact}
If Assumption \ref{ass:miss} in the main text, and Assumptions \ref{ass:regular} and \ref{ass:regular2} are satisfied, then $\int_{\mathcal{T}}|\pi_{\mathrm{full}}(t\mid\z)-N(t;0,\mathcal{I}^{-1})|\dt t=o_p(1)$ and $\sqrt{n}(\overline\theta_{\mathrm{full}}-\theta_0)\Rightarrow N(\mathcal{I}^{-1}\eta,\mathcal{I}^{-1})$.
\end{theorem}

\begin{theorem}\label{thm:cut}
If Assumption \ref{ass:miss} in the main text, and Assumptions \ref{ass:regular} and \ref{ass:regular2} are satisfied, then
\begin{flalign*}
&\int_{\mathcal{V}_1}|\pi_{\mathrm{cut}}(\vartheta_1\mid\z)-N(\vartheta_1;0,\mathcal{I}_{p(11)}^{-1})|\dt\vartheta_1=o_p(1),\\	&\int_{\mathcal{V}_2}|\pi_{\mathrm{cut}}(\vartheta_2\mid\z)-N(\vartheta_2;0,\mathcal{I}^{-1}_{22}+\mathcal{I}^{-1}_{22}\mathcal{I}_{21}\mathcal{I}_{p(11)}^{-1}\mathcal{I}_{12}\mathcal{I}_{22}^{-1})|\dt\vartheta_2=o_p(1).	
\end{flalign*}				
In addition, 
\begin{flalign*}
&\sqrt{n}(\overline\theta_{1,\mathrm{cut}}-\theta_{1,0})\Rightarrow N(0,\mathcal{I}_{p(11)}^{-1}),\\	
&\sqrt{n}(\overline\theta_{2,\cut}-\theta_{2,0})\Rightarrow N(\mathcal{I}_{(22)}^{-1}\eta_2,\mathcal{I}^{-1}_{22}+\mathcal{I}^{-1}_{22}\mathcal{I}_{21}\mathcal{I}_{p(11)}^{-1}\mathcal{I}_{12}\mathcal{I}_{22}^{-1}).
\end{flalign*}
\end{theorem}
\begin{corollary}\label{corr:bias_compare}
If Assumption \ref{ass:miss} in the main text, and Assumptions \ref{ass:regular} and \ref{ass:regular2} are satisfied, then, for $j\in\{1,2\}$, 
$$
\lim_{n\rightarrow+\infty}\left([\E_{n}\left\{\sqrt{n}(\overline{\theta}_{j,\full}-\theta_{j,0})\right\}]^2-[\E_{n}\left\{\sqrt{n}(\overline{\theta}_{j,\cut}-\theta_{j,0})\right\}]^2\right)\ge 0.
$$	
\end{corollary}

\begin{lemma}\label{lem:bias}If Assumptions \ref{ass:miss}, and \ref{ass:regular}-\ref{ass:regular2} are satisfied, then 
\begin{enumerate}
\item $
\sqrt{n}\begin{pmatrix}
	\overline\theta_{\cut}-\theta_{0}	
	\\	\overline\theta_{\mathrm{full}}-\theta_{0}
\end{pmatrix}=\begin{pmatrix}
	\Gamma_{\cut}\\\Gamma_{\full} 
\end{pmatrix}Z_n+o_p(1)\Rightarrow \begin{pmatrix}\Gamma_{\cut}\\\Gamma_{\full}
\end{pmatrix}(\xi+\tau)
$.
\item $\widehat\omega_+\Rightarrow\overline\omega:=\min\left\{1,\frac{\gamma}{(\xi+\tau)^\top(\Gamma_{\cut}-\Gamma_{\full})^\top\Upsilon(\Gamma_{\cut}-\Gamma_{\full})(\xi+\tau)}\right\}	$, where $\Upsilon=\partial^2 q(\theta,\theta_0)/\partial\theta\partial\theta^\top|_{\theta=\theta_0}$.
\item $\sqrt{n}\{\overline\theta(\widehat{\omega}_{+})-\theta_{0}\}\Rightarrow \Gamma_{\cut}(\xi+\tau)-\overline\omega\{(\Gamma_{\cut}-\Gamma_{\full})(\xi+\tau)\}$.
\end{enumerate}	
\end{lemma}

\begin{remark}\normalfont
We do not explicitly consider that the cut and full posteriors are computed from samples of different sizes; e.g., $n_2$ for the exact and $n_1$ for the cut. While useful, this difference in sample sizes will not have a significant impact on the resulting behavior of the SMP so long as $n_1/n_2\rightarrow \alpha\in(0,\infty)$. If one wishes to impose such a condition, the only result will be a slight change in the definition of the matrix $\Omega$ in Lemma \ref{lem:matrix_equal2} to account for the fact that $\lim_nn_1/n_2\ne1$. As such, all results presented herein can be extended to this case at the cost of minor additional technicalities. To see this, let $n_2$ be the larger sample size associated with the full posterior, and $n_1$ the smaller sample size associated with the cut posterior, which satisfies $\lim_nn_1/n_2= \alpha$ for some $\alpha<1$. Then, our results go through with $n=n_2$ since 
\begin{flalign*}
\sqrt{n_2}(\overline\theta_{\cut}-\theta_{0})&=\alpha^{-1/2}\sqrt{n_1}(\overline\theta_{\cut}-\theta_{0})+o_p\left\{\left(\frac{1}{n_1/n_2}-\frac{1}{\alpha}\right)\|\sqrt{n_1}(\overline\theta_{\cut}-\theta_{0})\|\right\}\\
&=\alpha^{-1/2}\sqrt{n_1}(\overline\theta_{\cut}-\theta_{0})+o_p(1),
\end{flalign*}where the second equality follows from Theorem \ref{thm:cut} (when $\pi_\cut(\theta\mid \z)$ is based on $n_1$ observations).
\end{remark}

\subsection{Proofs of Main Results}\label{app:proofs}

Recall the definitions $\I_{11.2}=\I_{11}-\I_{12}\I_{22}^{-1}\I_{21}$, and 
\begin{flalign}
W=\I_{p(11)}^{-1}-\I_{11.2}^{-1},\quad \mathcal{M}=\begin{pmatrix}
W&-W\I_{12}\I_{22}^{-1}\\-\I_{22}^{-1}\I_{21}W&=\I_{22}^{-1}\I_{21}W\I_{12}\I_{22}^{-1}
\end{pmatrix}.\label{eq:m_eq}
\end{flalign}
To prove Theorem \ref{corr:bound} in the main text, we first prove the following result. 
\begin{theorem}\label{thm:ests}
Consider that Assumptions \ref{ass:miss}-\ref{ass:loss}, and the regularity conditions in Assumptions \ref{ass:regular}-\ref{ass:regular2} are satisfied. 	If $\tr \Upsilon\mathcal{M}\ge2\|\Upsilon\mathcal{M}\|$, and $0\le\gamma\le 2(\tr \Upsilon\mathcal{M}-2\|\Upsilon\mathcal{M}\|)$, then
$$
\mathrm{R}_{q}(\pi_{\widehat\omega_{+}},\theta_{0})\le \mathrm{R}_{q}(\pi_{\cut},\theta_{0})-\E_{}\left[\frac{\gamma\{2(\tr \Upsilon\mathcal{M}-2\|\Upsilon\mathcal{M}\|)-\gamma
\}}{(\xi+\tau)^\top(\Gamma_{\cut}-\Gamma_{\full})^\top{\Upsilon_{}}(\Gamma_{\cut}-\Gamma_{\full})(\xi+\tau)}\right].
$$
\end{theorem} 
\begin{proof}[Proof of Theorem \ref{thm:ests}.]
For $\Upsilon=[\partial^2 q(\delta,\theta_{0})/\partial\delta\partial\delta^\top]_{\delta=\theta_{0}}$, and $\|X\|_\Upsilon^2=X^\top\Upsilon X$, following arguments similar to those in Theorem 1 of \cite{rousseau1997asymptotic}, it can be shown that 
\begin{flalign*}
\mathrm{R}_q(\pi_{\cut},\theta_0)&:=	\lim_{\nu\rightarrow+\infty}\liminf_{n\rightarrow+\infty}\E_n \min\left\{\int_\Theta nq(\theta,\theta_{0})\pi_{\cut}(\theta\mid  \z)\dt\theta,\nu\right\}\\&=\lim_{\nu\rightarrow+\infty}\liminf_{n\rightarrow+\infty}\E_n\min\left\{nq(\overline\theta_\cut,\theta_0),\nu\right\}\\&=\lim_{\nu\rightarrow+\infty}\liminf_{n\rightarrow+\infty}\E_n\min\left\{\|\sqrt{n}(\overline{\theta}_\cut-\theta_0)\|_\Upsilon^2,\nu\right\},
\end{flalign*}and similarly  (under Assumption \ref{ass:miss}), 
\begin{flalign*}
\mathrm{R}_q(\pi_{\mathrm{full}},\theta_0)&:=	\lim_{\nu\rightarrow+\infty}\liminf_{n\rightarrow+\infty}\E_n \min\left\{\int_\Theta nq(\theta,\theta_{0})\pi_{\mathrm{full}}(\theta\mid  \z)\dt\theta,\nu\right\}\\&=\lim_{\nu\rightarrow+\infty}\liminf_{n\rightarrow+\infty}\E_n\min\left\{nq(\overline\theta_{\mathrm{full}},\theta_0),\nu\right\}\\&=\lim_{\nu\rightarrow+\infty}\liminf_{n\rightarrow+\infty}\E_n\min\left\{\|\sqrt{n}(\overline{\theta}_{\mathrm{full}}-\theta_0)\|_\Upsilon^2,\nu\right\},
\end{flalign*} which together imply that 
$$
\mathrm{R}_q(\pi_\omega,\theta_{0})=\lim_{\nu\rightarrow+\infty}\lim_{n\rightarrow+\infty}\E_{n}\min\{\|\sqrt{n}\{\bar\theta_{}(\omega)-\theta_{0}\}\|^2_\Upsilon,\nu\} .
$$

Write 
\begin{flalign*}
\sqrt{n}\{\bar\theta(\widehat\omega_+)-\theta_{0}\}&=\sqrt{n}\{(1-\widehat\omega_+)\overline\theta_{\cut}+\widehat\omega_+\overline\theta_{\mathrm{full}}-\theta_{0}\}
\\&=\sqrt{n}(\bar\theta_{\cut}-\theta_{0})-\widehat\omega_{+}\{\sqrt{n}(\bar\theta_{\cut}-\theta_{0})-\sqrt{n}(\bar\theta_{\mathrm{full}}-\theta_{0})\}
\end{flalign*}
By Lemma \ref{lem:bias}, 
\begin{flalign*}
\sqrt{n}\{\bar\theta(\widehat\omega_{+})-\theta_{0}\}& \Rightarrow \Psi:= \Gamma_{\cut}(\xi+\tau)-\overline{\omega}(\Gamma_{\cut}-\Gamma_{\full}) (\xi+\tau),
\end{flalign*}where $\xi\sim N(0,\Omega)$,  and $\tau=(0^\top,\eta^\top)^\top$ are defined in Lemma \ref{lem:matrix_equal2}, and where 
$$
\overline{\omega}=\frac{\gamma}{{(\xi+\tau)^\top P (\xi+\tau)}},\quad P=(\Gamma_{\cut}-\Gamma_{\full})^\top{\Upsilon_{}}(\Gamma_{\cut}-\Gamma_{\full}).
$$
For $\mathcal{Q}_n:=\|\sqrt{n}\{\bar\theta(\widehat\omega_{+})-\theta_{0}\}\|^2_{\Upsilon_{}}$, and $\nu\ge0$,  let $$\Psi_{n,\nu}=\sqrt{n}\{\bar\theta(\widehat\omega_{+})-\theta_{0}\}\cdot\1[\mathcal{Q}_{n}\le\nu]+\nu\cdot \1[\mathcal{Q}_n>\nu].$$ By Theorem 1.8.8 of \cite{lehmann2006theory}, 
$$
\liminf_{n\rightarrow\infty}\E\left[\|\Psi_{n,\nu}\|_{{\Upsilon_{}}}^2\right]=\E\left[\|\Psi\|_{{\Upsilon_{}}}^2\1(\|\Psi\|_{\Upsilon_{}}^2\le \nu)\right]+\nu^2\text{Pr}(\|{\Upsilon_{}}\|^2_{\Upsilon_{}}>\nu).
$$For $\nu\rightarrow\infty$, the RHS of the above converges to $\E\left[\Psi^\top {\Upsilon_{}}\Psi\right]$, and we have that $$		\mathrm{R}_q(\pi_{\widehat\omega_{+}},\theta_{0})=\E\left[\Psi^\top {\Upsilon_{}}\Psi\right].$$ Expand $\mathrm{R}_q(\pi_{\widehat\omega_{+}},\theta_{0})$: 
\begin{flalign*}
\mathrm{R}_q(\pi_{\widehat\omega_{+}},\theta_{0})=&\E\left[\Psi^\top {\Upsilon_{}}\Psi\right]\nonumber\\=&\E\left[(\xi+\tau)^\top \Gamma_{\cut}^\top{\Upsilon_{}} \Gamma_{\cut}(\xi+\tau)\right]+\gamma^2\E\frac{(\xi+\tau)^\top P (\xi+\tau)}{[(\xi+\tau)^\top P (\xi+\tau)]^2}\\&-2\gamma\E\frac{(\xi+\tau)^\top (\Gamma_{\cut}-\Gamma_{\full})^\top{\Upsilon_{}} \Gamma_{\cut}(\xi+\tau)}{(\xi+\tau)^\top P (\xi+\tau)}\nonumber\\=&\E\left[(\xi+\tau)^\top \Gamma_{\cut}^\top{\Upsilon_{}} \Gamma_{\cut}(\xi+\tau)\right]+\gamma^2\E\frac{1}{[(\xi+\tau)^\top P (\xi+\tau)]}\\&-2\gamma\E\frac{(\xi+\tau)^\top (\Gamma_{\cut}-\Gamma_{\full})^\top{\Upsilon_{}} \Gamma_{\cut}(\xi+\tau)}{(\xi+\tau)^\top P (\xi+\tau)}.
\end{flalign*}From the definitions of $\Gamma_{\cut},\xi,\tau$, 
\begin{equation*}
\E \left[(\xi+\tau)^\top \Gamma_{\cut}^\top{\Upsilon_{}} \Gamma_{\cut}(\xi+\tau)\right]=\mathrm{R}_{q}(\pi_{\cut},\theta_{0}).
\end{equation*}

Let us concentrate our attention on the last term in $\mathrm{R}_0(\theta_0,\pi_{\widehat{\omega}_+})$. Define the mapping $\varphi(x)=x^\top/(x^\top {\Upsilon_{}} x)$ and note that
\begin{flalign*}
\E\frac{(\xi+\tau)^\top (\Gamma_{\cut}-\Gamma_{\full})^\top{\Upsilon_{}} \Gamma_{\cut}(\xi+\tau)}{(\xi+\tau)^\top P (\xi+\tau)}&=\E\{ \varphi(\xi+\tau)(\Gamma_{\cut}-\Gamma_{\full})^\top{\Upsilon_{}}\Gamma_{\cut}(\xi+\tau)\}.
\end{flalign*}
Recall that $\E(\xi)=0$, and $\Omega=\E(\xi\xi^\top)$. Using Lemma \ref{lem:stein}, and for $K:=(\Gamma_{\cut}-\Gamma_{\full})^\top{\Upsilon_{}} \Gamma_{\cut}\Omega$,	we have
\begin{flalign}
\E\frac{(\xi+\tau)^\top (\Gamma_{\cut}-\Gamma_{\full})^\top{\Upsilon_{}} \Gamma_{\cut}(\xi+\tau)}{(\xi+\tau)^\top P (\xi+\tau)}&=\E \tr\{\partial \varphi(x)/\partial x|_{x=X}\}K\nonumber\\&=\E\frac{\tr K}{(\xi+\tau)^\top P (\xi+\tau)}-2\E\frac{\tr P (\xi+\tau)(\xi+\tau)^\top K}{[{(\xi+\tau)^\top P (\xi+\tau)}]^2}\label{eq:stein} .
\end{flalign}		
From the properties of $\tr(\cdot)$,  and Lemma \ref{lem:matrix_equal2},
\begin{flalign*}
\tr K=\tr {\Upsilon_{}} \Gamma_{\cut}\Omega(\Gamma_{\cut}-\Gamma_{\full})^\top =\tr{\Upsilon_{}}\mathcal{M}.
\end{flalign*} For the second term in \eqref{eq:stein}, we have that 
\begin{flalign*}
\tr P (\xi+\tau) (\xi+\tau)^\top K&=	\tr (\xi+\tau)^\top K P(\xi+\tau) \\
&= (\xi+\tau)^\top (\Gamma_{\cut}-\Gamma_{\full})^\top {\Upsilon_{}}\Gamma_{\cut} \Omega(\Gamma_{\cut}-\Gamma_{\full})^\top {\Upsilon_{}} (\Gamma_{\cut}-\Gamma_{\full})(\xi+\tau)\\&=(\xi+\tau)^\top (\Gamma_{\cut}-\Gamma_{\full})^\top {\Upsilon_{}}  \mathcal{M} {\Upsilon_{}} (\Gamma_{\cut}-\Gamma_{\full})(\xi+\tau),
\end{flalign*}	and where the last equality again follows from Lemma \ref{lem:matrix_equal2}. Consequently, for ${\Upsilon_{}}$ positive semi-definite, 
\begin{flalign}
\tr P (\xi+\tau) (\xi+\tau)^\top K&=(\xi+\tau)^\top (\Gamma_{\cut}-\Gamma_{\full})^\top {\Upsilon_{}}  \mathcal{M} {\Upsilon_{}} (\Gamma_{\cut}-\Gamma_{\full})(\xi+\tau)\nonumber\\&\le \|{\Upsilon_{}}^{1/2}\mathcal{M} {\Upsilon_{}}^{1/2}\|\left\{(\xi+\tau)^\top P (\xi+\tau)\right\}\nonumber\\&=\|\Upsilon\mathcal{M}\|\left\{(\xi+\tau)^\top P (\xi+\tau)\right\}\label{eq:bound1},
\end{flalign}where the last inequality holds for any matrix norm $\|\cdot\|$ such that if $A$ and $B$ are positive semi-definite, then $\|AB\|=\|BA\|$; e.g., the Frobenius norm.  Applying \eqref{eq:bound1} into \eqref{eq:stein}, we have 
\begin{flalign}
&-\E\frac{(\xi+\tau)^\top (\Gamma_{\cut}-\Gamma_{\full})^\top{\Upsilon_{}} \Gamma_{\cut}(\xi+\tau)}{(\xi+\tau)^\top P (\xi+\tau)}\nonumber\\&\le -\E_{}[\frac{\tr \Upsilon\mathcal{M}}{(\xi+\tau)^\top P (\xi+\tau)}]+2\|\Upsilon\mathcal{M}\|\E_{}\left[\frac{\left\{(\xi+\tau)^\top P (\xi+\tau)\right\}}{\left\{(\xi+\tau)^\top P (\xi+\tau)\right\}^2}\right]\nonumber	\\&=-\left(\tr \Upsilon\mathcal{M}-2\|\Upsilon\mathcal{M}\|\right)\E_{}\left[\frac{1}{(\xi+\tau)^\top P (\xi+\tau)}\right]\label{eq:bound2}.
\end{flalign}
Applying \eqref{eq:bound2} into the last equation for $\mathrm{R}_0(\theta_0,\pi_{\widehat{\omega}_+})$,  and collecting terms yields 
\begin{flalign}
\mathrm{R}_q(\pi_{\widehat\omega_+},\theta_{0})&\le \mathrm{R}_{q}(\pi_{\cut},\theta_{0})+\gamma^2\E\left[\frac{1}{\{(\xi+\tau)^\top P(\xi+\tau)\}}\right]-2\gamma\E\left[\frac{(\tr \Upsilon\mathcal{M}-2\|\Upsilon\mathcal{M}\|)}{(\xi+\tau)^\top P (\xi+\tau)}\right]\nonumber\\&=
\mathrm{R}_{q}(\pi_{\cut},\theta_{0})-\gamma\E\left[\frac{\{2(\tr \Upsilon\mathcal{M}-2\|\Upsilon\mathcal{M}\|)-\gamma\}}{(\xi+\tau)^\top P (\xi+\tau)}\right]
\label{eq:loss2}.
\end{flalign}
\end{proof}

Theorem \ref{thm:ests} yields the following corollary.
\begin{corollary}\label{corr:bound_gen}
Consider that Assumptions \ref{ass:miss}-\ref{ass:loss}, and the regularity conditions in Assumptions \ref{ass:regular}-\ref{ass:regular2} are satisfied. If $\tr \Upsilon\mathcal{M}>2\|\Upsilon\mathcal{M}\|$, and $0<\gamma< 2(\tr \Upsilon\mathcal{M}-2\|\Upsilon\mathcal{M}\|)$, then for any $\eta$ such that $\|\eta\|<\infty$, 
\begin{flalign}
\mathrm{R}_q(\pi_{\widehat\omega_+},\theta_{0})&\le \mathrm{R}_{q}(\pi_{\cut},\theta_{0})-\frac{\gamma\{2(\tr \Upsilon\mathcal{M}-2\|\Upsilon\mathcal{M}\|)-\gamma
	\}}{\tr {\Upsilon_{}}\mathcal{M}+\tau^\top(\Gamma_{\cut}-\Gamma_{\full})^\top {\Upsilon_{}}(\Gamma_{\cut}-\Gamma_{\full}) \tau}<  \mathrm{R}_{q}(\pi_{\cut},\theta_{0})\label{eq:bound3}.
\end{flalign}	
Furthermore, if $\tau=(0^\top,\eta^\top)^\top$ is such that $\tau^\top(\Gamma_{\cut}-\Gamma_{\full})^\top{\Upsilon_{}}(\Gamma_{\cut}-\Gamma_{\full})\tau\ge\tr\Upsilon\mathcal{M}$, then 
$$
\mathrm{R}_q(\pi_{\widehat\omega_+},\theta_{0})\le \mathrm{R}_q(\pi_{\mathrm{full}},\theta_{0})
$$
\end{corollary}
\begin{proof}[Proof of Corollary \ref{corr:bound_gen}.]Since $\xi$ is Gaussian with mean zero and variance $\Omega$,
\begin{equation}
\E[{(\xi+\tau)^\top P (\xi+\tau)}]=\tau^\top P\tau+\tr P \Omega\label{eq:mom1}.
\end{equation}
From equation \eqref{eq:loss2} in the proof of Theorem \ref{thm:ests}, 
\begin{flalign*}
\mathrm{R}_q(\pi_{\widehat\omega_+},\theta_0)&\le \mathrm{R}_{q}(\pi_{\cut},\theta_{0})-\gamma\E\frac{[2(\tr \Upsilon\mathcal{M}-2\|\Upsilon\mathcal{M}\|)-\gamma]}{[{(\xi+\tau)^\top P (\xi+\tau)}]}\nonumber\\&\le \mathrm{R}_{q}(\pi_{\cut},\theta_{0})-\gamma\frac{[2(\tr \Upsilon\mathcal{M}-2\|\Upsilon\mathcal{M}\|)-\gamma]}{\E[{(\xi+\tau)^\top P (\xi+\tau)}]}\nonumber\\&\le\mathrm{R}_{q}(\pi_{\cut},\theta_{0}) -\gamma\frac{[2(\tr \Upsilon\mathcal{M}-2\|\Upsilon\mathcal{M}\|)-\gamma]}{\tau^\top P \tau+\tr P \Omega}\\&< \mathrm{R}_{q}(\pi_{\cut},\theta_{0}),
\end{flalign*}where the second inequality follows by Jensen's inequality, and the second to last by plugging in the moment of the quadratic form in equation \eqref{eq:mom1}, while the last (strict) inequality follows since  $0<\gamma< 2(\tr \Upsilon\mathcal{M}-2\|\Upsilon\mathcal{M}\|)$.

Since under our hypotheses, $\mathrm{R}_q(\pi_{\widehat\omega_+},\theta_0)<\mathrm{R}_{q}(\pi_{\cut},\theta_{0})$, to prove the second part of the result we need only prove that 
$$
\mathrm{R}_q(\pi_{\mathrm{full}},\theta_0)- \mathrm{R}_q(\pi_{\mathrm{cut}},\theta_0)\ge0.
$$Taking $\omega=0$  and $\omega=1$ in the proof of Theorem \ref{thm:ests}, we see that 
\begin{flalign*}
\mathrm{R}_q(\pi_{\mathrm{cut}},\theta_0)&= \E_n\{(\xi+\tau)^\top\Gamma_{\cut}^\top\Upsilon\Gamma_{\cut}(\xi+\tau)\}=\tau^\top\Gamma_{\cut}^\top{\Upsilon_{}}\Gamma_{\cut}\tau+\tr \Upsilon\Gamma_{\cut}\Omega\Gamma_{\cut}^\top\\&=\|\I_{22}^{-1}\eta_2\|^2_{\Upsilon_{22}}+\tr \Upsilon\Gamma_{\cut}\Omega\Gamma_{\cut}^\top;
\\\mathrm{R}_q(\pi_{\mathrm{full}},\theta_0)&=\E_n\{(\xi+\tau)^\top\Gamma_{\full}^\top\Upsilon\Gamma_{\full}(\xi+\tau)\}=\tau^\top\Gamma_{\full}^\top{\Upsilon_{}}\Gamma_{\full}\tau+\tr \Upsilon\Gamma_{\full}\Omega\Gamma_{\full}^\top\\&=\|\I_{}^{-1}\eta\|^2_{\Upsilon_{22}}+\tr \Upsilon\Gamma_{\full}\Omega\Gamma_{\full}^\top	.	
\end{flalign*}Consequently, 
\begin{flalign*}
\mathrm{R}_q(\pi_{\mathrm{cut}},\theta_0)-\mathrm{R}_q(\pi_{\mathrm{full}},\theta_0)&= \|\I_{22}^{-1}\|_{\Upsilon_{22}}^{2}-\|\I^{-1}\eta\|_{\Upsilon}^{2}+\tr \Upsilon\{\Gamma_{\cut}\Omega\Gamma_{\cut}^\top-\Gamma_{\full}\Omega\Gamma_{\full}^\top\}
\end{flalign*}
From the definitions of $\Gamma_{\cut},\Gamma_{\full}$, $\Omega$, and $\mathcal{M}$ in \eqref{eq:m_eq}, 
\begin{flalign*}
&\Gamma_{\cut}\Omega\Gamma_{\cut}^\top-\Gamma_{\full}\Omega\Gamma_{\full}^\top\\&=\begin{pmatrix}
	\I_{p(11)}^{-1}&-\I_{p(11)}^{-1}\I_{12}\I_{22}^{-1}\\-\I_{22}^{-1}\I_{21}\I_{p(11)}^{-1}&\I_{22}^{-1}+\I_{22}^{-1}\I_{21}\I_{p(11)}^{-1}\I_{12}\I_{22}^{-1}
\end{pmatrix}-\begin{pmatrix}
	\I_{11.2}^{-1}&-\I_{11.2}^{-1}\I_{12}\I_{22}^{-1}\\-\I_{22}^{-1}\I_{21}\I_{11.2}^{-1}&\I_{22}^{-1}+\I_{22}^{-1}\I_{21}\I_{11.2}^{-1}\I_{12}\I_{22}^{-1}
\end{pmatrix}
\\&=\mathcal{M}.
\end{flalign*}
Hence, 
$\mathrm{R}_q(\pi_{\mathrm{cut}},\theta_0)-\mathrm{R}_q(\pi_{\mathrm{full}},\theta_0)= \|\I_{22}^{-1}\|_{\Upsilon_{22}}^{2}-\|\I^{-1}\eta\|_{\Upsilon}^{2}+\tr \Upsilon\mathcal{M}$
and
$$
\mathrm{R}_q(\pi_{\mathrm{cut}},\theta_0)-\mathrm{R}_q(\pi_{\mathrm{full}},\theta_0)\le0 ,
$$ when $\|\I^{-1}\eta\|_{\Upsilon}^{2}\ge \tr \Upsilon\mathcal{M}+\|\I_{22}^{-1}\|_{\Upsilon_{22}}^{2}$, as maintained in the stated result. 
\end{proof}


\begin{proof}[Proof of Theorem \ref{corr:bound} (main text).]
Theorem \ref{corr:bound} is a direct consequence of Corollary \ref{corr:bound_gen}. To see this, note that Corollary \ref{corr:bound_gen} is true pointwise for any $\eta$ such that $\|\eta\|<\infty$. Note that $\mathrm{R}_{q}(\pi_{\cut},\theta_{0})$ and $\mathrm{R}_q(\pi_{\full},\theta_0)$ are convex for all $\eta$, and that the RHS of the bound for $\mathrm{R}_q(\pi_{\widehat{\omega}_{+}},\theta_0)$ in equation \eqref{eq:bound3} is also convex as a function of $\eta$. Hence,  Corollary \ref{corr:bound_gen} holds uniformly, which verifies Theorem \ref{corr:bound}. 
\end{proof}

\begin{proof}[Proof of Theorem \ref{corr:bound2}]Recall the expanded expression for $\mathrm{R}_q(\pi_{\widehat{\omega}_{+}},\theta_0)$ in the proof of Theorem \ref{thm:ests}. Recall equation \eqref{eq:stein} derived in that proof of Theorem \ref{thm:ests}: 
\begin{flalign*}
	\E\frac{(\xi+\tau)^\top (\Gamma_{\cut}-\Gamma_{\full})^\top\Upsilon \Gamma_{\cut}(\xi+\tau)}{(\xi+\tau)^\top P (\xi+\tau)}&=\E\frac{\tr K}{(\xi+\tau)^\top P (\xi+\tau)}-2\E\frac{\tr P (\xi+\tau)(\xi+\tau)^\top K}{[{(\xi+\tau)^\top P (\xi+\tau)}]^2} ,
\end{flalign*}where $K=(\Gamma_{\cut}-\Gamma_{\full})^\top \Upsilon \Gamma_{\cut}\Omega$. Under our choice of loss, we can show that
\begin{flalign*}
	\tr K=\tr \Upsilon\Gamma_{\cut}\Omega(\Gamma_{\cut}-\Gamma_{\full})^\top =\tr\Upsilon\mathcal{M}=d_{}.
\end{flalign*}
For the second term in \eqref{eq:stein}, 
\begin{flalign*}
	\tr P (\xi+\tau) (\xi+\tau)^\top K&=(\xi+\tau)^\top K P(\xi+\tau)\\&=(\xi+\tau)^\top (\Gamma_{\cut}-\Gamma_{\full})^\top \mathcal{M}^{-1} \Gamma_{\cut}\Omega(\Gamma_\cut-\Gamma_{\full})^\top\mathcal{M}^{-1}(\Gamma_{\cut}-\Gamma_{\full})(\xi+\tau)\\&=(\xi+\tau)^\top(\Gamma_{\cut}-\Gamma_{\full})^\top \mathcal{M}^{-1} (\Gamma_{\cut}-\Gamma_{\full}) (\xi+\tau).
\end{flalign*}Applying the above,  the second term in \eqref{eq:stein} becomes
\begin{flalign}
	\frac{\tr P (\xi+\tau)(\xi+\tau)^\top K}{[{(\xi+\tau)^\top P (\xi+\tau)}]^2} &=\frac{(\xi+\tau)^\top (\Gamma_{\cut}-\Gamma_{\full})^\top \mathcal{M}^{-1} (\Gamma_{\cut}-\Gamma_{\full})(\xi+\tau)}{\{(\xi+\tau)^\top (\Gamma_{\cut}-\Gamma_{\full})^\top \mathcal{M}^{-1} (\Gamma_{\cut}-\Gamma_{\full})(\xi+\tau)\}^2}\nonumber\\&=\frac{1}{(\xi+\tau)^\top (\Gamma_{\cut}-\Gamma_{\full})^\top \mathcal{M}^{-1} (\Gamma_{\cut}-\Gamma_{\full})(\xi+\tau)}\label{eq:invchi}.
\end{flalign}

From Lemma \ref{lem:matrix_equal2}, $\text{Cov}[(\Gamma_{\cut}-\Gamma_{\full})(\xi+\tau)]=\mathcal{M}.$ Hence, the inverse of the random variable in \eqref{eq:invchi} is distributed as non-central chi-squared with $\kappa=d$ degrees of freedom, and non-centrality parameter $\lambda=\tau^\top P\tau$, which, for $d>2$, has density function
\begin{equation*}
	f_{\chi}(z):=\exp(-\lambda)\sum_{j=0}^{\infty} \frac{\lambda^{j}}{j !} \frac{z^{\frac{1}{2}(\kappa+2 j)-1} \exp({-\frac{1}{2} z})}{2^{\frac{1}{2}(\kappa+2 j)} \Gamma[(\kappa+2 j) / 2]}.
\end{equation*}
To calculate $\E[Z^{-1}]$, first  we see that 
\begin{flalign*}
	\E[Z^{-1}]=\int z^{-1}f_{\chi}(z)\dt \mu(z)=\exp(-\lambda)\sum_{j=0}^{\infty}\int_0^\infty \frac{\lambda^{j}}{j !} \frac{z^{\frac{1}{2}(\kappa+2 j)-2} \exp({-\frac{1}{2} z})}{2^{\frac{1}{2}(\kappa+2 j)} \Gamma[(\kappa+2 j) / 2]}\dt \mu(z),
\end{flalign*}	and note that, since $\kappa/2>1$, when $d>2$, for all $j\ge0$, 
$$
\int z^{\frac{1}{2}(\kappa+2 j)-r-1}\exp(-z/2)\dt \mu(z)=2^{(\kappa/2+j)-1}\Gamma[(\kappa/2+j)-1], 
$$so that we can rewrite the expectation as 
\begin{flalign*}
	\E[ Z^{-1}] &=2^{-1} \frac{\Gamma(\kappa / 2-1)}{\Gamma(\kappa / 2)} \exp(-\lambda) \sum_{j=0}^{\infty} \frac{(\kappa / 2-1)_{j}}{(\kappa / 2)_{j}} \left(\frac{\lambda^{j}}{j !}\right) \\&=2^{-1} \frac{\Gamma(\kappa / 2-1)}{\Gamma(\kappa / 2)} \exp(-\lambda){ }_{1} F_{1}(\kappa / 2-1 ; \kappa / 2 ; \lambda).
\end{flalign*}Taking $\kappa=d_{}$ then yields the result. 		
\end{proof}

\subsection{Proofs of Preliminary Results in Section \ref{sec:prelim_results}}
When no confusion is likely to result, quantities that are evaluated at $\theta_0$ will have their dependence on $\theta_0$ suppressed; i.e., we write $\mathcal{I}_{p(11)}(\theta_0)=\mathcal{I}_{p(11)}$, etc.

\begin{proof}[Proof of Lemma \ref{lem:matrix_equal1}.]
Note that, under the regularity conditions in Assumption \ref{ass:regular}, we can exchange the order of integration and differentiation. Furthermore, note that for $\delta_n$ as in Assumption \ref{ass:miss} and $z\in\mathcal{Z}$, we have that 
\begin{flalign*}
	\lim_{n\rightarrow+\infty}n^{-1}\E_{n}[\dot\ell(\theta_0)\dot\ell(\theta_0)^\top]&=\lim_{n\rightarrow+\infty}\int_{\mathcal{Z}}\frac{\partial \log f(z\mid\theta_0)}{\partial\theta}\frac{\partial \log f(z\mid\theta_0)}{\partial\theta^\top}h(z\mid\theta_0,\delta_n)\dt \mu(z)	\\&=\int_{\mathcal{Z}}\frac{\partial \log f(z\mid\theta_0)}{\partial\theta}\frac{\partial \log f(z\mid\theta_0)}{\partial\theta^\top}h(z\mid\theta_0,0)\dt \mu(z)\\&=\int_{\mathcal{Z}}\frac{\partial \log f(z\mid\theta_0)}{\partial\theta}\frac{\partial \log f(z\mid\theta_0)}{\partial\theta^\top}f(z\mid\theta_0)\dt \mu(z)\\&=\E[\dot\ell(z_i\mid\theta_0)\dot\ell(z_i\mid\theta_0)^\top ],
\end{flalign*}where recall that
$$
\dot\ell(z_i\mid\theta_0)=\partial \log f(z_i\mid\theta_0)/\partial\theta, \text{ and } \ddot\ell(z_i\mid\theta_0)=\partial^2 \log f(z_i\mid\theta_0)/\partial\theta\partial\theta^\top.
$$
Similarly, from the structure of the score equations, and the information matrix equality, we have that 
{\small
	\begin{flalign*}
		-\E\ddot\ell(z_i\mid\theta_0)&=\E[\dot\ell(z_i\mid\theta_0)\dot\ell(z_i\mid\theta_0)^\top ]	\\-\E\begin{pmatrix}
			\ddot\ell_{p(11)}+\ddot\ell_{c(11)}&\ddot\ell_{c(12)}\\\ddot\ell_{c(21)}&\ddot\ell_{c(22)}
		\end{pmatrix}&=\E\begin{pmatrix}
			\{\ddot\ell_{p(1)}+\ddot\ell_{c(1)}\}\{\ddot\ell_{p(1)}+\ddot\ell_{c(1)}\}^\top&\{\ddot\ell_{p(1)}+\ddot\ell_{c(1)}\}\ddot\ell_{c(2)}^\top\\\ddot\ell_{c(2)}\{\ddot\ell_{p(1)}+\ddot\ell_{c(1)}\}^\top&\ddot\ell_{c(2)}\ddot\ell_{c(2)}^\top
		\end{pmatrix}\\\begin{pmatrix}
			\mathcal{I}_{p(11)}+\mathcal{I}_{c(11)}&\mathcal{I}_{(12)}\\\mathcal{I}_{(21)}&\mathcal{I}_{(22)}
		\end{pmatrix}&=\E\begin{pmatrix}\dot\ell_{p(1)}\dot\ell_{p(1)}^\top+\dot\ell_{c(1)}\dot\ell_{c(1)}^\top+2\dot\ell_{p(1)}\dot\ell_{c(1)}&\ddot\ell_{p(1)}\ddot\ell_{c(2)}^\top+\ddot\ell_{c(1)}\ddot\ell_{c(2)}^\top\\\ddot\ell_{c(2)}\ddot\ell_{p(1)}^\top+\ddot\ell_{c(2)}\ddot\ell_{c(1)}^\top&\ddot\ell_{c(2)}\ddot\ell_{c(2)}^\top
		\end{pmatrix}
\end{flalign*}}

Analysing the above we see that  
\begin{flalign}
	\mathcal{I}_{p(11)}=&-\int_{\mathcal{Z}} \frac{\partial^2 \log f_1(z\mid\theta_{1})}{\partial\theta_1\partial\theta_1^\top} f(z\mid\theta_0)\dt \mu(z)\nonumber\\=&\int_{\mathcal{Z}}\frac{1}{f_1(z\mid\theta_{1,0})^2}\frac{\partial   f_1(z\mid\theta_{1,0})}{\partial \theta_1^\top}\frac{\partial f_1(z\mid\theta_{1,0})}{\partial \theta_1^\top}f(z\mid\theta_0)\dt \mu(z)\nonumber\\&\hspace{5mm}-\int_{\mathcal{Z}}\frac{\partial^2  f_1(z\mid\theta_{1,0})}{\partial\theta_1\partial \theta_1^\top}f(z\mid\theta_0)\dt \mu(z)\nonumber\\=&\E\dot\ell_p(z_i\mid\theta_{1,0})\dot\ell_p(z_i\mid\theta_{1,0})^\top,\label{eq:equal_p}
\end{flalign}where the last line follows from rewriting $\frac{1}{f_1(z\mid\theta_{1,0})}\frac{\partial f_1(z\mid\theta_{1,0})}{\partial \theta^\top}=\dot\ell_{p(1)}(z\mid\theta_{1,0})$, and from exchanging integration and differentiation to note that the second term is zero. This proves part one of the result. Repeating the argument for $\mathcal{I}_{p(11)}$ for $\mathcal{I}_{c(11)}$ yields
\begin{flalign*}
	\mathcal{I}_{c(11)}=\E\dot\ell_{c(1)}(z_i\mid\theta_{0})\dot\ell_{c(1)}(z_i\mid\theta_{0})^\top,
\end{flalign*}which proves the second part of the result. 

Now, let us investigate the equivalence of each term. For the first term we have that 
\begin{flalign}
	\mathcal{I}_{p(11)}+\mathcal{I}_{c(11)}&=\E\dot\ell_{p(1)}(z_i\mid\theta_{1,0})\dot\ell_{p(1)}(z_i\mid\theta_{1,0})^\top+\E\dot\ell_{c(1)}(z_i\mid\theta_{0})\dot\ell_{c(1)}(z_i\mid\theta_{0})^\top\nonumber\\&+2\E\dot\ell_{p(1)}(z_i\mid\theta_{1,0})\dot\ell_{c(1)}(z_i\mid\theta_{0})^\top\label{eq:equal_one}.
\end{flalign}
Applying the above equations for $\mathcal{I}_{j(11)}$, $j\in\{p,c\}$, in equation \eqref{eq:equal_one} we have 
\begin{flalign*}
	\mathcal{I}_{p(11)}+\mathcal{I}_{c(11)}&=\E\dot\ell_p(z_i\mid\theta_{1,0})\dot\ell_p(z_i\mid\theta_{1,0})^\top+\E\dot\ell_{c(1)}(z_i\mid\theta_{0})\dot\ell_{c(1)}(z_i\mid\theta_{0})^\top\\&=\E\dot\ell_{p(1)}(z_i\mid\theta_{1,0})\dot\ell_{p(1)}(z_i\mid\theta_{1,0})^\top+\E\dot\ell_{c(1)}(z_i\mid\theta_{0})\dot\ell_{c(1)}(z_i\mid\theta_{0})^\top\\&+2\E\dot\ell_{p(1)}(z_i\mid\theta_{1,0})\dot\ell_{c(1)}(z_i\mid\theta_{0})^\top,
\end{flalign*}which is satisfied if and only if $\E\dot\ell_{p(1)}(z_i\mid\theta_{0})\dot\ell_{c(1)}(z_i\mid\theta_{0})^\top=0$. Hence, we have part three of the stated result.

Lastly, we investigate the term $\mathcal{I}_{(21)}$. Again, using the definitions of this term we have 
\begin{flalign}
	\mathcal{I}_{21}=&-n^{-1}\int_{\mathcal{Z}} \frac{\partial^2\ell_c(z\mid \theta_0)}{\partial\theta_2\partial\theta_1^\top} f(z\mid\theta_0)\dt \mu(z)\nonumber\\=&-\int_{\mathcal{Z}} \frac{1}{f_2(z\mid\theta_0)^2}\frac{\partial f_2(z\mid\theta)}{\partial\theta_2}\frac{\partial f_2(z\mid\theta)}{\partial\theta_1^\top} f(z\mid\theta_0)\dt \mu(z)-\int_{\mathcal{Z}} \frac{\partial^2f_2(z\mid\theta)}{\partial\theta_2\partial\theta_1^\top} f(z\mid\theta_0)\dt \mu(z)\nonumber\\=&\E\dot\ell_{c(2)}(z\mid\theta_0)\dot\ell_{c(1)}(z\mid\theta_0)^\top,\label{eq:equal_three}
\end{flalign}where the last equality again follows from exchanging integration and differentiation of the second term, and rewriting the derivatives in the first term. Therefore, from equation \eqref{eq:equal_three}, and the general matrix information equality, we have the equality
\begin{flalign*}
	\mathcal{I}_{21}=\E\dot\ell_{c(2)}(z\mid\theta_0)\dot\ell_{c(1)}(z\mid\theta_0)^\top=\E\dot\ell_{c(2)}(z\mid\theta_0)\dot\ell_{c(1)}(z\mid\theta_0)^\top+\E\dot\ell_{c(2)}(z\mid\theta_0)\dot\ell_{p(1)}(z\mid\theta_{1,0})^\top,	
\end{flalign*}which is satisfied if and only if $\E\dot\ell_{c(2)}(z\mid\theta_0)\dot\ell_{p(1)}(z\mid\theta_{1,0})^\top=0$, which proves the last result. 			
\end{proof}
\begin{proof}[Proof of Lemma \ref{lem:newey}]To simplify the proof, let $\delta_n(z)=\psi^\top \xi(z)/\sqrt{n}$. By Assumption \ref{ass:regular}, under Assumption \ref{ass:miss}, we see that $\|\dot\ell(z\mid\theta)\| h(z\mid\theta,\delta_n)\le b(z)\{1+\delta_n(z)\}f_1(z\mid\theta_{1,0})f_2(z\mid\theta_0)+o(b(z)/\sqrt{n})$. By Assumption \ref{ass:miss}, there exists some $a_1(z)\ge0$ with $\sup_{z\in\mathcal{Z}}a_1(z)<\infty$ such that $\delta_n(z)\le a_1(z)$. Hence, for each $n\ge1$, $h(z \mid \theta,\delta_n) \leqslant \{1+a_1(z)/\sqrt{n}\}f_1(z\mid\theta_{1,0})f_2(z\mid\theta_0)$ and $|\dot\ell(z\mid\theta)h(z \mid \theta,\delta_n)| \leqslant b(z)\{1+a_1(z)/\sqrt{n}\}f_1(z\mid\theta_{1,0})f_2(z\mid\theta_0)$. From Assumption \ref{ass:regular}, we have that $\int_{\mathcal{Z}}a_1(z)b(z)f_1(z\mid\theta_{1,0})f_2(z\mid\theta_0)\dt\mu(z)<\infty$, and $\int_{\mathcal{Z}}b(z)f_1(z\mid\theta_{1,0})f_2(z\mid\theta_0)\dt\mu(z)<\infty$. By the dominated convergence theorem, we then have that $\phi(\theta,\delta_n)$ exists for each $\delta_n\in\Delta$. 

To prove the second part of the result, we note that, by Assumption \ref{ass:miss}, 
\begin{flalign}
	&\lim_{n\rightarrow\infty} \sup_{\theta\in\tilde\Theta}|\phi(\theta, \delta_n)-\phi(\theta, 0)|=\lim_{n\rightarrow\infty} \frac{1}{\sqrt{n}}\sup_{\theta\in\tilde\Theta}\left|\int_{\mathcal{Z}}\dot\ell(z\mid\theta)\delta_n(z)f_1(z\mid\theta_{1,0})f_2(z\mid\theta_0)\dt\mu(z)\right|\nonumber\\&\le \lim_{n\rightarrow\infty}\frac{1}{\sqrt{n}}\left|\int_{\mathcal{Z}}	b(z)a_1(z)f_1(z\mid\theta_{1,0})f_2(z\mid\theta_0)\dt\mu(z)+o\left\{\frac{1}{\sqrt{n}}\int_{\mathcal{Z}}	b(z)f_1(z\mid\theta_{1,0})f_2(z\mid\theta_0)\dt\mu(z)\right\}\right|\nonumber\\&=o(1/\sqrt{n}),\label{eq:newlem1}
\end{flalign}where the second line follows since by Assumption \ref{ass:regular} $\|\dot\ell(z\mid\theta)\|\le b(z)$.

Now, by the triangle inequality, 
\begin{flalign*}
	\sup_{\theta\in\tilde\Theta}\left|n^{-1}\dot\ell(\theta)-\phi\left(\theta, 0\right)\right| \le &\sup_{\theta\in\tilde\Theta}\left|n^{-1}\dot\ell(\theta)-\phi\left(\theta, \delta_n\right)\right|+\sup_{\theta\in\tilde\Theta}\left|\phi\left(\theta, \delta_n\right)-\phi\left(\theta, 0\right)\right|.
\end{flalign*}	From equation \eqref{eq:newlem1}, for any $\varepsilon>0$, the there exists an $n_1$ large enough so that for all $n\ge n_1$, $\sup_{\theta\in\widetilde\Theta}\left|\phi\left(\theta, \delta_n\right)-\phi\left(\theta, 0\right)\right|\le \varepsilon/2$. To handle the first term, note that by Assumption \ref{ass:regular}, for any $\theta,\theta'\in\Theta$, and some $\overline\theta\in\Theta$, 
$$
\|\dot\ell(z\mid\theta)-\dot\ell(z\mid\theta')\|\le \|\ddot\ell(z\mid\overline\theta)\|\|\theta-\theta'\|\le b(z)\|\theta-\theta'\|.
$$Since $\E[b(z)]<\infty$, it follows directly from Theorem 21.10 of \cite{davidson1994stochastic}, that
$$
\sup_{\theta\in\tilde\Theta}\left|n^{-1}\dot\ell(\theta)-\phi\left(\theta, \delta_n\right)\right|\le \varepsilon/2
$$with probability converging to one. Hence, for any $\varepsilon>0$,
\begin{flalign*}
	\lim_{n\rightarrow+\infty}\text{Pr}\left[\sup_{\theta\in\tilde\Theta}\left|n^{-1}\dot\ell(\theta)-\phi\left(\theta, 0\right)\right|\ge \varepsilon\right]\le &\lim_{n\rightarrow+\infty}\text{Pr}\left[\sup_{\theta\in\tilde\Theta}\left|n^{-1}\dot\ell(\theta)-\phi\left(\theta, \delta_n\right)\right|\ge \varepsilon/2\right]\\&+\lim_{n\rightarrow+\infty}\text{Pr}\left[\sup_{\theta\in\tilde\Theta}\left|\phi\left(\theta, \delta_n\right)-\phi\left(\theta, 0\right)\right|\ge \varepsilon/2\right]\\&=o(1).
\end{flalign*}

\end{proof}

\begin{proof}[Proof of Lemma \ref{lem:CLTmiss}.]
The result uses  Lemma \ref{lem:newey} , the nature of the model misspecification in Assumption \ref{ass:miss}, and the separable structure of the likelihood.  

\smallskip 

\noindent\textbf{Result 1.} Recall the expectation
$$
\phi(\theta,\delta_n):=\int_{\mathcal{Z}} \frac{\partial\log f(z\mid\theta)}{\partial\theta}  h(z\mid\theta,\delta_n)\dt\mu(z).
$$From Lemma \ref{lem:newey}, and the regularity conditions on $h(z\mid\theta,\delta)$ in Assumption \ref{ass:regular}, $\phi(\theta,\delta_n)$ exists and is continuous on $\Theta\times\Delta$. The second portion of Lemma \ref{lem:newey} implies that, for any $0<\varepsilon<+\infty$,
$$ 
\sup_{\theta\in\Theta(\varepsilon)}|n^{-1}\dot\ell(\theta)-\phi(\theta,0)|=o_p(1).
$$ 

Define $\phi_n:=\phi(\theta_0,\delta_n)$ and note that, since $\phi(\cdot,\delta)$ is continuous in $\delta$, as $n\rightarrow\infty$,  
$$
\phi_n\rightarrow\phi(\theta_0,\delta_0).
$$However, since we can exchange differentiation and integration (Assumption \ref{ass:regular}), and since $h(\cdot\mid\theta,\delta)$ is continuous in $\delta$, for each $\theta\in\Theta$, we have 
\begin{flalign}
	\phi(\theta_0,\delta_n)&=\int_{\mathcal{Z}} \frac{\partial\log f(z\mid\theta_0)}{\partial\theta}  h(z\mid\theta_0,\delta_n)\dt\mu(z)\nonumber
	\\&=\int_{\mathcal{Z}} \frac{\partial\log f(z\mid\theta_0)}{\partial\theta}  \{1+\delta_n(z)/\sqrt{n}\}f(z\mid\theta_0)\dt\mu(z)\{1+o(1/\sqrt{n})\}\nonumber \\&=\int_{\mathcal{Z}} \frac{\partial\log f(z\mid\theta_0)}{\partial\theta}  f(z\mid\theta_0)\dt\mu(z)+\frac{1}{\sqrt{n}}\int_{\mathcal{Z}} \frac{\partial\log f(z\mid\theta_0)}{\partial\theta} \delta_n(z)\dt\mu(z)\{1+o(1/\sqrt{n})\}.
\end{flalign}The first term in the above is zero since 
$$\int_{\mathcal{Z}}\frac{\partial \log f(z\mid \theta_0)}{\partial\theta} f(z\mid\theta_0)\dt\mu(z)=\int_{\mathcal{Z}}\frac{\partial}{\partial\theta}f(z\mid\theta_0)\dt\mu(z)=\frac{\partial}{\partial\theta}\int_{\mathcal{Z}}f(z\mid\theta_0)\dt\mu(z)=0.
$$Considering the second term, using Assumption \ref{ass:miss} in the main text and Assumption \ref{ass:regular}, we see that the integral portion of the second term satisfies
$$
\left\|\int_{\mathcal{Z}} \frac{\partial\log f(z\mid\theta_0)}{\partial\theta}  \delta_n(z)\dt\mu(z)\right\|\le \int_{\mathcal{Z}}b(z)a(z)\dt\mu(z)<\infty.
$$Hence, as $n\rightarrow\infty$, 
$
\lim_{n\rightarrow+\infty}		\sqrt{n}\phi(\theta_0,\delta_n)=\eta,
$ which yields the first part of the stated result. 

\bigskip 

\noindent\textbf{Result 2.}
To deduce the second result, we need only establish a CLT for 
$
\{n^{-1/2}\dot\ell(\theta_0)-\sqrt{n}\phi(\theta_0,\delta_n)\}
$. This can be established using arguments similar to those given in in Lemma 2.1 of \cite{newey1985maximum}. In particular, let $\lambda\in\mathbb{R}^{d}$ be a non-zero vector, with $\|\lambda\|<\infty$, and define $Y_n(z):=\lambda^{\top}\left[\dot\ell(z\mid\theta_0)-\phi_n\right]$ and $Y_{i,n}=Y_n\left(z_i\right)(i=1, \ldots, n)$. For each $n\ge1$, the random variable $Y_{i,n}$ is mean-zero and and has a strictly positive variance $\lambda^{\top} \Omega_n\lambda$, for all $n$ large enough, where $\Omega_n:=\E_{n}[\dot\ell(z\mid\theta_0)\dot\ell(z\mid\theta_0)^\top]$. For $\varepsilon>0$ let $A_n(\varepsilon)=\left\{z:\left|\dot\ell\left(z, \theta_0\right)-\phi_n\right|>\varepsilon \sqrt{n \lambda^{\top} \Omega_n\lambda}\right\}$. For each $n, Y_{i,n}(i=1, \ldots, T)$ are identically distributed, so that by Assumption \ref{ass:regular}, for any $\varepsilon>0$,
\begin{flalign*}
	&{\left[1 /\left(n \lambda^{\top} \Omega_n \lambda\right)\right] \cdot \sum_{i=1}^n \int_{\left|Y_{i,n}\right| \geqslant \varepsilon \sqrt{n \lambda^{\top} \Omega_n\lambda}}\left|Y_{i,n}\right|^2 h\left(z_i \mid \theta_0,\delta_n\right) \dt \mu\left(z_i\right)} \\
	&=\left(1 / \lambda^{\top} \Omega_n \lambda\right) \cdot \int_{A_n(\varepsilon)}\left|Y_n(z)\right|^2 h\left(z \mid \theta_0,\delta_n\right) \dt \mu(z) \\
	&\le \left(1 / \lambda^{\top} \Omega_n \lambda\right)\|\lambda\|^2 \int_{A_n(\varepsilon)}\|\dot\ell(z\mid\theta_0)-\phi_n\|^2
	h\left(z \mid \theta_0,\delta_n\right) \dt \mu(z)\\&\le  \left(1 / \lambda^{\top} \Omega_n \lambda\right)\|\lambda\|^2\left\{\|\phi_n\|^2+ \int_{A_n(\varepsilon)}\|\dot\ell(z\mid\theta_0)\|^2
	h\left(z \mid \theta_0,\delta_n\right) \dt \mu(z)\right\}
	\\& \leq \left(1 / \lambda^{\top} \Omega_n \lambda\right)  \|\lambda\|\left[\left\|\phi_n\right\|^2+\int_{A_n(\varepsilon)} b(z) a(z) \dt \mu(z)\right]
\end{flalign*}
Since $h(z\mid\theta_0,\delta_n)\rightarrow h(z\mid\theta_0,\delta_0)\equiv f(z\mid\theta_0)$, by continuity $\Omega_n\rightarrow\E[\dot\ell(z\mid\theta_0)\dot\ell(z\mid\theta_0)^\top]$, and $\lim \lambda^{\top} \Omega_n \lambda=\lambda^{\top} \Omega \lambda>0$, so that by Lemma A.1 of \cite{newey1985maximum}, the set $A_n(\varepsilon)\rightarrow\emptyset$, as $n\rightarrow+\infty$. Hence, by Assumption \ref{ass:regular}, $\lim_n \int_{A_n(\varepsilon)} b(z) a(z) \dt \mu(z)=0$. Lastly, since $\phi_n\rightarrow0$, we have that 
\begin{flalign*}
	&{\left[1 /\left(n \lambda^{\top} \Omega_n \lambda\right)\right] \cdot \sum_{i=1}^n \int_{\left|Y_{i,n}\right| \geqslant \varepsilon \sqrt{ } n\lambda^{\top} \Omega_{n }\lambda}\left|Y_{i,n}\right|^2 h\left(z_i \mid \theta_0,\delta_n\right) \dt \mu\left(z_i\right)}=o(1),
\end{flalign*}
which implies that the  Lindberg-Feller condition for the central limit theorem is satisfied, so that by the Cramer-Wold device: $$ \{n^{-1/2}\dot\ell\left(\theta_0\right)-\sqrt{n}\phi_n\} {\Rightarrow} N(0, \Omega)$$ and the second part of the stated result is satisfied. 

\bigskip 

\noindent\textbf{Result 3.} First, define
$$
\phi_{1}(\theta,\delta_n):=\int_{\mathcal{Z}}\frac{\partial \log f_1(z\mid\theta_1)}{\partial\theta_1} h(z\mid\theta,\delta_n)\dt\mu(z).
$$Applying the definition of $h(z\mid\theta,\delta_n)$ in Assumption \ref{ass:miss} yields, up to an $o(1/\sqrt{n})$ term,  
\begin{flalign}
	\phi_{1}(\theta,\delta_n)&=\phi_1(\theta,\delta_0)+\partial\phi_1(\theta_0,\bar\delta)/\partial\delta\nonumber\\&=	\int_{\mathcal{Z}}\frac{\partial \log f_1(z\mid\theta_{1,0})}{\partial\theta_1}  f(z\mid\theta_{0})\dt\mu(z) +\int_{\mathcal{Z}}\frac{\partial \log f_1(z\mid\theta_{1,0})}{\partial\theta_1}\delta_n(z)f(z\mid\theta_0)\dt\mu(z)\frac{1}{\sqrt{n}}.\label{eq:r3} 
\end{flalign}For the first term in \eqref{eq:r3}, we have 
\begin{flalign*}
	\int_{\mathcal{Z}}\frac{\partial \log f_1(z\mid\theta_{1,0})}{\partial\theta_1}  f(z\mid\theta_0)\dt\mu(z)&=\int_{\mathcal{Z}}\frac{\partial f_1(z\mid\theta_{1,0})}{\partial\theta_1}  f_2(z\mid\theta_0)\dt\mu(z)\\&=\frac{\partial }{\partial\theta}\int f_1(z\mid\theta_{1,0})f_2(z\mid\theta_{0})\dt\mu(z)\\&=0.
\end{flalign*}
By Assumption \ref{ass:miss}(iii), the second term in \eqref{eq:r3} satisfies
\begin{flalign*}
	\frac{1}{\sqrt{n}}\int_{\mathcal{Z}}\frac{\partial \log f_1(z\mid\theta_{1,0})}{\partial\theta_1}\delta_n(z)f_1(z\mid\theta_{1,0})f_2(z\mid\theta_0)\dt\mu(z)=0.
\end{flalign*}

Repeating the same arguments as in the proof of \textbf{Result 2}, but only for $\dot\ell_p(z\mid\theta_1)=\partial \log f_1(z\mid\theta_1)/\partial\theta_1$, we have that
$$ n^{-1/2}\dot\ell_p\left(\theta_{1,0}\right) {\Rightarrow} N\{0, \E(\dot\ell_{p(1)}\dot\ell_{p(1)}^\top)\}\equiv N(0,\I_{p(11)}).$$

\bigskip 

\noindent\textbf{Result 4.} Define
$$
\phi_{2}(\theta,\delta_n):=\int_{\mathcal{Z}}\frac{\partial \log f_2(z\mid\theta)}{\partial\theta_2} h(z\mid\theta,\delta_n)\dt\mu(z)
$$ and apply the definition of $h(z\mid\theta,\delta_n)$ in Assumption \ref{ass:miss} to see that, up to an $o(1/\sqrt{n})$ term,  
\begin{flalign*}
	\phi_{2}(\theta,\delta_n)=&\int_{\mathcal{Z}}\frac{\partial \log f_2(z\mid\theta_{0})}{\partial\theta_2}  f(z\mid\theta_{0})\dt\mu(z)+\int_{\mathcal{Z}}\frac{\partial \log f_2(z\mid\theta_{0})}{\partial\theta_2}\delta_n(z)f(z\mid\theta_0)\dt\mu(z)\frac{1}{\sqrt{n}}.
\end{flalign*}
Similar arguments to those used to prove \textbf{Result 3} yield 
\begin{flalign*}
	\phi_{2}(\theta_0,\delta_n)&=0+\eta/\sqrt{n}.
\end{flalign*}
The distributional result then follows similarly to \textbf{Result 2}. 
\end{proof}
\begin{proof}[Proof of Lemma \ref{lem:matrix_equal2}.] We prove each result in turn. 

\bigskip 

\noindent\textbf{Result 1.} Write $\tau_n=(\mathbf{0}_{d_1\times1}^\top, \phi_n^\top)^\top$, where $\phi_n=\phi(\theta_0^\top,\delta_n^\top)^\top$ was defined in the proof of Lemma \ref{lem:CLTmiss} and recall that $\tau=(\mathbf{0}_{d_1\times1}^\top,\eta^\top)^\top$.  The first and second results in Lemma \ref{lem:CLTmiss} together imply that 
$$
\{Z_n-\sqrt{n}\tau_n\}\Rightarrow N(0,\Omega),
$$where {\small
	\begin{flalign*}
		\Omega&:=\lim_{n\rightarrow+\infty} \E[Z_nZ_n^\top]\\&=
		\begin{pmatrix}\E[\dot\ell_{p(1)}\dot\ell_{p(1)}^\top]&\E[\dot\ell_{p(1)}\{\dot\ell_{p(1)}+\dot\ell_{c(1)}\}^\top]&\E[\dot\ell_{p(1)}\dot\ell_{c(2)}^\top]\\\E[\{\dot\ell_{p(1)}+\dot\ell_{c(1)}\}\dot\ell_{p(1)}^\top]&\E[\{\dot\ell_{p(1)}+\dot\ell_{c(1)}\}\{\dot\ell_{p(1)}+\dot\ell_{c(1)}\}^\top]&\E[\{\dot\ell_{p(1)}+\dot\ell_{c(1)}\}\dot\ell_{c(2)}^\top]\\E[\dot\ell_{c(2)}\dot\ell_{p(1)}^\top]	&\E[\dot\ell_{c(2)}\{\dot\ell_{p(1)}+\dot\ell_{c(1)}\}^\top]	&\E[\dot\ell_{c(2)}\dot\ell_{c(2)}^\top]
		\end{pmatrix}	
\end{flalign*}}
However, from Lemma \ref{lem:matrix_equal1}, we know that $$\lim_{n\rightarrow+\infty}n^{-1}\E_n\left[\dot\ell_{p(1)}\dot\ell_{c(1)}^\top\right]=0,\text{ and } \lim_{n\rightarrow+\infty}n^{-1}\E_n\left[\dot\ell_{p(1)}\dot\ell_{c(2)}^\top\right]=0.$$ Applying these relationships, we then have the general form of $\Omega$ as stated in the result. Since $\sqrt{n}\tau_n\rightarrow\tau$ in probability, the first stated result follows.

The second part of the result follows by applying the results of Lemma \ref{lem:matrix_equal1} to see that we can rewrite $\Omega$ as 
\begin{flalign*}
	\Omega=\begin{pmatrix}
		\mathcal{I}_{p(11)}&\mathcal{I}_{p(11)}&0\\\mathcal{I}_{p(11)}&\mathcal{I}_{11}&\mathcal{I}_{12}\\0&\mathcal{I}_{21}&\mathcal{I}_{22}
	\end{pmatrix},
\end{flalign*}where 
$$
\I_{11}=\I_{p(11)}+\I_{c(11)},\quad\I_{22}=\I_{c(22)},\quad\I_{21}=\I_{c(21)}.
$$

\bigskip 

\noindent\textbf{Result 2.}		
Define
$$
V_{22}:=\I_{22}^{-1}-\I_{22.1}^{-1},\quad \I_{22.1}^{-1}:=\I_{22}^{-1}+\I_{22}^{-1}\I_{21}\I_{11.2}^{-1}\I_{12}\I_{22}^{-1},
$$ and recall the definitions of $\Gamma_{\cut}$ and $\Gamma_{\full}$ to see that
$$
\Gamma_{\cut}-\Gamma_{\full}=\begin{pmatrix}
	\I_{p(11)}^{-1}&-\I_{11.2}^{-1}&\I_{11.2}^{-1}\I_{12}\I_{22}^{-1}\\-\I_{22}^{-1}\I_{21}\I_{p(11)}^{-1}&\I_{22}^{-1}\I_{21}\I_{11.2}^{-1}&V_{22}		
\end{pmatrix}.
$$ 
Firstly,  
\begin{flalign*}
	&(\Gamma_{\cut}-\Gamma_{\full})\Omega=
	\begin{pmatrix}
		\I_{p(11)}^{-1}&-\I_{11.2}^{-1}&\I_{11.2}^{-1}\I_{12}\I_{22}^{-1}\\-\I_{22}^{-1}\I_{21}\I_{p(11)}^{-1}&\I_{22}^{-1}\I_{21}\I_{11.2}^{-1}&\I_{22}^{-1}-\I_{22.1}^{-1}		
	\end{pmatrix}
	\begin{pmatrix}
		\mathcal{I}_{p(11)}&\mathcal{I}_{p(11)}&0\\\mathcal{I}_{p(11)}&\mathcal{I}_{11}&\mathcal{I}_{12}\\0&\mathcal{I}_{21}&\mathcal{I}_{22}
	\end{pmatrix}
	\\&=\begin{pmatrix}
		I_{d_1}-\I_{11.2}^{-1}\I_{p(11)}&I-\I_{11.2}^{-1}\I_{11}+\I_{11.2}^{-1}\I_{12}\I_{22}^{-1}\I_{21}&\mathbf{O}_{d_1\times d_2}\\-\I_{22}^{-1}\I_{21}+\I_{22}^{-1}\I_{21}\I_{11.2}^{-1}\I_{p(11)}&-\I_{22}^{-1}\I_{21}+\I_{22}^{-1}\I_{21}\I_{11.2}^{-1}\I_{11}+V_{22}\I_{21}&\I_{22}^{-1}\I_{21}\I_{11.2}^{-1}\I_{12}+{V_{22}}\I_{22}	
	\end{pmatrix}.	
\end{flalign*}Therefore,  
\begin{flalign}
&(\Gamma_{\cut}-\Gamma_{\full})\Omega(\Gamma_{\cut}-\Gamma_{\full})^\top=\nonumber\\
&\begin{pmatrix}
	I_{d_1}-\I_{11.2}^{-1}\I_{p(11)}&I-\I_{11.2}^{-1}\I_{11}+\I_{11.2}^{-1}\I_{12}\I_{22}^{-1}\I_{21}&\mathbf{O}_{d_1\times d_2}\\-\I_{22}^{-1}\I_{21}+\I_{22}^{-1}\I_{21}\I_{11.2}^{-1}\I_{p(11)}&-\I_{22}^{-1}\I_{21}+\I_{22}^{-1}\I_{21}\I_{11.2}^{-1}\I_{11}+V_{22}\I_{21}&\I_{22}^{-1}\I_{21}\I_{11.2}^{-1}\I_{12}+{V_{22}}\I_{22}	
\end{pmatrix}\times\nonumber\\&\begin{pmatrix}
	\I_{p(11)}^{-1}&-\I_{p(11)}^{-1}\I_{12}\I_{22}^{-1}\\-\I_{11.2}^{-1}&\I_{11.2}^{-1}\I_{12}\I_{22}^{-1}\\\I_{22}^{-1}\I_{21}\I_{11.2}^{-1}&V_{22}
\end{pmatrix}
	.\label{eq:gross11}
\end{flalign}
To derive the stated result, we analyze each of the corresponding blocks in the $d\times d$ matrix. 

\medskip

\noindent\textbf{$(d_1\times d_1)$-Block.} Let $\mathcal{M}_{11}$ denote the  $(d_1\times d_1)$-block of $\mathcal{M}$. Multiply the first row and first column of \eqref{eq:gross11} to obtain 
\begin{flalign*}
	\mathcal{M}_{11}&=\I_{p(11)}^{-1}-\I_{11.2}^{-1}-\I_{11.2}^{-1}+\I_{11.2}\I_{11}\I_{11.2}^{-1}-\I_{11.2}^{-1}\I_{12}\I_{22}^{-1}\I_{21}\I_{11.2}^{-1}\\&=W-\I_{11.2}^{-1}+\I_{11.2}^{-1}\underbrace{(\I_{11}-\I_{12}\I_{22}^{-1}\I_{21})}_{=\I_{11.2}}\I_{11.2}^{-1}\\&=W-\I_{11.2}^{-1}+\I_{11.2}^{-1}\\&=W,
\end{flalign*}where we recall that $W=\I_{p(11)}^{-1}-\I_{11.2}^{-1}$. 

\medskip 

\noindent\textbf{$(d_1\times d_2)$-Block.} Let $\mathcal{M}_{12}$ denote the  $(d_1\times d_2)$-block of $\mathcal{M}$. Multiply the first row and second column of \eqref{eq:gross11} to obtain 
\begin{flalign*}
	\mathcal{M}_{12}&=-\I_{p(11)}^{-1}\I_{12}\I_{22}^{-1}+\I_{11.2}^{-1}\I_{12}\I_{22}^{-1}\\&+\I_{11.2}^{-1}\I_{12}\I_{22}^{-1}-\I_{11.2}^{-1}\I_{11}\I_{11.2}^{-1}\I_{12}\I_{22}^{-1}+\I_{11.2}^{-1}\I_{12}\I_{22}^{-1}\I_{21}\I_{11.2}^{-1}\I_{12}\I_{22}^{-1}\\&=-W\I_{12}\I_{22}^{-1}\\&+\I_{11.2}^{-1}\I_{12}\I_{22}^{-1}-\I_{11.2}^{-1}(\I_{11}-\I_{12}\I_{22}^{-1}\I_{21})\I_{11.2}^{-1}\I_{12}\I_{22}^{-1}\\&=-W\I_{12}\I_{22}^{-1}.
\end{flalign*}

\medskip 

\noindent\textbf{$(d_2\times d_2)$-Block.} Let $\mathcal{M}_{22}$ denote the  $(d_2\times d_2)$-block of $\mathcal{M}$. Multiply the second row and second column of \eqref{eq:gross11}, and use the fact that $\I_{22.1}^{-1}=\I_{22}^{-1}+\I_{22}^{-1}\I_{21}\I_{11.2}^{-1}\I_{12}\I_{22}^{-1}$, so that $V_{22}=\I_{22}^{-1}-\I_{22.1}^{-1}=-\I_{22}^{-1}\I_{21}\I_{11.2}^{-1}\I_{12}\I_{22}^{-1}$, to obtain 
\begin{flalign*}
	\mathcal{M}_{22}&=\I_{22}^{-1}\I_{21}\I_{p(11)}^{-1}\I_{12}\I_{22}^{-1}-\I_{22}^{-1}\I_{21}\I_{11.2}^{-1}\I_{12}\I_{22}^{-1}\\&-\I_{22}^{-1}\I_{21}\I_{11.2}^{-1}\I_{12}\I_{22}^{-1}+\I_{22}^{-1}\I_{21}\I_{11.2}^{-1}\I_{11}\I_{11.2}^{-1}\I_{12}\I_{22}^{-1}\\&+V_{22}\I_{21}\I_{11.2}^{-1}\I_{12}\I_{22}^{-1}+\I_{22}^{-1}\I_{21}\I_{11.2}^{-1}\I_{12}V_{22}+V_{22}\I_{22}V_{22}\\&=\I_{22}^{-1}\I_{21}W\I_{12}\I_{22}^{-1}-\I_{22}^{-1}\I_{21}\I_{11.2}^{-1}\I_{12}\I_{22}^{-1}+\I_{22}^{-1}\I_{21}\I_{11.2}^{-1}\I_{11}\I_{11.2}^{-1}\I_{12}\I_{22}^{-1}\\&-\I_{22}^{-1}\I_{21}\I_{11.2}^{-1}\I_{12}\I_{22}^{-1}\I_{21}\I_{11.2}^{-1}\I_{12}\I_{22}^{-1}+\I_{22}^{-1}\I_{21}\I_{11.2}^{-1}\I_{12}V_{22}+V_{22}\I_{22}V_{22}\\&=\I_{22}^{-1}\I_{21}W\I_{12}\I_{22}^{-1}-\I_{22}^{-1}\I_{21}\I_{11.2}^{-1}\I_{12}\I_{22}^{-1}+\I_{22}^{-1}\I_{21}\I_{11.2}^{-1}\I_{12}\I_{22}^{-1}\\&+\I_{22}^{-1}\I_{21}\I_{11.2}^{-1}\I_{12}V_{22}+V_{22}\I_{22}V_{22}\\&=\I_{22}^{-1}\I_{21}W\I_{12}\I_{22}^{-1}-\I_{22}^{-1}\I_{21}\I_{11.2}^{-1}\I_{12}\I_{22}^{-1}\I_{21}\I_{11.2}^{-1}\I_{12}\I_{22}^{-1}+\I_{22}^{-1}\I_{21}\I_{11.2}^{-1}\I_{12}\I_{22}^{-1}\I_{22}\I_{22}^{-1}\I_{21}\I_{11.2}^{-1}\I_{12}\I_{22}^{-1}\\&=\I_{22}^{-1}\I_{21}W\I_{12}\I_{22}^{-1}.
\end{flalign*}

\medskip 

\noindent Applying each of $\mathcal{M}_{11},\mathcal{M}_{12},\mathcal{M}_{22}$ into equation \eqref{eq:gross11}  yields
\begin{flalign*}
	&(\Gamma_{\cut}-\Gamma_{\full})\Omega(\Gamma_{\cut}-\Gamma_{\full})^\top=\begin{pmatrix}
		W&-W\I_{12}\I_{22}^{-1}\\-\I_{22}^{-1}\I_{21}W&\I_{22}^{-1}\I_{21}W\I_{12}\I_{22}^{-1}
	\end{pmatrix}=\mathcal{M}.
\end{flalign*}	

\bigskip 

\noindent\textbf{Result 3.} Similar to \textbf{Result 2}, 
\begin{flalign*}
	&\Gamma_{\cut}\Omega(\Gamma_{\cut}-\Gamma_{\full})^\top=\\&
	\begin{pmatrix}
		\I_{p(11)}^{-1}&\mathbf{O}&\mathbf{O}\\-\I_{22}^{-1}\I_{21}\I_{p(11)}^{-1}&\mathbf{O}&\I_{22}^{-1}		
	\end{pmatrix}\begin{pmatrix}
		\mathcal{I}_{p(11)}&\mathcal{I}_{p(11)}&0\\\mathcal{I}_{p(11)}&\mathcal{I}_{11}&\mathcal{I}_{12}\\0&\mathcal{I}_{21}&\mathcal{I}_{22}
	\end{pmatrix}\times (\Gamma_{\cut}-\Gamma_{\full})^\top
	\\&=\begin{pmatrix}
		I_{d_1}&I_{d_1}&\mathbf{O}\\-\I_{22}^{-1}\I_{21}&\mathbf{O}&I_{d_2}
	\end{pmatrix}\times \begin{pmatrix}
		\I_{p(11)}^{-1}&-\I_{p(11)}^{-1}\I_{12}\I_{22}^{-1}\\-\I_{11.2}^{-1}&\I_{11.2}^{-1}\I_{12}\I_{22}^{-1}\\\I_{22}^{-1}\I_{21}\I_{11.2}^{-1}&V_{22}
	\end{pmatrix}\\&=\begin{pmatrix}
		W&-W\I_{12}\I_{22}^{-1}\\-\I_{22}^{-1}\I_{12}W&\I_{22}^{-1}\I_{21} W\I_{12}\I_{22}^{-1}
	\end{pmatrix}
	\\&=\mathcal{M}.
\end{flalign*}
\end{proof}

\begin{proof}[Proof of Theorem \ref{thm:exact}.]
Recall $\overline\theta_{\mathrm{full}}:=\int_{\Theta}\theta\pi_{\full}(\theta\mid \z)\dt\theta$, and decompose
\begin{flalign*}
	\sqrt{n}(\overline\theta_{\mathrm{full}}-\theta_0)&=\sqrt{n}(\overline\theta_{\mathrm{full}}-\theta_0-T_{n,\full})+\sqrt{n}T_{n,\full}\\&=\int_{\Theta}\sqrt{n}(\theta-\theta_0-T_{n,\full})\pi(\theta\mid \z)\dt\theta+\sqrt{n}T_n\\&=\int_{\mathcal{T}}t\pi_{\full}(t\mid \z)\dt\theta+\sqrt{n}T_{n,\full}
\end{flalign*}where $\mathcal{T}:=\{t=\sqrt{n}(\theta-\theta_0-T_{n,\full}):\theta\in\Theta\}$ and $$T_{n,\full}=\Gamma_{\full}Z_n/\sqrt{n}=\mathcal{I}^{-1}\dot\ell(\theta_0)/n.$$
From Lemma \ref{lem:CLTmiss}, 
$$
\sqrt{n}T_{n,\full}\Rightarrow N(\I^{-1}\eta,\I^{-1}).
$$ Hence, the stated result follows if $\int_{\mathcal{T}}t\pi_{\full}(t\mid \z)=o_p(1)$. 

To show this, we first note that 
$$
\int_{\mathcal{T}}t\pi_{\full}(t\mid \z)\dt t=\int_{\mathcal{T}}t[\pi_{\full}(t\mid \z)-N\{t;0,\mathcal{I}^{-1}\}]\dt t,
$$since $0=\int_{\mathcal{T}}tN\{t;0,\mathcal{I}^{-1}\}\dt t$. We then have 
$$
\left\|\int_{\mathcal{T}}t[\pi_{\full}(t\mid \z)-N\{t;0,\mathcal{I}^{-1}\}]\dt t\right\|\le \int_{\mathcal{T}}\|t\||\pi_{\full}(t\mid \z)-N\{t;0,\mathcal{I}^{-1}\}|\dt t
$$
The regularity conditions in Assumption \ref{ass:regular}-\ref{ass:regular2} are sufficient to apply Theorem 8.2 of \cite{lehmann2006theory}, to deduce that, for $\mathcal{T}:=\{t=\sqrt{n}(\theta-\theta_0-T_{n,\full}):\theta\in\Theta\}$, 
\begin{equation*}
	\int_{\mathcal{T}}\|t\||\pi_{\full}(t\mid \z)-N\{t;0,\mathcal{I}^{-1}\}|=o_p(1). 
\end{equation*}
\end{proof}

\begin{proof}[Proof of Theorem \ref{thm:cut}.]
The proof follows similar arguments to the proof of Theorem \ref{thm:exact} if we set 
$$
T_{n,\cut}=\begin{pmatrix}
	T_{1,n,\cut}\\T_{2,n,\cut}
\end{pmatrix}=\begin{pmatrix}\Gamma_{1,\cut}Z_n/\sqrt{n}\\\Gamma_{2,\cut}Z_n/\sqrt{n}	
\end{pmatrix}\equiv 
\begin{pmatrix}
	\mathcal{I}_{p(11)}^{-1}\dot\ell_{p(1)}(\theta_{1,0})/{n}\\\mathcal{I}_{22}^{-1}\{\dot\ell_{c(2)}(\theta_0)/{n}-\I_{21}\I_{p(11)}^{-1}\dot\ell_{p(1)}(\theta_{1,0})/{n}\}.
\end{pmatrix}.
$$In particular, if we replace the full posterior in the proof of Theorem \ref{thm:exact} by the cut posterior $\pi_\cut(\theta_1\mid \z)$, and the full posterior mean $\overline{\theta}_{\full}$ with the cut posterior mean $\overline\theta_{1,\cut}=\int_{\Theta}\theta\pi_\cut(\theta_1\mid \z)\dt\theta$, we have that 
$$
\int_{\mathcal{V}_1}\|\vartheta_1\||\pi_{\cut}(\vartheta_1\mid\z)-N\{\vartheta_1;0,\I_{p(11)}^{-1}\}|\dt\vartheta_1=o_p(1). 
$$Using the above results and the following decomposition
\begin{flalign*}
	\sqrt{n}(\overline\theta_{1,\cut}-\theta_{1,0})&=\sqrt{n}(\overline\theta_{\mathrm{full}}-\theta_{1,0}-T_{1,n,\cut})+\sqrt{n}T_{1,n,\cut}\\&=\int_{\Theta_1}\sqrt{n}(\theta_1-\theta_{1,0}-T_{1,n,\cut})\pi_\cut(\theta_1\mid \z)\dt\theta_1+\sqrt{n}T_{1,n,\cut}\\&=\int_{\mathcal{V}_1}\vartheta_1\pi_{\cut}(\vartheta_1\mid \z)\dt\vartheta_1+\sqrt{n}T_{1,n,\cut},
\end{flalign*}we have that 
\begin{flalign*}
	\sqrt{n}(\overline\theta_{1,\cut}-\theta_{1,0})&=\sqrt{n}T_{1,n,\cut}+o_p(1),
\end{flalign*}	
Appealing to Lemma \ref{lem:CLTmiss} we have that 
$$
\sqrt{n}T_{1,n,\cut}\Rightarrow N(0,\I_{p(11)}^{-1}),
$$which yields the stated results for the $\theta_1$ dimension.

To prove the results for the $\theta_2$ dimension, first write 
\begin{flalign*}
	\sqrt{n}(\overline\theta_{2,\cut}-\theta_{2,0})&=\sqrt{n}(\overline\theta_{2,\cut}-\theta_{2,0}-T_{2,n,\cut})+\sqrt{n}T_{2,n,\cut}\\&=\int_{\Theta_2}\sqrt{n}(\theta_2-\theta_{2,0}-T_{2,n,\cut})\pi_\cut(\theta_2\mid \z)\dt\theta+\sqrt{n}T_{2,n,\cut}\\&=\int_{\mathcal{T}_2}\vartheta_2\pi_\cut(\vartheta_2\mid \z)\dt \vartheta_2 +\sqrt{n}T_{2,n,\cut},
\end{flalign*}where $\mathcal{V}_2:=\{\vartheta_2=\sqrt{n}(\theta_2-\theta_{2,0}-T_{2,n,\cut}):\theta_2\in\Theta_2\}$. From Lemma \ref{lem:CLTmiss}, the second term satisfies
$$
\sqrt{n}T_{2,n,\cut}\Rightarrow N(\I_{22}^{-1}\eta_2,\I_{22}^{-1}+\I_{22}^{-1}\I_{21}\I_{p(11)}^{-1}\I_{12}\I_{22}^{-1})
$$and the result then follows if $\int_{\mathcal{T}_2}\vartheta_2\pi_\cut(\vartheta_2\mid \z)\dt \vartheta_2=o_p(1)$. However, similar to the proof of Theorem \ref{thm:exact}
\begin{flalign*}
	\int_{\mathcal{T}_2}\vartheta_2\pi_\cut(\vartheta_2\mid \z)\dt \vartheta_2\le&  \int_{\mathcal{T}_2}\|\vartheta_2\||\pi_\cut(\vartheta_2\mid \z)\dt \vartheta_2- N(\vartheta_2;0,\I_{22}^{-1}+\I_{22}^{-1}\I_{21}\I_{p(11)}^{-1}\I_{12}\I_{22}^{-1})|\dt \vartheta_2\\&
	+\int_{\mathcal{T}_2}\vartheta_2 N(\vartheta_2;0,\I_{22}^{-1}+\I_{22}^{-1}\I_{21}\I_{p(11)}^{-1}\I_{12}\I_{22}^{-1})\dt\vartheta_2.
\end{flalign*}
The regularity conditions in Assumption \ref{ass:regular}-\ref{ass:regular2} are sufficient to apply Corollary 1 of \cite{frazier2022cutting} to deduce that 
\begin{equation*}
	\int_{\mathcal{T}_2}\|\vartheta_2\||\pi_\cut(\vartheta_2\mid \z)\dt \vartheta_2- N(\vartheta_2;0,\I_{22}^{-1}+\I_{22}^{-1}\I_{21}\I_{p(11)}^{-1}\I_{12}\I_{22}^{-1})|\dt \vartheta_2=o_p(1). 
\end{equation*}

\end{proof}

\begin{proof}[Proof of Corollary \ref{corr:bias_compare}]
For $j=1$, we must only note that the cut posterior mean exhibits no bias, so that the result is satisfied. 	

For $j=2$, recall that by Theorem \ref{thm:cut}, the bias of the cut posterior mean is given by $\mathcal{I}^{-1}_{22}\eta_2$. By Theorem \ref{thm:exact}, the full posterior for $\theta_2$ exhibits the asymptotic bias 
$$
\left(\mathcal{I}_{22}^{-1}+\mathcal{I}_{22}^{-1}\mathcal{I}_{21}\mathcal{I}_{11.2}^{-1}\mathcal{I}_{12}\mathcal{I}_{22}^{-1}\right)\eta_2-\mathcal{I}_{22}^{-1}\mathcal{I}_{21}\mathcal{I}_{11.2}^{-1}\eta_1.
$$The squared difference between the bias for $\theta_{2,0}$ under the cut and full posteriors is then positive so long as
\begin{flalign*}
	&\eta_2^\top \mathcal{I}_{22}^{-1}\mathcal{I}_{21}\mathcal{I}_{11.2}^{-1}\mathcal{I}_{12}\mathcal{I}_{22}^{-1}\mathcal{I}_{22}^{-1}\mathcal{I}_{21}\mathcal{I}_{11.2}^{-1}\mathcal{I}_{12}\mathcal{I}_{22}^{-1}\eta_2+\eta_1^\top\mathcal{I}_{11.2}^{-1}\mathcal{I}_{12}\mathcal{I}_{22}^{-1}\mathcal{I}_{22}^{-1}\mathcal{I}_{21}\mathcal{I}_{11.2}^{-1}\eta_1\\&>2\eta_1^\top\mathcal{I}_{11.2}^{-1}\mathcal{I}_{12}\mathcal{I}_{22}^{-1}\mathcal{I}_{22}^{-1}\mathcal{I}_{21}\mathcal{I}_{11.2}^{-1}\mathcal{I}_{12}\mathcal{I}_{22}^{-1}\eta_2
\end{flalign*}
Writing 
$$
X=\mathcal{I}_{22}^{-1}\mathcal{I}_{21}\mathcal{I}_{11.2}^{-1}\mathcal{I}_{12}\mathcal{I}_{22}^{-1}\eta_2,\quad Y=\mathcal{I}_{22}^{-1}\mathcal{I}_{21}\mathcal{I}_{11.2}^{-1}\eta_1,
$$we can rewrite the above restriction as 
$$
X^\top X+ Y^\top Y-2 X^\top Y=\|X-Y\|^2\ge0.
$$
\end{proof}

\begin{proof}[Proof of Lemma \ref{lem:bias}.]
Firstly, note that, in the proof of Theorem \ref{thm:exact}  we have shown that $$\|\sqrt{n}(\overline\theta_{\full}-\theta_{0})-\Gamma_{\full}Z_n\|=o_p(1),$$ and that, by Theorem \ref{thm:cut}, $$\|\sqrt{n}(\overline\theta_{\cut}-\theta_{0})-\Gamma_{\cut} Z_n\|=o_p(1).$$ Using these results and Lemmass \ref{lem:matrix_equal1}-\ref{lem:matrix_equal2}, it follows that 
$$
\begin{pmatrix}
	\sqrt{n}(\overline\theta_{\cut}-\theta_{0})\\\sqrt{n}(\overline\theta_{\full}-\theta_{0})\end{pmatrix}=\begin{pmatrix}
	\Gamma_{\cut}\\\Gamma_{\full}
\end{pmatrix}Z_n+o_p(1)\Rightarrow\begin{pmatrix}\Gamma_{\cut}\\\Gamma_{\full}
\end{pmatrix}(\xi+\tau).
$$
The continuous mapping theorem then yields the stated results. 
\end{proof}

\section{HPV prevalence and cancer incidence}\label{app:HPV}

We return to the example in Section 2.1 of the main text, concerning the relationship
between HPV prevalence and cervical cancer incidence.  We use it to
demonstrate how the semi-modular posterior can vary for different loss
functions.  In particular, we consider an S-SMP for a loss function targeting
the HPV prevalence in country $j$, $\theta_{1,j}$, and show how the S-SMP
mixing weight and semi-modular inferences vary across $j$.  
We use the mixing weight
$$\widehat{\omega}_+=\min\{1,\widehat{\omega}\},\;\;\;\widehat{\omega}=\frac{{\sigma_{\text{cut}}^{(j)}}^2-{\sigma^{(j)}}^2}{(\bar{\theta}_{1,\text{cut}}^{(j)}-\bar{\theta}_1^{(j)})^2}\mathbb{I}({\sigma_{\text{cut}}^{(j)}}^2-{\sigma^{(j)}}^2>0),$$
where ${\sigma_{\text{cut}}^{(j)}}^2$ and ${\sigma^{(j)}}^2$ are the 
cut and full marginal posterior variances for $\theta_{1,j}$, and
$\bar{\theta}_{1,\text{cut}}^{(j)}$ and $\bar{\theta}_1^{(j)}$ are the
cut and full marginal posterior means.  We denote the marginal SMP for $\theta_{1,j}$ by $\pi_{\text{cut}}^{(j)}(\theta_{1,j}|X)$,
and Figure \ref{hpv-figure2} shows
kernel estimates of these densities together with the SMP 
mixing weights $\widehat{\omega}_+$, and the
cut and full posterior $\theta_{1,j}$ marginals.  The countries
are ordered in the plot (left to right, top to bottom) according to the cut
posterior mean values.  Interestingly, the pooling weights
vary widely according to which country is under analysis;  when the difference in location
of the cut and full posterior distributions is large compared to the posterior 
variability, the SMP is close to the cut posterior, whereas the SMP is closer
to the full posterior otherwise.  

{	The results obtained in Figure \ref{hpv-figure2} give a novel insight into the behavior of model misspecification in this example. While it is known that the Poisson assumption is inadequate, there exists significant variability in the adequacy of this assumption for the purpose of inference in specific countries: for certain countries we obtain a pooling weight of unity, so that the SMP corresponds to the full posterior, while for other countries the pooling weight is close to zero, so that the SMP corresponds to the cut posterior. In view of Corollary \ref{corr:gross}, when the model is grossly misspecified the SMP produces a pooling weight that is asymptotically zero. Hence, the empirical results in Figure \ref{hpv-figure2} demonstrate that there are certain countries for which the Poisson model is adequate and others for which it is not. Critically, however, the SMP we have proposed does not require us to choose which posterior to use for which country \textit{a priori}, and instead the SMP lets the observed data guide our choice of which posterior to use according to which posterior is better supported under the observed data.\footnote{MCMC computations for the full posterior distribution were done using Stan \citep{carpenter_stan_2017}, whereas cut posterior samples for $\theta_1$ given $X$ are generated directly using conjugacy, followed by sampling importance resampling based on $1,000$ draws from an asymptotic normal approximation to the conditional posterior of $\theta_2$ given $\theta_1$, $Y$, to obtain each $\theta_2$ sample.   
	Kernel density estimates in the plots are based on $1,000$ posterior samples. }
}

\begin{figure}[htb]
\centerline{\includegraphics[width=120mm]{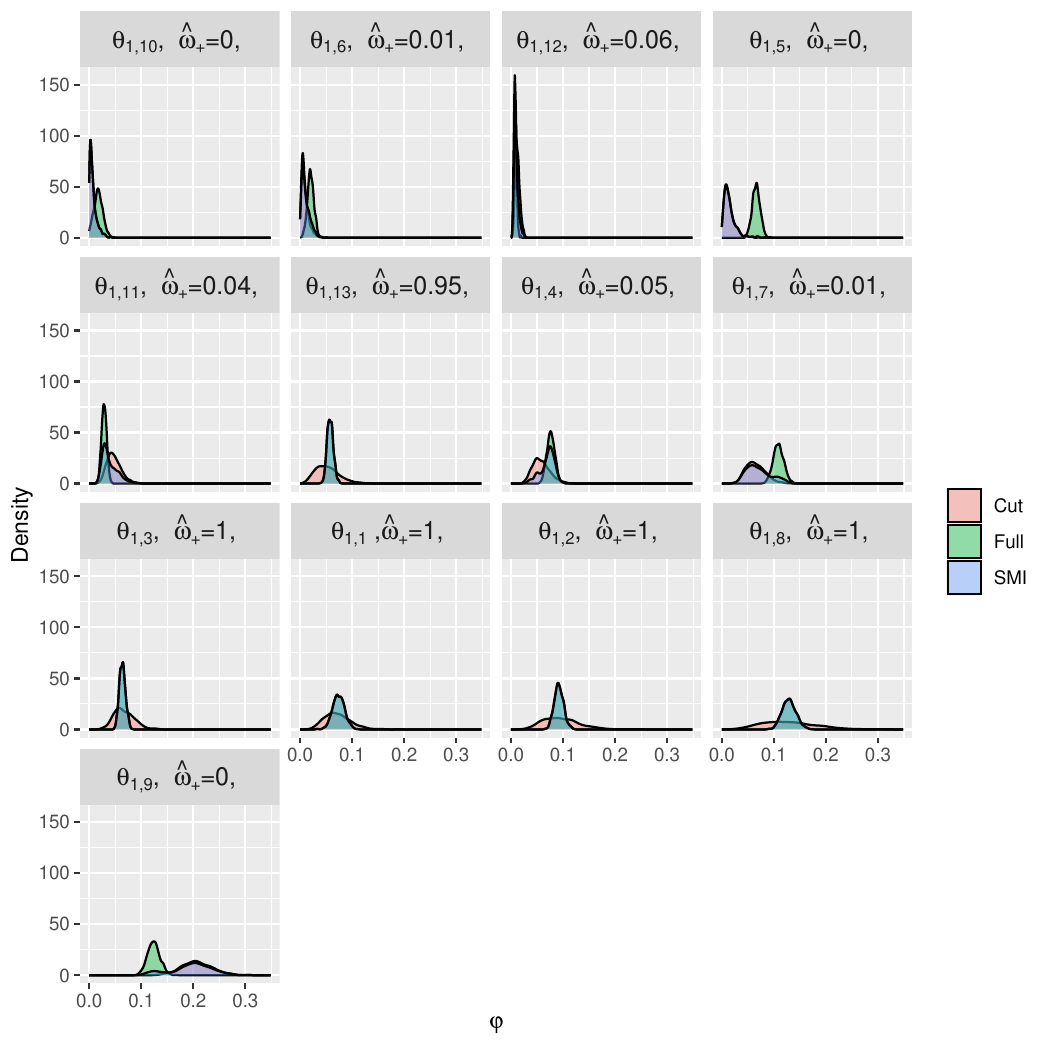}}
\caption{\label{hpv-figure2}Marginal cut, full and semi-modular posterior (SMP) densities for HPV prevalence in different countries.  The SMP obtained for country $j$ uses a loss function depending on $j$ as described in the text.  $\hat{\omega}_+$ is the estimated mixing weight in the SMP.  Countries are ordered (left to right, top to bottom) according to the cut posterior mean values.}
\end{figure}


\section{Additional Results for Biased Means Example}\label{app:biased_mean}
In this section, we present the repeated sampling behavior of the cut, full and S-SMP in the simulation design presented in Section \ref{sec:bias_mean}  of the main text and additional results on the behavior of the S-SMP for additional choices of $d$. Please refer to the main text for full details of the simulation design. 

Table \ref{tab:biased_rep_d1} compares the behavior of the posterior mean for the three methods across different values of $\delta$, and in the case where $d_1=1$. Similar results for the case of $d_1=5$ are reported in Table \ref{tab:biased_rep_d5}. In each table we report the Bias of the posterior mean (Bias), the average posterior variance across the replications (Var) and the Monte Carlo coverage of a 95\% posterior credible sets (Cov). 

Regarding the results for  $d_1=1$ and $d_1=5$, we see that in both cases the the S-SMP attempts to trade-off bias in the full posterior for a significant reduction in variability. For both cases, the average cut posterior variance is about 0.01 across all values of $\delta$, while the average S-SMP variance ranges from about $0.003$ to $0.005$, which is significantly smaller than that of the cut posterior. Further, the bias exhibited by the S-SMP is generally much smaller than that exhibited by the full posterior, but is, by construction, larger than that exhibited by the cut posterior. 

Recall that, as discussed in the remarks raised after the statement of Lemma \ref{lem:two} in the main text, under model misspecification only the cut posterior for $\theta_1$ can deliver credible sets that are also valid frequentist confidence sets. Analyzing the coverage of the posteriors across the two designs, we see that this is indeed the case, with only the cut posterior delivering accurate coverage. In comparison, in both cases, the full posterior has coverage that is zero or nearly zero once $\delta>0.10$. The S-SMP fairs better than the full posterior, but can deliver unreliable coverage when $d_1=1$. In contrast, as the dimension of $\theta_1$ increases to $d_1=5$, the coverage of the S-SMP  dramatically appreciates and in most cases is only slightly smaller than the nominal level for most values of $\delta$. While not reported for the sake of brevity, extending these experiments to larger values of $d_1$ show that the coverage of the S-SMP improves as $d_1$ increases, so that when $d_1=25$ the coverage for the S-SMP is close to the nominal level across all values of $\delta$ used in the experiment. 

Lastly, we demonstrate the impact on the accuracy of the S-SMP, as measured by expected squared error, as $d_1$ increases. To this end, we present the expected squared error of the cut, full and S-SMP for $d_1=1, 5,10,25$. We plot these results visually in Figure \ref{fig_risk2}. Analyzing these results, we see that as $d_1$ increases, and for $\delta$ small, the S-SMP resembles the full posterior, but as $\delta$ increases the S-SMP more closely resemble the cut posterior. This is a consequence of the behavior of the pooling weight used in the S-SMP, which has an easier time shifting between the cut and full posterior when $d>2$; we refer to Section \ref{sec:allrisk} in the main paper for a technical explanation of this phenomenon.

\begin{table}[!htp]\centering
\scriptsize
\begin{tabular}{lrrrrrrrrrr}\toprule
	& &\underline{Cut} & & &\underline{Full} & & &\underline{S-SMP} & \\
	$\delta$ &Bias &Var &Cov &Bias &Var &Cov &Bias &Var &Cov \\\midrule
	0.1000 &-0.0081 &0.0099 &0.9400 &-0.0131 &0.0001 &0.7540 &-0.0098 &0.0028 &0.8190 \\
	0.1500 &-0.0080 &0.0099 &0.9410 &-0.0194 &0.0001 &0.5430 &-0.0121 &0.0029 &0.7040 \\
	0.2000 &-0.0082 &0.0099 &0.9410 &-0.0257 &0.0001 &0.3100 &-0.0143 &0.0029 &0.5800 \\
	0.2500 &-0.0080 &0.0099 &0.9430 &-0.0318 &0.0001 &0.1410 &-0.0161 &0.0030 &0.5050 \\
	0.3000 &-0.0082 &0.0099 &0.9400 &-0.0380 &0.0001 &0.0360 &-0.0182 &0.0030 &0.4560 \\
	0.3500 &-0.0080 &0.0099 &0.9430 &-0.0441 &0.0001 &0.0060 &-0.0198 &0.0031 &0.4520 \\
	0.4000 &-0.0081 &0.0099 &0.9400 &-0.0503 &0.0001 &0.0010 &-0.0218 &0.0032 &0.4610 \\
	0.4500 &-0.0081 &0.0099 &0.9420 &-0.0566 &0.0001 &0.0000 &-0.0234 &0.0033 &0.4720 \\
	0.5000 &-0.0082 &0.0099 &0.9420 &-0.0628 &0.0001 &0.0000 &-0.0248 &0.0034 &0.4810 \\
	0.5500 &-0.0081 &0.0099 &0.9430 &-0.0690 &0.0001 &0.0000 &-0.0261 &0.0036 &0.4950 \\
	0.6000 &-0.0082 &0.0099 &0.9420 &-0.0752 &0.0001 &0.0000 &-0.0276 &0.0037 &0.5010 \\
	0.6500 &-0.0081 &0.0099 &0.9400 &-0.0814 &0.0001 &0.0000 &-0.0289 &0.0038 &0.4950 \\
	0.7000 &-0.0081 &0.0099 &0.9430 &-0.0877 &0.0001 &0.0000 &-0.0302 &0.0040 &0.5010 \\
	0.7500 &-0.0081 &0.0099 &0.9410 &-0.0939 &0.0001 &0.0000 &-0.0314 &0.0042 &0.5170 \\
	0.8000 &-0.0079 &0.0099 &0.9420 &-0.1001 &0.0001 &0.0000 &-0.0321 &0.0043 &0.5440 \\
	0.8500 &-0.0081 &0.0099 &0.9410 &-0.1065 &0.0001 &0.0000 &-0.0332 &0.0045 &0.5580 \\
	0.9000 &-0.0082 &0.0099 &0.9430 &-0.1126 &0.0001 &0.0000 &-0.0340 &0.0047 &0.5790 \\
	\bottomrule
\end{tabular}
\caption{Repeated sampling results for $\beta_1$ in the biased means example presented in Section \ref{sec:bias_mean} of the main text when $d_1=1$. Bias is the average bias of the posterior mean across the replications. Var is the average of the posterior variance, and Cov is the actual coverage of the 95\% posterior credible set.}
\label{tab:biased_rep_d1}
\end{table}

\begin{table}[!htp]\centering
\scriptsize
\begin{tabular}{lrrrrrrrrrr}\toprule
	& &\underline{Cut }& & &\underline{Full} & & &\underline{S-SMP} & \\
	Delta &Bias &Var &Cov &Bias &Var &Cov &Bias &Var &Cov \\\midrule
	0.1000 &-0.0074 &0.0099 &0.9550 &-0.0128 &0.0005 &0.9960 &-0.0107 &0.0039 &0.9800 \\
	0.1500 &-0.0075 &0.0099 &0.9530 &-0.0186 &0.0005 &0.9900 &-0.0135 &0.0039 &0.9790 \\
	0.2000 &-0.0075 &0.0099 &0.9540 &-0.0247 &0.0005 &0.9640 &-0.0162 &0.0040 &0.9770 \\
	0.2500 &-0.0073 &0.0099 &0.9540 &-0.0306 &0.0005 &0.8840 &-0.0185 &0.0041 &0.9560 \\
	0.3000 &-0.0075 &0.0099 &0.9530 &-0.0366 &0.0005 &0.7440 &-0.0213 &0.0042 &0.9220 \\
	0.3500 &-0.0075 &0.0099 &0.9570 &-0.0426 &0.0005 &0.5120 &-0.0235 &0.0043 &0.8750 \\
	0.4000 &-0.0074 &0.0099 &0.9540 &-0.0486 &0.0005 &0.2570 &-0.0256 &0.0045 &0.8360 \\
	0.4500 &-0.0073 &0.0099 &0.9550 &-0.0546 &0.0004 &0.1060 &-0.0275 &0.0047 &0.8130 \\
	0.5000 &-0.0074 &0.0099 &0.9550 &-0.0605 &0.0004 &0.0270 &-0.0293 &0.0049 &0.8150 \\
	0.5500 &-0.0074 &0.0099 &0.9550 &-0.0665 &0.0004 &0.0060 &-0.0306 &0.0051 &0.8290 \\
	0.6000 &-0.0074 &0.0099 &0.9550 &-0.0726 &0.0004 &0.0000 &-0.0318 &0.0053 &0.8400 \\
	0.6500 &-0.0073 &0.0099 &0.9560 &-0.0787 &0.0004 &0.0000 &-0.0326 &0.0055 &0.8570 \\
	0.7000 &-0.0074 &0.0099 &0.9560 &-0.0848 &0.0004 &0.0000 &-0.0335 &0.0058 &0.8640 \\
	0.7500 &-0.0073 &0.0099 &0.9530 &-0.0908 &0.0004 &0.0000 &-0.0338 &0.0060 &0.8790 \\
	0.8000 &-0.0075 &0.0099 &0.9550 &-0.0970 &0.0004 &0.0000 &-0.0342 &0.0062 &0.8830 \\
	0.8500 &-0.0074 &0.0099 &0.9560 &-0.1031 &0.0004 &0.0000 &-0.0343 &0.0065 &0.8980 \\
	0.9000 &-0.0074 &0.0099 &0.9550 &-0.1093 &0.0003 &0.0000 &-0.0344 &0.0067 &0.8990 \\
	\bottomrule
\end{tabular}
\caption{Repeated sampling results for $\beta_1$ in the biased means example presented in Section \ref{sec:bias_mean} of the main text when $d_1=5$. Bias is the average bias of the posterior mean across the replications. Var is the average of the posterior variance, and Cov is the actual coverage of the 95\% posterior credible set.}
\label{tab:biased_rep_d5}
\end{table}

\begin{figure}[!htb]
\centering
\begin{subfigure}[b]{0.49\textwidth}
	\centerline{\includegraphics[width=160mm,height=60mm]{risk_mean_figure.eps}}
	\captionsetup{width=.99\linewidth}
\end{subfigure}

\vspace{5pt}

\begin{subfigure}[b]{0.49\textwidth}
	\centerline{\includegraphics[width=160mm,height=70mm]{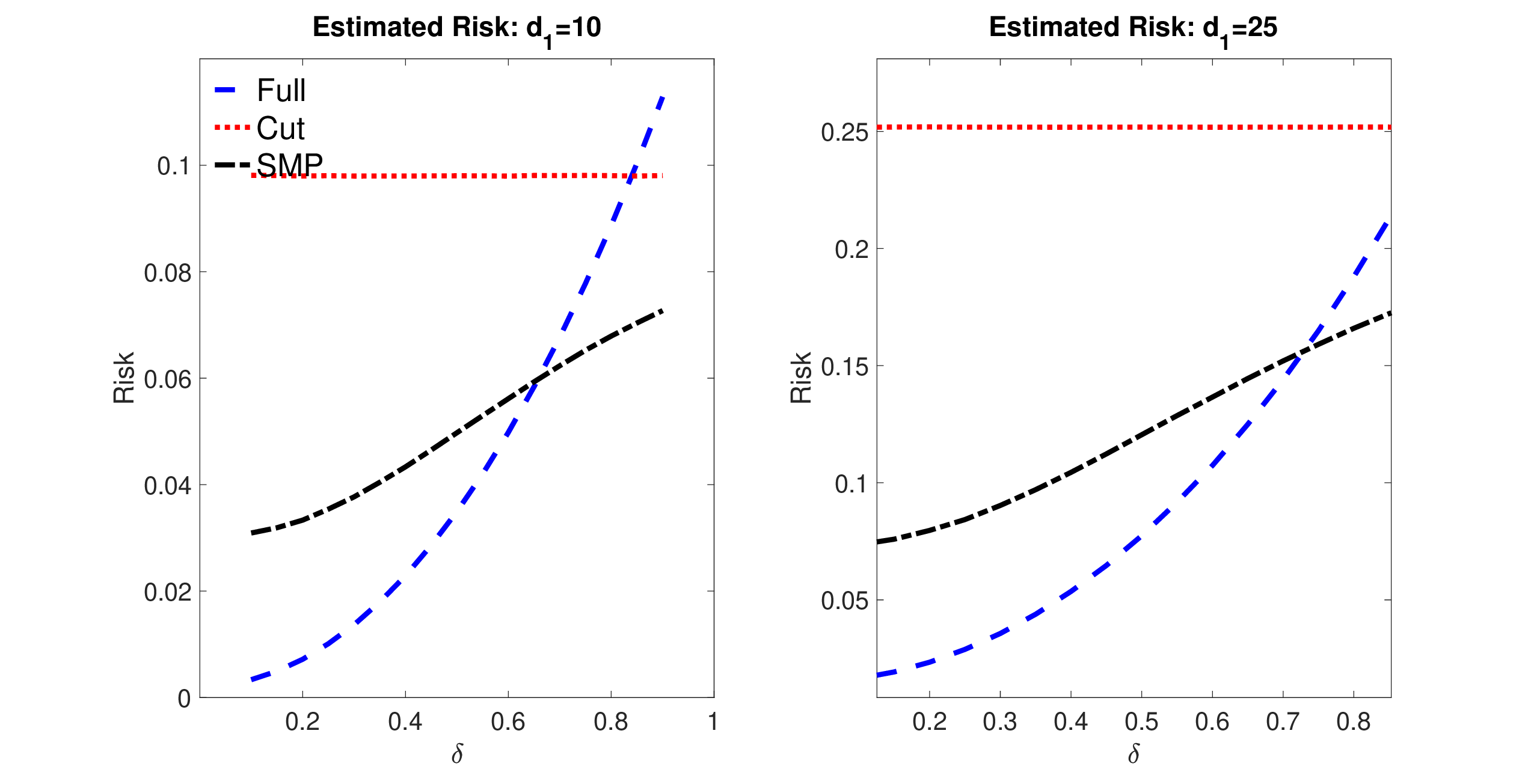}}
	\captionsetup{width=.99\linewidth}
\end{subfigure}
\caption{{Monte Carlo estimate of expected risk under squared error loss across different levels of contamination ($\delta$)  when $d_1\in\{1,5,10,25\}$. Full refers to the expected risk for $\theta_{1,0}$ associated with the full posterior based on both sets of data; Cut is the P-risk associated to the cut posterior; SMP is the P-risk for the proposed semi-modular posterior. }}
\label{fig_risk2}
\end{figure}

\newpage

{
\bibliographystyle{abbrvnat}
\bibliography{mod_bib_semi}
}

\end{document}